\newcommand{\eval}[2]{\llbracket #1 \rrbracket^{#2}}
\newcommand{\atleast}[3]{{\geq_{#1}\,}{#2}{.#3}}
\newcommand{\ivar}{\mathsf{vars}}
\newcommand{\idom}{\mathsf{dom}}
\newcommand{\rsem}{\ensuremath{\rrbracket}}
\newcommand{\lsem}{\ensuremath{\llbracket}}
\newcommand{\OPT}{\ensuremath{\mathbin{\text{OPT}}}}
\newcommand{\sem}[1]{\llbracket #1 \rrbracket}
\newcommand{\semg}[2]{\sem{#1}_{#2}}
\newcommand{\semG}[1]{\semg{#1}{G}}
\newcommand{\problemdef}[3]{
\begin{center}
\setlength{\tabcolsep}{2pt}
\begin{tabular}{r p{19em} }
Problem: & #1 \\ &\\[\dimexpr-\normalbaselineskip +5pt]
Input:     & #2.\\
Question:  & #3?\\
\end{tabular}
\end{center}
}
\newcommand{\problemtitle}[1]{\gdef\@problemtitle{#1}}
\newcommand{\probleminstance}[1]{\gdef\@probleminstance{#1}}
\newcommand{\problemquestion}[1]{\gdef\@problemquestion{#1}}
  \par\addvspace{.5\baselineskip}
  \par\addvspace{.5\baselineskip}
\newtheorem{theorem}{Theorem}
\newtheorem{lemma}{Lemma}
\newtheorem{theoremAppendix}{Theorem}
\newtheorem{definition}{Definition}
\newtheorem{example}{Example}
\renewcommand{\phi}{\varphi}
\newcommand{\texttti}[1]{\texttt{\textit{#1}}}
\newcommand{\ptime}{\ensuremath{\mathsf{P}}\xspace}
\newcommand{\np}{\ensuremath{\mathsf{NP}}\xspace}
\newcommand{\conp}{\ensuremath{\mathsf{coNP}}\xspace}
\newcommand{\DP}{\ensuremath{\mathsf{DP}}\xspace}
\newcommand{\DPtwo}{\ensuremath{\mathsf{DP}_2}\xspace}
\newcommand{\sigmatwo}{\ensuremath{\mathsf{\Sigma}_2\mathsf{P}}\xspace}
\newcommand{\pitwo}{\ensuremath{\mathsf{\Pi}_2\mathsf{P}}\xspace}
\newcommand{\thetatwo}{\ensuremath{\mathsf{\Theta}_2\mathsf{P}}\xspace}
\newcommand{\thetak}{\ensuremath{\mathsf{\Theta}_{k+1}\mathsf{P}}\xspace}
\newcommand{\thetathree}{\ensuremath{\mathsf{\Theta}_3\mathsf{P}}\xspace}
\newcommand{\sigmak}{\ensuremath{\mathsf{\Sigma}_k\mathsf{P}}\xspace}
\newcommand{\pithree}{\ensuremath{\mathsf{\Pi}_3\mathsf{P}}\xspace}
\newcommand{\bh}{\ensuremath{\mathsf{BH}}\xspace}
\newcommand{\bhone}{\ensuremath{\mathsf{BH}_1}\xspace}
\newcommand{\bhtwo}{\ensuremath{\mathsf{BH}_2}\xspace}
\newcommand{\bhk}{\ensuremath{\mathsf{BH}_k}\xspace}
\newcommand{\bhtwok}{\ensuremath{\mathsf{BH}_{2k}}\xspace}
\newcommand{\bhtwokplus}{\ensuremath{\mathsf{BH}_{2k+1}}\xspace}
\newcommand{\nop}[1]{}
\newcommand{\cqa}{\textsc{CQA}\xspace}
\newcommand{\bgpEqu}{\textsc{CQA(Bgp$,{=},\forall$)}\xspace}
\newcommand{\bgpSubs}{\textsc{CQA(Bgp$,\subseteq,\forall$)}\xspace}
\newcommand{\bgpCard}{\textsc{CQA(Bgp$,\leq,\forall$)}\xspace}
\newcommand{\bgpProjEqu}{\textsc{CQA($\pi$-Bgp$,{=},\forall$)}\xspace}
\newcommand{\bgpProjSubs}{\textsc{CQA($\pi$-Bgp$,\subseteq,\forall$)}\xspace}
\newcommand{\bgpProjCard}{\textsc{CQA($\pi$-Bgp$,\leq,\forall$)}\xspace}
\newcommand{\wdptEqu}{\textsc{CQA(wdQ$,=,\forall$)}\xspace}
\newcommand{\wdptSubs}{\textsc{CQA(wdQ$,\subseteq,\forall$)}\xspace}
\newcommand{\wdptCard}{\textsc{CQA(wdQ$,\leq,\forall$)}\xspace}
\newcommand{\wdptProjEqu}{\textsc{CQA($\pi$-wdQ,$=,\forall$)}\xspace}
\newcommand{\wdptProjSubs}{\textsc{CQA($\pi$-wdQ,$\subseteq,\forall$)}\xspace}
\newcommand{\wdptProjCard}{\textsc{CQA($\pi$-wdQ,$\leq,\forall$)}\xspace}
\newcommand{\bgpEquEx}{\textsc{CQA(Bgp$,=, \exists$)}\xspace}
\newcommand{\bgpSubsEx}{\textsc{CQA(Bgp$,\subseteq, \exists$)}\xspace}
\newcommand{\bgpCardEx}{\textsc{CQA(Bgp$,\leq, \exists$)}\xspace}
\newcommand{\bgpProjEquEx}{\textsc{CQA($\pi$-Bgp$,=, \exists$)}\xspace}
\newcommand{\bgpProjCardEx}{\textsc{CQA($\pi$-Bgp$,\leq, \exists$)}\xspace}
\newcommand{\wdptEquEx}{\textsc{CQA(wdQ$,=, \exists$)}\xspace}
\newcommand{\wdptCardEx}{\textsc{CQA(wdQ$,\leq, \exists$)}\xspace}
\newcommand{\wdptProjEquEx}{\textsc{CQA($\pi$-wdQ$,=, \exists$)}\xspace}
\newcommand{\wdptProjSubsEx}{\textsc{CQA($\pi$-wdQ$,\subseteq, \exists$)}\xspace}
\newcommand{\wdptProjCardEx}{\textsc{CQA($\pi$-wdQ$,\leq, \exists$)}\xspace}
\newcommand{\bgpEquIn}{\textsc{CQA(Bgp$,=, \cap$)}\xspace}
\newcommand{\bgpProjEquIn}{\textsc{CQA($\pi$-Bgp$,=, \cap$)}\xspace}
\newcommand{\bgpProjSubsIn}{\textsc{CQA($\pi$-Bgp$,\subseteq, \cap$)}\xspace}
\newcommand{\bgpProjCardIn}{\textsc{CQA($\pi$-Bgp$,\leq, \cap$)}\xspace}
\newcommand{\wdptEquIn}{\textsc{CQA(wdQ$,=, \cap$)}\xspace}
\newcommand{\wdptSubsIn}{\textsc{CQA(wdQ$,\subseteq, \cap$)}\xspace}
\newcommand{\wdptCardIn}{\textsc{CQA(wdQ$,\leq, \cap$)}\xspace}
\newcommand{\wdptProjEquIn}{\textsc{CQA($\pi$-wdQ$,=, \cap$)}\xspace}
\newcommand{\wdptProjSubsIn}{\textsc{CQA($\pi$-wdQ$,\subseteq, \cap$)}\xspace}
\newcommand{\wdptProjCardIn}{\textsc{CQA($\pi$-wdQ$,\leq, \cap$)}\xspace}
\newcommand{\bgpPrecIn}{\textsc{CQA(Bgp$,\preceq, \cap$)}\xspace}
\newcommand{\bgpProjPrecIn}{\textsc{CQA($\pi$-Bgp$,\preceq, \cap$)}\xspace}
\newcommand{\wdptPrecIn}{\textsc{CQA(wdQ$,\preceq, \cap$)}\xspace}
\newcommand{\wdptProjPrec}{\textsc{CQA($\pi$-wdQ$,\preceq, \forall$)}\xspace}
\newcommand{\wdptProjPrecEx}{\textsc{CQA(\wdptProj$,\preceq, \exists$)}\xspace}
\newcommand{\LEqu}{\textsc{CQA($\mathcal{L}, =,\forall$)}\xspace}
\newcommand{\LSubs}{\textsc{CQA($\mathcal{L}, \subseteq,\forall$)}\xspace}
\newcommand{\LCard}{\textsc{CQA($\mathcal{L}, \leq,\forall$)}\xspace}
\newcommand{\LEquEx}{\textsc{CQA($\mathcal{L}, =, \exists $)}\xspace}
\newcommand{\LSubsEx}{\textsc{CQA($\mathcal{L}, \subseteq,\exists$)}\xspace}
\newcommand{\LCardEx}{\textsc{CQA($\mathcal{L}, \leq,\exists$)}\xspace}
\newcommand{\LEquIn}{\textsc{CQA($\mathcal{L}, =,\cap$)}\xspace}
\newcommand{\LSubsIn}{\textsc{CQA($\mathcal{L}, \subseteq,\cap$)}\xspace}
\newcommand{\LCardIn}{\textsc{CQA($\mathcal{L}, \leq,\cap$)}\xspace}
\newcommand{\BEqu}{\textsc{CQA($\bgp, =,\forall$)}\xspace}
\newcommand{\BEquEx}{\textsc{CQA($\bgp, =, \exists $)}\xspace}
\newcommand{\LPrecEx}{\textsc{CQA($\mathcal{L}, \preceq , \exists $)}\xspace}
\newcommand{\BPrec}{\textsc{CQA($\bgp ,\preceq, \forall$)}\xspace}
\newcommand{\PBPrec}{\textsc{CQA($\bgpProj ,\preceq, \forall$)}\xspace}
\newcommand{\PWPrec}{\textsc{CQA($\wdptProj ,\preceq, \forall$)}\xspace}
\newcommand{\PBEquIn}{\textsc{CQA($\bgpProj, =,\cap$)}\xspace}
\newcommand{\PBCardIn}{\textsc{CQA($\bgpProj, \leq,\cap$)}\xspace}
\newcommand{\WPrec}{\textsc{CQA($\wdpt ,\preceq, \forall$)}\xspace}
\newcommand{\PWEqu}{\textsc{CQA($\wdptProj, =, \forall$)}\xspace}
\newcommand{\PBSubset}{\textsc{CQA($\bgpProj ,\subseteq, \forall$)}\xspace}
\newcommand{\mcqa}{\textsc{mCQA}\xspace}
\newcommand{\mwdptEqu}{\textsc{mCQA(wdQ$,=,\forall$)}\xspace}
\newcommand{\mwdptProjEqu}{\textsc{mCQA($\pi$-wdQ$,=,\forall$)}\xspace}
\newcommand{\mbgpProjEquEx}{\textsc{mCQA($\pi$-$\textsc{Bgp}, =,\exists$)}\xspace}
\newcommand{\mwdptProjEquEx}{\textsc{mCQA($\pi$-wdQ$,=, \exists$)}\xspace}
\newcommand{\mbgpEquIn}{\textsc{mCQA(Bgp$,=, \cap$)}\xspace}
\newcommand{\mbgpProjEquIn}{\textsc{mCQA($\pi$-Bgp$,=, \cap$)}\xspace}
\newcommand{\mwdptEquIn}{\textsc{mCQA(wdQ$,=, \cap$)}\xspace}
\newcommand{\mbgpEquQ}{\textsc{mCQA($\bgp,=, \mathcal{S}$)}\xspace}
\newcommand{\mLPrecQ}{\textsc{mCQA($\mathcal{L}, {\preceq},\mathcal{S}$)}\xspace}
\newcommand{\LPrecQ}{\textsc{CQA($\mathcal{L}, \preceq, \mathcal{S}$)}\xspace}
\newcommand{\bgp}{\textsc{Bgp}\xspace}
\newcommand{\bgps}{\textsc{Bgp}s\xspace}
\newcommand{\bgpProj}{\textsc{$\pi$-Bgp}\xspace}
\newcommand{\bgpProjs}{\textsc{$\pi$-Bgp}s\xspace}
\newcommand{\wdpt}{\textsc{wdQ}\xspace}
\newcommand{\wdpts}{\textsc{wdQ}s\xspace}
\newcommand{\wdptProj}{\textsc{$\pi$-wdQ}\xspace}
\newcommand{\wdptProjs}{\textsc{$\pi$-wdQ}s\xspace}
\newcommand{\lowerupperb}[1]{{#1} & $\textcolor{purple}{\blacktriangle\blacktriangledown}$}
\newcommand{\lowerupperg}[1]{{#1} & $\textcolor{olive}{\blacktriangle\blacktriangledown}$}
\newcommand{\lowerupperbl}[1]{{#1} & $\textcolor{black}{\blacktriangle\blacktriangledown}$}
\newcommand{\loweruppert}[1]{{#1} & $\textcolor{teal}{\blacktriangle\blacktriangledown}$}
\newcommand{\lowerr}[1]{{#1} & $\textcolor{violet}{\blacktriangle\phantom{\blacktriangledown}}$}
\newcommand{\lowerb}[1]{{#1} & $\textcolor{purple}{\blacktriangle\phantom{\blacktriangledown}}$}
\newcommand{\lowerg}[1]{{#1} & $\textcolor{olive}{\blacktriangle\phantom{\blacktriangledown}}$}
\newcommand{\lowerbl}[1]{{#1} & $\textcolor{black}{\blacktriangle\phantom{\blacktriangledown}}$}
\newcommand{\lowert}[1]{{#1} & $\textcolor{teal}{\blacktriangle\phantom{\blacktriangledown}}$}
\newcommand{\upperr}[1]{{#1} & $\phantom{\blacktriangle}\textcolor{violet}{\blacktriangledown}$}
\newcommand{\upperb}[1]{{#1} & $\phantom{\blacktriangle}\textcolor{purple}{\blacktriangledown}$}
\newcommand{\upperg}[1]{{#1} & $\phantom{\blacktriangle}\textcolor{olive}{\blacktriangledown}$}
\newcommand{\upperbl}[1]{{#1} & $\phantom{\blacktriangle}\textcolor{black}{\blacktriangledown}$}
\newcommand{\uppert}[1]{{#1} & $\phantom{\blacktriangle}\textcolor{teal}{\blacktriangledown}$}
\newcommand{\tableOfResults}{
\begin{table*}
\centering
\setlength{\tabcolsep}{3.5pt}
\begin{tabular}{r|rl|rl|rl|rl|rl|rl|rl|rl|rl}
$\mathcal{L}\,$ \textbackslash $\, \preceq,\mathcal{S}$ & \multicolumn{2}{c|}{$=,\exists$} & \multicolumn{2}{c|}{$\leq,\exists$} & \multicolumn{2}{c|}{$\subseteq,\exists$} & \multicolumn{2}{c|}{$=,\forall$} & \multicolumn{2}{c|}{$\leq,\forall$} & \multicolumn{2}{c|}{$\subseteq,\forall$} & \multicolumn{2}{c|}{$=,\cap$} & \multicolumn{2}{c|}{$\leq,\cap$} & \multicolumn{2}{c}{$\subseteq,\cap$} \\
\hline &&&&&&&&&&&&&&&&&& \\[\dimexpr-\normalbaselineskip+2pt]
\textsc{Bgp} (DC)        & \lowerr{\np}      & \lowerr{\thetatwo}     & \lowerr{\sigmatwo}     & \lowerr{\conp}       & \lowerr{\thetatwo}      & \lowerr{\pitwo}       & \lowerg{\conp}            & \lowerg{\thetatwo}            & \lowerg{\pitwo}  \\
$\pi$-\textsc{Bgp} (DC)  & \np &              & \thetatwo &            & \sigmatwo &            & \conp &               & \thetatwo &            & \pitwo &              & \upperg{\conp}               & \thetatwo &                 & \pitwo &             \\
\textsc{wdQ} (DC)       & \np &              & \thetatwo &            & \sigmatwo &            & \conp &               & \thetatwo &            & \pitwo &              & \lowerupperbl{\DP}            & \thetatwo &                 & \lowerbl{\DPtwo}      \\
$\pi$-\textsc{wdQ} (DC) & \upperr{\np}        & \upperr{\thetatwo}      & \sigmatwo &            & \upperr{\conp}         & \upperr{\thetatwo}      & \upperr{\pitwo}        & \loweruppert{\thetatwo}      & \upperg{\thetatwo}           & \lowert{\thetathree} \\
\hline &&&&&&&&&&&&&&&&&& \\[\dimexpr-\normalbaselineskip+2pt]
\textsc{Bgp} (CC)        & \np &              & \thetatwo &            & \sigmatwo &            & \conp &               & \thetatwo &            & \pitwo &              & \upperg{\conp}               & \thetatwo &                 & \pitwo &             \\
$\pi$-\textsc{Bgp} (CC)  & \upperr{\np}        & \upperr{\thetatwo}      & \sigmatwo &            & \lowerupperb{\pitwo}   & \lowerupperb{\pitwo}    & \upperr{\pitwo}        & \lowerupperg{\thetatwo}      & \upperg{\thetatwo}           & \upperg{\pitwo}       \\
\textsc{wdQ} (CC)       & \lowerb{\sigmatwo} & \lowerb{\sigmatwo}     & \sigmatwo &            & \upperr{\conp}         & \upperr{\thetatwo}      & \upperr{\pitwo}        & \lowerupperbl{\thetatwo}      & \upperg{\thetatwo}           & \upperbl{\DPtwo}       \\
$\pi$-\textsc{wdQ} (CC) & \upperb{\sigmatwo}  & \upperb{\sigmatwo}      & \upperr{\sigmatwo}      & \lowerupperb{\pithree} & \lowerupperb{\pithree}  & \lowerupperb{\pithree} & \loweruppert{\sigmatwo} & \loweruppert{\sigmatwo}     & \uppert{\thetathree}  \\
\end{tabular}
\caption{Complexity Results of the \LPrecQ problem for 
query language $\mathcal{L} \in \{\bgp, \bgpProj, \wdpt, \wdptProj\}$, 
preference order ${\preceq} \in \{$=,\,$\leq$,\,$\subseteq$$\}$, and 
semantics $\mathcal{S}\in \{\exists, \forall, \cap\}$
as abbreviations for brave, AR and IAR semantics, respectively.
Moreover \emph{DC} and \emph{CC} stand for data and combined complexity.
The up- and down-triangles ($\blacktriangle \blacktriangledown$) indicate
hardness and membership proofs, respectively, in this work. 
All complexity classifications are completeness results but not all have to 
be proved separately, since hardness results carry over from more special to more general cases and 
membership results carry over in the opposite direction.   
Colors indicate the theorem where a (hardness or membership) result is proved, 
namely Theorem \ref{thm:ARbgp} 
($\textcolor{violet}{\blacktriangle\blacktriangledown}$),
Theorem \ref{thm:ARrest} 
($\textcolor{purple}{\blacktriangle\blacktriangledown}$),
Theorem \ref{thm:easyIAR}
($\textcolor{olive}{\blacktriangle\blacktriangledown}$),
Theorem \ref{thm:wdptIAR}
($\textcolor{black}{\blacktriangle\blacktriangledown}$),
and
Theorem \ref{prop:wdptProj} 
($\textcolor{teal}{\blacktriangle\blacktriangledown}$).
}
\label{tab:results}
\end{table*}
}
\newcommand{\cremove}[1]{
}
\newcommand{\myparagraph}[1]{\medskip\noindent\textit{#1}~}
\title{Consistent Query Answering over SHACL Constraints}
\author{Shqiponja \& Camillo \&  Reinhard}
\date{\today}
\author{%
Shqiponja Ahmetaj$^1$\and
Timo Camillo Merkl$^1$\and
Reinhard Pichler$^{1}$\\
\affiliations
$^1$TU Wien\\
\emails
\{shqiponja.ahmetaj, timo.merkl, reinhard.pichler\}@tuwien.ac.at
}
\begin{document}

\maketitle

\begin{abstract}
The Shapes Constraint Language (SHACL) was standardized by the World Wide Web as a constraint language to describe and validate RDF data graphs. SHACL uses the notion of shapes graph to describe a set of \emph{shape} constraints paired with \emph{targets}, that specify which nodes of the RDF graph should satisfy which shapes. 
An important question in practice is how to handle data graphs that do not validate the shapes graph. 
A solution is to \emph{tolerate} the non-validation and find ways to obtain meaningful and correct answers to queries despite the non-validation. 
This is known as \emph{consistent query answering} (CQA) and there is extensive literature on CQA in both the database and the KR setting. 
We study CQA in the context of SHACL for a fundamental fragment of the Semantic Web query language SPARQL. 
The goal of our work is a detailed complexity analysis of CQA for various semantics and possible restrictions on the acceptable repairs. 
It turns out that all considered variants of the problem are intractable, with complexities ranging between the first and third level of the polynomial hierarchy.
\end{abstract}

\section{Introduction}
\label{sect:introduction}
The SHACL Shapes Constraint Language is the World Wide Web recommendation language for expressing and validating constraints on RDF data graphs~\cite{shacl}. The normative standard emerged as a need for a \emph{prescriptive} language to provide guarantees on the structure and content of RDF data
\cite{W3CValidationWorkshopReport}. SHACL uses the notion of \emph{shapes graph} to describe a set of \emph{shape} constraints paired with \emph{targets}, that specify which nodes of the RDF graph should satisfy which shapes. The main computational problem in SHACL is to check whether an RDF graph \emph{validates} a shapes graph. However, the W3C specification does not clearly define, and sometimes leaves undefined, certain aspects of validation, such as the semantics of validation for \emph{recursive} constraints, which involve cyclic dependencies. A formal logic-based syntax and semantics for (recursive) SHACL 
was 
proposed in~\cite{DBLP:conf/semweb/CormanRS18}, and 
has served as bases for most subsequent works
~\cite{DBLP:conf/www/AndreselCORSS20,DBLP:conf/ecai/Ahmetaj0S23,DBLP:conf/lpnmr/BogaertsJB22}. 

SHACL is particularly close to expressive Description Logics (DLs), the logics underlying OWL, as already observed in several 
works~\cite{DBLP:conf/lpnmr/BogaertsJB22,%
DBLP:conf/semweb/LeinbergerSRLS20,%
DBLP:conf/wollic/Ortiz23,shqi-etal-2021kr}. 
Specifically, since SHACL adopts the closed world assumption, it is closely related to an extension of the DL $\mathcal{ALCOIQ}$ with regular role expressions and equalities, where all roles and some concept names are viewed as \emph{closed predicates} (see e.g.~\cite{DBLP:conf/ijcai/LutzSW13}).

It is commonly recognized that real-world data, and in particular, graph-structured data and large RDF triple stores, which are subject to frequent change, may be incomplete or contain faulty facts. It thus seems inevitable to expect a data graph to not validate a SHACL shapes graph, e.g., because it is missing some facts to validate a target or it has conflicting or contradictory facts. The question of how to handle such data graphs is very relevant in practice. A possible solution is to fix the data graph before reasoning. In the style of database \emph{repairs}, \cite{shqi-etal-2021kr,DBLP:conf/semweb/AhmetajDPS22} propose to fix the data graph through (minimal) additions or removals of facts such that the resulting data graph validates the shapes graph. 
However, this may not always be desirable in practice as there may be a large number of possible repairs and selecting one may keep wrong facts, or remove true facts.

Another alternative is to \emph{tolerate} the non-validation and find ways to leverage the consistent part of the data and obtain meaningful and correct answers to queries despite the non-validation. This view is known as \emph{consistent query answering} (CQA),
and it has gained a lot of attention since the seminal paper \cite{DBLP:conf/pods/ArenasBC99}.
The idea is to accept as answers to a query those that are true over all (minimal) repairs of the input database. This is also known as the \emph{AR} semantics~\cite{DBLP:conf/pods/ArenasBC99,DBLP:series/synthesis/2011Bertossi,DBLP:conf/rr/LemboLRRS10,DBLP:conf/icdt/ArmingPS16}. 
Several other inconsistency-tolerant semantics have also been studied such as \emph{brave}~\cite{DBLP:conf/ijcai/BienvenuR13} and \emph{IAR} \cite{DBLP:conf/rr/LemboLRRS10} semantics. The former accepts answers that are true in \emph{some} repair, 
and 
the latter accepts the most reliable answers, that is those that are true in the \emph{intersection of all} repairs. CQA has been extensively studied 
in various database and knowledge representation settings;
we refer to \cite{DBLP:series/synthesis/2011Bertossi,DBLP:journals/sigmod/Wijsen19,DBLP:conf/rweb/BienvenuB16} for nice surveys.

In this work, we focus on CQA in the presence of (recursive) SHACL shapes, which to our knowledge, has not yet been explored.
As a query language we consider a fundamental fragment of SPARQL, which is the standardized language to query RDF data. Specifically, we  focus on \emph{basic graph patterns} (BGPs), which are essentially conjunctive queries (CQs),
and the well-behaved extension with the OPTIONAL operator -- the so-called \emph{well-designed} fragment of SPARQL)~\cite{DBLP:journals/tods/PerezAG09},
referred to as {\em well-designed queries}
(\wdpts, for short) in this paper.

Our main goal is a detailed complexity analysis of the CQA problem.
We thus build on~\cite{shqi-etal-2021kr}, which analyzes the complexity of the main reasoning problems for repairs w.r.t.\  SHACL constraints, 
such as checking the existence of a repair and deciding 
if a particular fact is added or deleted in at least one or in all repairs. These problems 
were further refined by restricting to repairs that are minimal 
w.r.t.\ cardinality or set inclusion.  Of course, these restrictions are
also highly relevant for the CQA problem. In total, we will thus study 
numerous variants of the CQA problem by considering 4 query languages
(\bgps and \wdpts, with or without projection),  
under 3 semantics (brave, AR, IAR), with or without (cardinality or subset inclusion) minimality-restrictions. Moreover, we distinguish data complexity (where the SHACL constraints and the query are considered as fixed and only 
the data is allowed to vary) and combined complexity.
We refer to Table~\ref{tab:results} for an overview of our main results. 
Formal definitions of all terms appearing in this table are given 
in Sections~\ref{sect:preliminaries}
and
\ref{sect:queryingNonValidData}.

We note that the settings considered for CQA in the literature crucially differ from ours in several respects. 
The typical constraint languages studied for databases are (fragments of) 
tuple generating dependencies (tgds, i.e., rules with conjunctive queries (CQs) in both the body and in the head) 
and equality generating dependencies (egds, i.e., rules with a CQ in the body and an equality in the head), 
see e.g., 
\cite{DBLP:conf/icdt/AfratiK09,DBLP:conf/icdt/ArmingPS16,DBLP:journals/mst/CateFK15}. 
SHACL also has implications as the crucial building blocks. However, in contrast with these dependencies, SHACL allows (among other features not present in tgds and egds) {\em (unrestricted) negation} in the rule body, which has significant semantical implications.  There is also a large body of works on CQA in the context of Description Logic knowledge bases, see e.g.,~\cite{DBLP:conf/aaai/Bienvenu12,DBLP:journals/jair/BienvenuBG19,DBLP:journals/ws/LemboLRRS15,DBLP:journals/kais/DuQS13}. However, to our knowledge, there are no works studying CQA for (expressive) DLs with closed predicates. Finally, the query languages considered in the literature focus on CQs with a few works considering extensions such as tree/path queries~\cite{DBLP:conf/pods/KoutrisOW21,DBLP:journals/pacmmod/KoutrisOW24}, datalog \cite{DBLP:journals/mst/KoutrisW21}, and counting \cite{DBLP:conf/icdt/KhalfiouiW23}. 
Apart from CQs (i.e., BGPs), we also study {\em non-monotonic queries} in the form of wdQs. To the best of our knowledge, 
wdQs have not been considered in the context of CQA. 

In this paper, we proceed as follows: 
We initially investigate the above mentioned problems, considering the scenario that a repair always exists. Indeed, it seems plausible to assume that a SHACL shapes graph is carefully designed so that no conflicting constraints are introduced. 
If this is not guaranteed, the existence of a repair 
can be tested with \np-power as shown in \cite{shqi-etal-2021kr}.
In case of a negative outcome of this test, it may still be possible to provide a repair that validates a subset of the targets. To address this, in the spirit of inconsistency 
tolerance,~\cite{DBLP:conf/semweb/AhmetajDPS22} proposes a relaxed notion of repairs, which 
aims at validating a maximal subset of the targets. We also study the complexity of CQA over {\em maximal repairs}.

Our main contributions are summarized as follows:

\begin{itemize}
    \item We first study the complexity of CQA in settings where we can validate all targets.
    It turns out that brave and AR semantics behave very similarly in terms of algorithms (to establish membership results) and in terms of methods for proving hardness results. We therefore study these cases simultaneously in 
    Section~\ref{sect:brave}. 
    \item In Section~\ref{sect:iar}, we study the complexity for IAR semantics. 
    It turns out that the influence of choosing different query languages and/or minimality conditions on the repairs gives a yet more colorful picture than with brave and AR semantics.
    \item Finally, in Section \ref{sect:max}, we extend our complexity analysis to the settings where the data cannot be fully repaired.
    Thus, we resort back to validating as many targets as possible. 
    It turns out that this increases the complexity beyond the first level of the polynomial hierarchy. 
    However, in all cases in Table~\ref{tab:results} with complexity of \thetatwo or above, the complexity classification remains the same.
\end{itemize}
In all cases, we provide a complete complexity classification in the form of matching upper and lower bounds. 
We note that all results hold for both non-recursive and recursive constraints.
Due to lack of space, proof details have to be omitted.  
Full proofs of all results presented here are given 
in the appendix.

\section{Preliminaries}
\label{sect:preliminaries}
In this section, we introduce RDF graphs, SHACL, 
\emph{validation} against RDF graphs, and (well-designed) SPARQL queries.  We follow the abstract syntax and semantics for the fragment of SHACL core studied in~\cite{shqi-etal-2021kr}; for more details on the W3C specification of SHACL core we refer to~\cite{shacl}. 

\myparagraph{RDF Graphs.}
We let 
$N_N$, $N_C$, $N_P$ denote countably infinite, mutually disjoint sets of \emph{nodes} (constants), \emph{class names}, and \emph{property names}, respectively. 
An RDF (data) graph $G$ is a finite set of (ground) \emph{atoms} of the form $B(\texttti{c})$ and $p(\texttti{c},\texttti{d})$, where  $B \in N_C$, $p \in N_P$, and
$\texttti{c},\texttti{d} \in N_N$. The set of nodes appearing in $G$ is denoted with
$V(G)$. 

 \myparagraph{SHACL Validation.} We assume a countably infinite set $N_S$
of \emph{shape names}, disjoint from $N_N\cup N_C \cup N_P$. A \emph{shape atom} is an expression of the form $\mathsf{s}(\texttti{a})$, where $\mathsf{s}\in N_S$ and $\texttti{a} \in N_N$. A \emph{path expression} $E$ is a
regular expression built using the usual operators $*$, $\cdot$,
$\cup$, property names $p \in N_P$ and \emph{inverse properties} $p^-$, where $p \in N_P$. A
\emph{(complex) shape} is an expression $\phi$ obeying~the~syntax:
\begin{align*}
 \phi,\phi'::= \top \mid \mathsf{s}\mid B \mid \texttti{c} \mid \phi\land \phi'  \mid \neg \phi \mid  \geq_n E.\phi \mid E = E',
\end{align*}
where $\mathsf{s} \in N_S$, $p \in N_P$, $B \in N_C$, $\texttti{c} \in N_N$, $n$ is a positive integer, and $E$, $E'$ are path expressions. 
In what follows, we write
$\phi\lor \phi'$ instead of $\neg (\neg \phi\land \neg \phi')$;
$\geq_n E$ instead of $\geq_n E.\top$;
$\exists E.\phi$ instead of $\geq_1 E.\phi $; 
$\forall E.\phi$ instead of $\neg \exists E.\neg
\phi$;  ${=_n}E.\phi$ instead of $\leq_n E.\phi \land \geq_n E.\phi$.  

A \emph{(shape) constraint} is an expression
$\mathsf{s} \leftrightarrow \phi$ where $\mathsf{s} \in N_S$ and $\phi$ is a complex
shape. 
W.l.o.g., we view \emph{targets} as shape atoms of the form $\mathsf{s}(\texttti{a})$, where $\mathsf{s} \in N_S$ and $\texttti{a} \in N_N$, which asks to check whether the shape name~$\mathsf{s}$ is validated at node~$\texttti{a}$ of the input data graph. The SHACL specification allows for a richer specification of targets 
but these do not affect the results in this paper. 
 A \emph{shapes graph} is a pair $( \mathcal{C,T})$, where $\mathcal{C}$ is a set of constraints and $\mathcal{T}$ is a set of targets. We assume that each shape name appearing in $\mathcal{C}$ occurs exactly once on the left-hand side of a constraint. A set of constraints $\mathcal{C}$ is \emph{recursive}, if there is a shape name in $\mathcal{C}$ that directly or indirectly refers to itself. 

The evaluation of shape expressions is given by assigning nodes of the
data graph to (possibly multiple) shape names. More formally, a
\emph{(shape) assignment} for a data graph $G$ is a set $I = G \cup L$, where $L$ is a set of shape atoms such that $\texttti{a} \in V(G)$ for each $\mathsf{s}(\texttti{a}) \in L$. The evaluation of a complex shape w.r.t.\,an assignment $I$ is given in terms of a function $\llbracket \cdot \rrbracket^I$ that maps a shape expression $\phi$ to a set of nodes, and a
path expression $E$ to a set of pairs of nodes~(see Table~\ref{tab:evaluation}). 

\begin{table}
  \renewcommand{\arraystretch}{1}
  \begin{tabular}{l}
    $\eval{\top}{I}=  V(I)$ \hspace{.3cm} $\eval{\texttti{c}}{I} =  \{\texttti{c}\}$  \hspace{.3cm} $\eval{B}{I} =  \{\texttti{c} \mid B(\texttti{c}) \in I\}$ \\
      $\eval{ \mathsf{s} }{I} =  \{\texttti{c}\mid \mathsf{s}(\texttti{c})\in I\}$ \quad  $\eval{ p  }{I}=   \{ (\texttti{a},\texttti{b})\mid p(\texttti{a},\texttti{b})\in I \}$ \\
     $\eval{ p^-  }{I}=  \{ (\texttti{a},\texttti{b})\mid p(\texttti{b},\texttti{a})\in I \}$   \\
    $\eval{ E\cup E'  }{I}=   \eval{ E  }{I} \cup  \eval{  E'  }{I}$ \quad
    $\eval{ E\cdot E'  }{I}=   \eval{ E  }{I} \circ  \eval{  E'  }{I}$ \\
    $\eval{E^{*}  }{I}=  \{(\texttti{a},\texttti{a}) \mid \texttti{a} \in V(I)\} \cup \eval{ E}{I} \cup \eval{E\cdot E}{I} \cup \cdots$ \\
    $ \eval{ \neg \phi}{I}=  V(I)\setminus \eval{  \phi}{I}$   \quad \quad  $\eval{ \phi_1\land \phi_2 }{I} =  \eval{ \phi_1 }{I}\cap \eval{ \phi_2 }{I}$ \\
    $\eval{ \atleast{n}{E}{\phi} }{I}=  \{ \texttti{c} \mid |\{ (\texttti{c}, \texttti{d})  \in \eval{E}{I} \text{ and } \texttti{d} \in \eval{ \phi }{I} \}| \geq n \}$  \\
    $\eval{ E = E' }{I}=  \{ \texttti{c} \mid \forall \texttti{d}:(\texttti{c},\texttti{d}) \in \eval{E}{I} \mbox{ iff } (\texttti{c}, \texttti{d}) \in \eval{E'}{I} \}$   
  \end{tabular}
  \caption{Evaluation of complex shapes}
  \label{tab:evaluation}
\end{table}

There are several validation semantics for SHACL with recursion
\cite{DBLP:conf/semweb/CormanRS18,DBLP:conf/www/AndreselCORSS20,DBLP:conf/datalog/ChmurovicS22}, which coincide on non-recursive SHACL. Here, we follow~\cite{shqi-etal-2021kr} and consider the supported model semantics from~\cite{DBLP:conf/semweb/CormanRS18}. Assume a SHACL shapes graph $( \mathcal{C,T} )$ and a data graph $G$ such that each node that appears in $\mathcal{C}$ or $\mathcal{T}$ also appears in $G$.  Then, an assignment $I$ for $G$ is a \emph{(supported) model}
of $\mathcal{C}$ if $\llbracket \phi \rrbracket^I = \mathsf{s}^ I$ for all
$\mathsf{s} \leftrightarrow \phi \in \mathcal{C}$. The data graph $G$ \emph{validates}
$( \mathcal{C,T} )$ if there exists an assignment $I = G \cup L$
for $G$ such that (i) $I$ is a model of $\mathcal{C}$, and (ii) $\mathcal{T} \subseteq L$.

\begin{example}\label{ex:exp1} 
Consider  $G$ and the  shapes graph $(\mathcal{C},\mathcal{T})$:
{\small
\begin{align*}
  G = & \{ \mathit{Prof}(\texttti{Ann}), \mathit{worksWith}(\texttti{Lea}, \texttti{Ann}),\mathit{Student}(\texttti{Ben}),  \\
   & \text{      } 
    \mathit{id}(\texttti{Ben}, \texttti{ID1}), \mathit{id}(\texttti{Ben}, \texttti{ID2}), \mathit{enrolledIn}(\texttti{Ben}, \texttti{c}),
   \\
   & \text{      }  \mathit{id}(\texttti{John}, \texttti{ID3}), \mathit{Student}(\texttti{John}) 
    \} \\
  \mathcal{C} =  &\{ \mathsf{Profshape}   \leftrightarrow 
  \mathit{Prof} \lor
  \exists \mathit{worksWith}.\mathsf{Profshape},  \\
  &~~ \mathsf{Studshape} \leftrightarrow \mathit{Student} \land {=_1} \mathit{id}  \land \exists \mathit{enrolledIn} 
  \} \\[1ex] 
  \mathcal{T} = & \{\mathsf{Studshape}(\texttti{Ben}),  
  \mathsf{Studshape}(\texttti{John}) 
  \} 
\end{align*}
}

\noindent
The first constraint is recursive and intuitively, it states that nodes validating the shape name $\mathsf{Profshape}$ must either belong to the class $\mathit{Prof}$ or have a $worksWith$ connection with some node validating $\mathsf{Profshape}$.
The second constraint states that nodes validating $\mathsf{Studshape}$ must belong to the class $\mathit{Students}$, have exactly one $\mathit{id}$, and belong to some $\mathit{enrolledIn}$ fact. The targets ask to check whether $\texttti{Ben}$ and $\texttti{John}$ satisfy the constraint for $\mathsf{Studshape}$. 
The data graph $G$ does not validate the shapes graph. Intuitively, the reason is that $G$ contains more than one $\mathit{id}$ for $Ben$ and it is missing an $\mathit{enrolledIn}$ fact for $\mathit{John}$.
In other words, data graph $G$ is inconsistent w.r.t.\ shapes graph 
$(\mathcal{C},\mathcal{T})$. Conversely, $G$ validates $(\mathcal{C}, \mathcal{T'})$ with $\mathcal{T'} = \{\mathsf{Profshape}(\texttti{Lea})\}$ as witnessed by the assignment $I = G \cup \{\mathsf{Profshape}(\texttti{Lea}), \mathsf{Profshape}(\texttti{Ann})\}$.
\end{example}

\myparagraph{Well-Designed SPARQL.}
Let $N_V$ be an infinite set of variables, disjoint from $N_N\cup N_C \cup N_P \cup N_S$.
A \textit{basic graph pattern (\bgp)\/} is a conjunction of atoms $\psi_1 \land \cdots \land \psi_n$, where $n\geq 0$ and each $\psi_i$ is of the form $B(t)$ or $p(t_1, t_2)$ with $B \in N_C$, $p \in N_P$, and $t, t_1, t_2 \in N_V \cup N_N$. 
We denote the empty conjunction with $\top$.
Intuitively, a \bgp is a conjunctive query built from atoms over class and property names over variables and constants, where all the variables are output variables. We focus on SPARQL queries built from \bgps and the \emph{OPTIONAL} (or OPT) operator. 
We thus may assume the so-called ``OPT-normal form'' 
(which disallows OPT-operators in the scope of a $\wedge$-operator) 
introduced by~\cite{DBLP:journals/tods/PerezAG09}. 

A \emph{(SPARQL) mapping} is any
partial function $\mu$ from $N_V$ to $N_N$. Given a unary or binary atom $\psi(\Vec{t})$ and a mapping $\mu$, we use $\mu(\psi(\Vec{t}))$ to denote the ground atom obtained from $\psi(\Vec{t})$ by replacing every variable $x$ in $\Vec{t}$ by $\mu(x)$.  We write
$\idom(\mu)$ to denote the domain of $\mu$ and
$\ivar(Q)$ for the set of variables in 
$Q$. 
Mappings $\mu_1$ and $\mu_2$ are \emph{compatible} (written $\mu_1\sim \mu_2$)
 if $\mu_1(x) = \mu_2(x)$ for all $x \in \idom(\mu_1)\cap \idom(\mu_2)$. 

The evaluation of a SPARQL query $Q$ over an RDF graph $G$ is defined as follows: 
\begin{enumerate}
   \item $\semG{Q} = \{\mu \mid \idom(\mu) = \ivar(Q), $ and $ \mu(\psi_i) \in G $ for $ i=1,\dots,n \}$, where $Q$ is a BGP $\psi_1 \land \cdots \land \psi_n$.
    \item $\lsem Q_1 \OPT\ Q_2 \rsem_G = \{\mu_1 \cup \mu_2 \mid \mu_1 \in \lsem Q_1\rsem_G, \mu_2 \in \lsem Q_2\rsem_G, $ and $ \mu_1 \sim \mu_2\} \cup \{\mu_1 \in \lsem Q_1 \rsem_G \mid \forall\mu_2 \in \lsem Q_2 \rsem_G: \mu_1 \nsim \mu_2\}$ 
    \end{enumerate}
As in ~\cite{DBLP:journals/tods/PerezAG09}, we assume set semantics.
A SPARQL query $Q$ is \emph{well-designed} (\wdpts for short), if there is no subquery $Q' = (P_1 \,\text{OPT}\ P_2)$ of $Q$ and a variable $x$, such that $x$ occurs in $P_2$, outside of $Q'$, but not in $P_1$. 
It was shown in~\cite{DBLP:journals/tods/PerezAG09} that 
the complexity of query evaluation with unrestricted OPT is $\mathsf{PSPACE}$-complete, while for \wdpts is $\mathsf{coNP}$-complete. 

Projection in SPARQL is realized via the SELECT result modifier on top of queries. For a mapping $\mu$ and a set $X$ of variables, we let $\mu|_X$ denote the mapping $\mu'$
that restricts $\mu$ to the variables in $X$, that is $\idom(\mu') = X \cap \idom(\mu)$ and $\mu'(x) = \mu(x)$ for all $x \in \idom(\mu')$. The result of evaluating a query $Q$ with projection to the variables in $X$ over a graph $G$ is defined as $\lsem\pi_X Q \rsem_G = \{\mu|_X \mid \mu \in \lsem Q\rsem_G \}$. 
We refer to queries $\pi_X Q$ as a ``projected \wdpt'' 
(or \wdptProj), and as a ``projected \bgp'' (or \bgpProj) if $Q$ is just a \bgp.
Note that there is no gain in expressiveness when we allow projections to also appear inside $Q$.
It was shown in~\cite{DBLP:journals/tods/Letelier0PS13} that checking if some mapping $\mu$ is an answer to a \wdptProj $\pi_X Q$ over a graph $G$ is $\sigmatwo$-complete. 
By inspecting the $\sigmatwo$-membership proof, it turns out that, w.l.o.g., we may assume for an arbitrary \wdptProj 
$\pi_X Q$, that $Q$ is of the form $(( \dots ((P\OPT P_1)\OPT P_2) \dots) \OPT P_k) $, such that $\ivar(P) \cap X = \idom(\mu)$ and, for every $i$, $\ivar(P_i) \subseteq X$ and $P_i$ contains at least one variable from $X \setminus \ivar(P)$. 
Then  $\mu$ is an answer to $\pi_X Q$, iff \emph{there exists} an extension $\nu$ of $\mu$ to the variables in $Y = \ivar(P) \setminus X$ s.t.\ \emph{there does not exist} an extension of $\nu$ to an answer of one of the queries $P_i$.

\section{Querying Non-valid Data Graphs}
\label{sect:queryingNonValidData}
In this section, we recall the notion of repairs for a data graph in the presence of a SHACL shapes graph proposed in~\cite{shqi-etal-2021kr} and then introduce the three inconsistency-tolerant semantics in this setting.

\tableOfResults

\myparagraph{Repairing Non-Validation.} \label{sec:expl}
We can explain non-validation of a SHACL shapes graph in the style of database repairs. Hence, a repair is provided as a set $A$ of facts to be added and a set $D$ of 
facts to be deleted,  so that the resulting data graph validates the shapes graph. We recall 
the definition below, but instead 
of ``{\em explanations}'' 
we speak of ``\emph{repairs}''.  

\begin{definition}[\cite{shqi-etal-2021kr}]\label{def:expl} Let $G$ be a data graph, let $( \mathcal{C,T} )$  be a SHACL shapes graph, 
and let the set of \emph{hypotheses} $H$ be a data graph disjoint from $G$. 
A \emph{repair} for $(G, \mathcal{C,T},H)$ is a pair $(A,D)$, such that $D \subseteq G$, $A \subseteq H$, and $(G \setminus D) \cup A $ validates~$( \mathcal{C,T} )$; we call $(G \setminus D) \cup A $ the \emph{repaired} (data) graph $G_R$ of $G$ w.r.t. $R = (A,D)$. 
\end{definition}

As usual in databases, instead of considering all possible repairs, 
we consider \emph{preference relations} given by a pre-order $\preceq$ (a reflexive and transitive relation)
on the set of repairs. Following ~\cite{shqi-etal-2021kr}, we study
\emph{subset-minimal} $(\subseteq)$, and \emph{cardinality-minimal} $(\leq)$ repairs. For two repairs $(A, D),(A', D')$, we write $(A, D) \subseteq (A', D')$ if $A \subseteq A'$ and $D\subseteq D'$, and $(A, D) \leq (A', D')$ if  $|A| + |D| \leq |A'| + |D'|$.   A preferred  repair under the pre-order $\preceq$, called $\preceq$-repair, is a repair $R$ such that there is no repair $R'$ for $\Psi$ with $R' \preceq R$ and $R \not\preceq R'$.
In that case, we also call $G_R$ a $\preceq$-repaired graph.
Clearly, every $\leq$-repair is also a $\subseteq$-repair,
but not vice versa. We denote with $=$ when there is no preference order, 
and we use $\preceq$ as a placeholder for $\subseteq$, $\leq$, and $=$.
We illustrate the notion of repairs by revisiting Example~\ref{ex:exp1}.

\begin{example}\label{ex:exp2}  
Consider $G$ and $(\mathcal{C},\mathcal{T})$ from Example~\ref{ex:exp1} and 
$H$ defined as follows:
\begin{align*}
  H = & \{ \mathit{enrolledIn}(\texttti{John},\texttti{c1}), \mathit{enrolledIn}(\texttti{Ben},\texttti{c2})\}
\end{align*}
Recall from Example~\ref{ex:exp1} that $G$ does not validate $(\mathcal{C,T})$.
Validation can be obtained by repairing $G$ with the subset- and cardinality-minimal repairs $R_1 = (A,D_1)$ and $ R_2 = (A,D_2)$,
where $A = \{ \mathit{enrolledIn}(\texttti{John},\texttti{c1})\}$, and each $D_j$ includes the fact $\mathit{id}(\texttti{Ben}, \texttti{ID}j)$. 
There are more repairs, e.g., $R_3=(A',D_1)$ and $R_4 = (A',D_2)$ with $A'=H$, but they are neither $\subseteq$-minimal nor $\leq$-minimal.
\end{example}

Constructs such as existential restrictions in constraints may sometimes enforce repairs to add atoms over fresh nodes. This is supported by the set of hypothesis $H$. Observe that $H$ can only introduce a limited number of fresh nodes. Indeed, it was shown in~\cite{shqi-etal-2021kr} that if $H$ is left unrestricted, most problems related to repairs, 
such as checking the existence of a repair, become undecidable. 

\myparagraph{Query Answering Semantics under Repairs.} 
We now define the three inconsistency-tolerant semantics {\em brave}, 
{\em AR}, and {\em IAR semantics}, which we will refer to by the symbols $\exists$, $\forall$, and $\cap$, respectively.
Consider a query $Q$, a mapping $\mu$, a data graph $G$, a shapes graph $(\mathcal{C,T})$, and hypotheses $H$.
Then, $\mu$ is an answer of $Q$ over $\Psi=(G,\mathcal{C,T},H)$
and preference order $\preceq$: 

\begin{itemize}
    \item under  \emph{brave semantics}, if there \emph{exists} a $\preceq$-repair $R$ for $\Psi$, such that $\mu \in \lsem Q \rsem_{G_R}$,
    \item under \emph{AR semantics}, if $\mu \in \lsem Q \rsem_{G_R}$ \emph{for all} $\preceq$-repairs $R$ for $\Psi$,
    \item under \emph{IAR semantics}, if $\mu \in \lsem Q \rsem_{G_\cap}$, where $G_\cap= \bigcap\{G_R \mid R $ is a $ \preceq$-repair for $ \Psi \}$, i.e., the \emph{intersection of all} $\preceq$-repaired graphs $G_R$.
\end{itemize} 

We illustrate the semantics by continuing Example~\ref{ex:exp2}.
\begin{example}\label{ex:exp3} Consider again $\Psi = (G,\mathcal{C,T},H)$ from Example~\ref{ex:exp2} together with the \bgp $Q = \mathit{Student}(x) \land id(x,y)$. Clearly, the mapping $\mu_1 = \{x \rightarrow \texttti{John}, y \rightarrow \texttti{ID3}\}$ is an answer of $Q$ over $\Psi$ under brave, AR, and IAR semantics. The mappings $\mu_2 = \{x \rightarrow \texttti{Ben}, y \rightarrow \texttti{ID1}\}$ and 
$\mu_3 = \{x \rightarrow \texttti{Ben}, y \rightarrow \texttti{ID2}\}$ are answers of $Q$ over $G_{R_1}$ and $G_{R_2}$, respectively, and hence under brave semantics, but not under AR and IAR semantics. 
Now consider the \wdpt $Q2 = \mathit{Student}(x) \,\text{ OPT }\, \text{id}(x,y)$. In this case, $\mu_1$ and $\mu_4 = \{x\rightarrow \texttti{Ben}\}$ are solutions under IAR semantics.
Clearly,  $\mu_1$, $\mu_2$, and $\mu_3$ are still answers to $Q2$ under brave semantics and $\mu_1$ under AR semantics. Note that the above statements hold for each preference order $\preceq$.
\end{example}

Observe that for \bgps (with and without projection), if $\mu$ is an answer under IAR semantics, then $\mu$ is an answer under AR and brave semantics. This is due to the monotonicity property of \bgps. For well-designed queries, this may not be the case. For instance, $\mu_4$ in Example~\ref{ex:exp3} is an answer under IAR semantics, but not under AR and brave semantics. However, $\mu_4$ 
{\em can be extended} to a solution over every repaired graph $G_{R_i}$ of $G$. In fact, it is known that well-designed queries have some weak form of monotonicity, in the sense that a solution is not lost, but it may be extended if new facts are added to the data.

\myparagraph{Decision Problems.}
For a given SPARQL query language $\mathcal{L} \in \{\bgp, \bgpProj, \wdpt, \wdptProj\}$, preference order ${\preceq}\in \{=, {\leq}, \subseteq\}$, and 
inconsistency-tolerant semantics $\mathcal{S}\in \{\exists, \forall, \cap\}$, we define the \cqa problem 
as follows:

\problemdef{\LPrecQ}{A query $Q \in \mathcal{L}$, $\Psi= (G,\mathcal{C,T},H)$, 
 and a mapping $\mu$}{Is $\mu$ an answer of $Q$ over $\Psi$ and preference order $\preceq$ under $\mathcal{S}$-semantics}
For all settings, we analyze both the \emph{data} and \emph{combined} complexity. 
We refer to Table~\ref{tab:results} for the complete picture of our main results, which will be discussed in detail in Sections \ref{sect:brave} and \ref{sect:iar}. 
In Section \ref{sect:max}, we will then study the ``max-variants'' of these problems, i.e., settings where the existence of a repair is not guaranteed and we may have to settle for validating a maximal subset of the targets.
There may be several reasons for the non-existence of a repair --  including conflicting constraints with target shape atoms, unsatisfiable constraints, or insufficient hypothesis set. 
For instance, consider constraints $\mathsf{s1} \leftrightarrow B$ and $\mathsf{s2} \leftrightarrow \neg B$ and targets $\mathsf{s1}(\texttti{a})$ and $\mathsf{s2}(\texttti{a})$; in this case adding $B(\texttti{a})$ violates the second constraint and not adding it violates the first constraint. Hence, there exists no repair for any input data graph.
Note that we never use recursive constraints in our lower bounds, and the combined complexity lower bounds hold even for fixed constraints and hypotheses. The results in Sections \ref{sect:brave} and \ref{sect:iar} hold even for fixed targets.

\section{Brave and AR Semantics}
\label{sect:brave}

We start our complexity analysis of CQA with two of the most basic cases, which yield the lowest complexity classifications in Table~\ref{tab:results}, namely the data complexity of the \BEquEx and \BEqu problems. 
The $\np$-membership of the former and the \conp-membership of the latter are immediate:
Given a graph $G$, shapes graph $(\mathcal{C,T})$, hypotheses $H$, \bgp $Q$, and mapping $\mu$, do the following: 
(1) guess a repaired graph  $G_R$ together with a supported model $I$ and 
(2) check that $\mu \in \lsem Q \rsem_{G_R}$ holds (for \BEquEx) or $\mu \not\in \lsem Q \rsem_{G_R}$ holds (for \BEqu), respectively; moreover, check that $I$ is indeed a supported model.
That is, yes-instances in the case of brave semantics and no-instances in the case of AR semantics are identified by essentially the same procedure. 
We, therefore, study the two semantics simultaneously in this section.
Note that, switching to \wdptProj as the most expressive query language considered here, does not increase the {\em data complexity}, since the check $\mu \in \lsem Q \rsem_{G_R}$ or $\mu \not\in \lsem Q \rsem_{G_R}$ is still feasible in \ptime for \wdptProjs (this is even true for arbitrary SPARQL queries).

Now consider the \LCardEx and \LCard problems for $\mathcal{L}\in \{\bgp, \bgpProj, \wdpt, \wdptProj\}$. 
To arrive at a cardinality minimal repair $R$, one first has to compute the minimal cardinality $k$ of a repair.
This can be done by asking \np-questions of the form: ``Does there exist a repair of size $\leq c$?''.
With binary search, only a logarithmic number of $\np$-oracle calls are required for this task. 
After that, we can check with another oracle call if there exists a repair $R = (A,D)$ {\em of size $|A| + |D| = k$} with $\mu \in \lsem Q \rsem_{G_R}$ or $\mu \not\in \lsem Q \rsem_{G_R}$, respectively. 
In total, we thus end up in \thetatwo.

Finally, for the \LSubsEx and \LSubs problems, we would still start by (1) guessing a repaired graph  $G_R$ plus supported model $I$. But then, in step (2), we have to additionally check that $R$ is $\subseteq$-minimal, which requires \conp-power (i.e., there {\em does not exist a $\subseteq$-smaller repair}). 
In total, this gives us a $\sigmatwo$-procedure for \LSubsEx and a $\pitwo$-procedure for \LSubs.

As can be seen in Table \ref{tab:results}, the {\em combined complexity} gives a much more varied picture. 
Compared with data complexity, we note that the complexity may possibly increase by up to 2 levels in the polynomial hierarchy. 
Moreover, the increase of complexity now also differs between brave and AR semantics. 
However, in some cases, the combined complexity remains the same as the data complexity.
In particular, this applies to all cases of \LPrecEx with  $\mathcal{L}\in \{\bgp, \bgpProj\}$ and $\preceq$\,$\in$\,$\{=, \leq, \subseteq\}$.
The reason for this is that, for $\bgps$, no additional complexity arises anyway (this also holds for \BPrec).
For $\bgpProjs$, checking $\mu \in \lsem Q \rsem_{G_R}$ requires an additional non-deterministic guess for the extension of $\mu$ to the bound variables. 
However, this additional guess does not push the complexity out of the complexity classes $\np$, $\thetatwo$, $\sigmatwo$, and, likewise, out of $\pitwo$ in the case of \PBSubset.
On the other hand, checking $\mu \in \lsem Q \rsem_{G_R}$ for a \wdpt $Q$ requires an additional \conp-check (namely, that $\mu$ cannot be extended to an answer for one of the OPT parts). 
Hence, only in cases where a \conp-check is already needed for data complexity is there no increase of complexity for combined complexity. 
Most notably, this applies to $\wdptProjSubsEx$ and $\WPrec$.

Theorem \ref{thm:ARbgp} below states that all membership results sket\-ched here are indeed tight.

\begin{theorem}
\label{thm:ARbgp}
The following statements are true for data complexity:
\begin{itemize}
    \item \LEquEx is \np-c for $\mathcal{L}\in \{\bgp, \bgpProj, \wdpt, $ $\wdptProj\}$.
    \item \LCardEx and \LCard are \thetatwo-c for $\mathcal{L}\in \{\bgp, \bgpProj, \wdpt, \wdptProj\}$.
    \item \LSubsEx is \sigmatwo-c for $\mathcal{L}\in \{\bgp, \bgpProj, \wdpt, $ $\wdptProj\}$.
    \item \LEqu is \conp-c for $\mathcal{L}\in \{\bgp, \bgpProj, \wdpt, $ $\wdptProj\}$.
    \item \LSubs is \pitwo-c for $\mathcal{L}\in \{\bgp, \bgpProj, \wdpt, $ $\wdptProj\}$.
\end{itemize}
The following statements are true for combined complexity:
\begin{itemize}
    \item \LEquEx is \np-c for $\mathcal{L}\in \{\bgp, \bgpProj\}$.
    \item \LCardEx is \thetatwo-c for $\mathcal{L}\in \{\bgp, \bgpProj\}$.
    \item \LSubsEx is \sigmatwo-c for $\mathcal{L}\in \{\bgp, \bgpProj, \wdpt, $ $\wdptProj\}$.
    \item \LEqu is \conp-c for $\mathcal{L}\in \{\bgp, \wdpt\}$.
    \item \LCard is \thetatwo-c for $\mathcal{L}\in \{\bgp, \wdpt\}$.
    \item \LSubs is \pitwo-c for $\mathcal{L}\in \{\bgp, \bgpProj, \wdpt\}$.
\end{itemize}
\end{theorem}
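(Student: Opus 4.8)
The membership arguments for every entry of the theorem have already been sketched above, so the plan is to supply the matching lower bounds, and the first step is to cut down their number. Hardness propagates from less to more expressive query languages: every \bgp is a \wdpt, and taking $X := \ivar(Q)$ shows that allowing projection only generalizes, so \bgp is the weakest of the four languages; moreover every data-complexity lower bound for a \emph{fixed} query is a fortiori a combined-complexity lower bound, and \bgp occurs in each language set listed in the combined-complexity part of the theorem. It therefore suffices to prove the data-complexity lower bounds for \bgp: that \bgpEquEx is \np-hard, \bgpCardEx and \bgpCard are \thetatwo-hard, \bgpSubsEx is \sigmatwo-hard, \bgpEqu is \conp-hard, and \bgpSubs is \pitwo-hard.

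The reductions all proceed through \emph{single ground-atom queries}. Fix an atom $f$ that occurs in the hypotheses $H$ but not in $G$ and let $Q := f$ (a \bgp without variables); then the only candidate answer is $\mu = \emptyset$, and $\emptyset \in \sem{Q}_{G_R}$ iff $f \in G_R$ iff $f \in A$ for $R = (A,D)$. Hence $\emptyset$ is a brave (resp.\ AR) answer to $Q$ over $\Psi$ under $\preceq$ precisely when $f$ is added in some (resp.\ every) $\preceq$-repair of $\Psi$ --- i.e.\ precisely the credulous / cautious ``fact-addition'' reasoning tasks for SHACL repairs, whose data complexity for $\preceq \in \{{=},{\leq},{\subseteq}\}$ is \np, \thetatwo, \sigmatwo (credulous) and \conp, \thetatwo, \pitwo (cautious); these coincide with the complexities of the corresponding repair-reasoning problems studied in~\cite{shqi-etal-2021kr}. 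This delivers all the required lower bounds (all reductions below produce repair-admitting instances, so the cautious cases are unaffected by the treatment of the repair-free case).

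The remaining work is to spell out these reductions so that they fit data complexity, where the constraints, the single target, and the query $Q = f$ are fixed and the whole instance must live in the data graph. Truth assignments are encoded by which of two hypothesis atoms is added at each variable node --- a choice forced by a fixed constraint of the form ``$\forall E.(\text{exactly one of two marks})$'' with a single root target that reaches all variable nodes along a fixed path expression $E$; auxiliary facts recording, e.g., ``clause $c$ is falsified by the current assignment'' are similarly forced into $G_R$ by biconditional constraints reachable from the root; and one further fixed constraint ties the presence of $f$ to ``the chosen assignment does (not) satisfy $\varphi$''. When $\preceq$ is $=$, this is a plain reduction from \textsc{Sat} (brave) resp.\ \textsc{Unsat} (AR). When $\preceq$ is $\subseteq$, the second quantifier alternation is furnished by the minimality condition itself: proper sub-repairs are arranged to range over the assignments of the universally (resp.\ existentially) quantified block of a QBF $\exists\dots\forall\dots\varphi$ (resp.\ $\forall\dots\exists\dots\varphi$), following the usual pattern for credulous / cautious subset-minimal reasoning. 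When $\preceq$ is $\leq$, one reduces from Wagner's canonical \thetatwo-complete problem --- the parity of the number of satisfiable formulas in a monotone chain of \textsc{Sat} instances --- arranging that the minimum repair cardinality encodes that number and routing its least significant bit to whether $f$ appears in $G_R$.

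The main obstacle will be the \thetatwo case: making the minimum repair cardinality reflect \emph{exactly} the quantity whose parity is wanted, while the constraints stay fixed and the entire instance is packed into the data graph, and then reading that bit off with the monotone query $Q = f$, calls for a careful budgeting/padding argument over the hypothesis set. The $\subseteq$-minimal constructions are the next most delicate point, since one must guarantee that the $\subseteq$-minimal repairs are \emph{exactly} the intended ones, with no spurious minimal repair that would break the correspondence to QBF assignments. If one instead imports the fact-addition complexities directly from~\cite{shqi-etal-2021kr}, these obstacles are inherited from there, and the proof of Theorem~\ref{thm:ARbgp} collapses to the ground-atom-query observation together with the data-to-combined transfer.
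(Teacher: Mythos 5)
Your top-level strategy coincides with the paper's: propagate hardness from \bgp to the richer languages and from data to combined complexity, and read the answer off a repaired graph via a single-atom query (the paper uses $Q=T(x)$ with $\mu=\{x\mapsto\texttti{s}\}$, i.e.\ a retained atom rather than an added one, but the mechanism is the same). The membership side is fine. The problem is that your hardness argument is not actually carried out where it matters. The two routes you offer both have gaps. First, the import route: reducing \reladd/\necadd-style fact-level reasoning to CQA with $Q=f$ is correct, but you do not verify that \cite{shqi-etal-2021kr} establishes exactly the data-complexity completeness results you list for all three preference orders; moreover, the paper explicitly notes that the hardness proofs in that prior work rely on \emph{recursive} constraints and that the non-recursive cases were left open there, whereas the present theorem is accompanied by the claim that all lower bounds hold without recursion. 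So the import neither self-contains the proof nor yields the result in the strength the paper needs.

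Second, the from-scratch route: for ${\preceq}={=}$ your sketch (repairs $=$ truth assignments, a fixed constraint forcing exactly one of two marks per variable node, a root target reaching all nodes along a fixed path) is essentially the paper's construction and is fine. But for the \thetatwo and \sigmatwo/\pitwo cases you only name the obstacles (``a careful budgeting/padding argument'', ``no spurious minimal repair'') without resolving them, and these are precisely the nontrivial parts. The paper resolves them concretely: for $\leq$ it reduces from \textsc{CardMinSat} by adding, for each variable, a copy atom that must be deleted when the variable is set to true (so that $\leq$-repairs are exactly the cardinality-minimal models) plus a block of $|X|+1$ copies penalizing non-satisfaction, and then queries whether $\texttti{x}_1$ is true; your Wagner-parity route would additionally need a gadget turning ``minimum repair cardinality'' into a single readable bit, which you have not constructed. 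For $\subseteq$ the paper gives repairs a binary choice between instantiating only the $X$-block (adding a hypothesis atom $NoExt(\texttti{s})$ and stripping all $Y$-literals, clauses and the formula of their truth marks) and instantiating $X\cup Y$ while satisfying $\phi$, arranged so that the latter is a strict subset of the former on equal $X$-parts; this is the step that rules out spurious $\subseteq$-minimal repairs, and it is exactly what your sketch leaves open. As written, the proposal is a plausible plan that matches the paper's architecture but does not constitute a proof of the lower bounds.
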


\begin{proof}[Proof sketch.] 
We only discuss the hardness part. 
Our reductions are from prototypical complete problems on propositional formulas for the complexity classes \np, \conp, \thetatwo, \sigmatwo, \pitwo.
We thus encode the semantics of propositional logic into the constraints $\mathcal{C}$ and the concrete propositional formula $\phi$ (in 3-CNF) into the data graph $G$.
This is done in a way such that repairs correspond to truth assignments under which $\phi$ is evaluated. 
As an example, consider the following constraints $\mathcal{C}$ used for the case \bgpEquEx:
{\small
    \begin{align*}
        \mathsf{lit} \leftrightarrow {} & Lit \land ((T \land \lnot F \land \exists dual.F) 
        \lor (F  \land \lnot T \land \exists dual.T))  \\
        \mathsf{cl} \leftrightarrow {} & Cl \land {=_3}or^- \land {=_1}and \land (\\
        & (F \land \lnot T \land \forall or^-.F) 
        \lor (T \land \lnot F \land \exists or^-.T) ) \\
        \mathsf{phi} \leftrightarrow {} & Phi \land ((\forall and^-.T \land T \land \lnot F) 
        \lor (\exists and^-.F \land F \land \lnot T))\\
        \mathsf{val} \leftrightarrow {} & \forall next^*.(\texttti{e} \lor \mathsf{lit} \lor \mathsf{cl} \lor \mathsf{phi}) 
        \land \exists next^*.\texttti{e}
    \end{align*}
}    
    The target is $\mathcal{T}=\{\mathsf{val}(\texttti{s})\}$, $H=\emptyset$, $\mu = \{x\mapsto \texttti{s}\}$, and the query is $Q=T(x)$.
    The nodes of $G$ are the literals and the clauses of $\phi$ and two auxiliary nodes $\texttti{s}, \texttti{e}$, where $\texttti{s}$ represents $\phi$ itself.
    The first three constraints of $\mathcal{C}$ are respectively meant for the literals (put into the class $Lit$), clauses (put into the class $Cl$), and $\phi$ itself (put into the class $Phi$), ensuring that all of these are either true or false in a repaired graph $G_R$.
    I.e., they are all both in the classes $T$ and $F$ in $G$ but can only remain in one of them in a repaired graph $G_R$.  These constraints ensure that truth values are correctly propagated through Boolean expressions.
    Concretely, nodes of $G$ are connected via properties $dual, and, or$ such that dual literals have dual truth values, clauses are true iff one of its literals is true, and $\phi$ is true iff all of its clauses are true.
        Lastly, the property $next$ encodes the immediate successors and predecessors of an arbitrary linear order $\preceq_{next}$ that starts in node $\texttti{s}$, goes through all other nodes of $G$, and ends in $\texttti{e}$.
    This property is used in the last constraint to enforce the first three constraints on literals, clauses, and $\phi$ itself ($\texttti{s}$ represents $\phi$) without having to explicitly name them in the shapes graph. 
    Concretely, in $G_R$ the shape name $\mathsf{val}$ has to be validated at $\texttti{s}$ and, thus, the last constraint ensures that every node $\texttti{n}$ of $G$ that appears after $\texttti{s}$ in the order $\preceq_{next}$, i.e.,  every node of $G$ has to either be $\texttti{e}$ or validate $\mathsf{lit}, \mathsf{cl}$, or $\mathsf{phi}$.
    The remaining parts of the constraints ensure, together with the target, that repairs do not alter the property names $dual, or, and, next$, and the class names $Lit,Cl,Phi$ (remove atoms over these names). 
    The satisfiability of $\phi$ is then checked by asking whether there is a repaired graph $G_R$ s.t.\ $\phi$ is in class $T$ in $G_R$.
    Thus, we prove \np-hardness.
    
    When we are only interested in $\leq$-repairs, we can have repairs be ``penalized'' for setting variables to true.
    Therefore, with a \bgp query, we can check whether a concrete variable $\texttti{x}$ is true in some \textit{minimal} model of $\phi$, establishing a reduction from the \thetatwo-complete problem \textsc{CardMinSAT} \cite{DBLP:journals/lmcs/CreignouPW18}.

    For $\subseteq$-repairs, we proceed one step higher up the polynomial hierarchy.
    We do this by splitting the variables of $\phi$ into $X$ variables and $Y$ variables.
    Now, the question is if there is a truth assignment of the $X$ variables that can \textit{not} be extended to a model of $\phi$, i.e., whether $\exists X \forall Y \lnot \phi$.
    We thus construct $(G,\mathcal{C,T},H)$ s.t., intuitively, repairs have the choice between instantiating only the $X$ variables or both the $X$ and the $Y$ variables.
    Crucially, when the instantiation is the same on the $X$ variables, a repair of the second kind is then a subset of the repair of the first kind.
    But a repair of the second kind has to represent a model of $\phi$.
    Thus, we can answer $\exists X \forall Y \lnot \phi$ by asking whether there is a $\subseteq$-repair that instantiates only the $X$ variables.
\end{proof}

Next, we consider the remaining cases for brave and AR semantics. 
We have already seen above that when considering combined complexity, two additional sources of complexity may arise: 
(1) another non-deterministic guess if the query involves projection (namely, to see if an extension of $\mu$ to the bound variables exists) and 
(2) another \conp-check in case of \wdpts (namely, to check that $\mu$ cannot be extended to one of the OPT parts). 
Hence, for \LEquEx and \LCardEx with $\mathcal{L}\in \{\wdpt, \wdptProj\}$, the additional \conp-check increases the combined complexity to $\sigmatwo$. 
For \PBPrec with $\preceq$\,$\in$\,$\{=,\leq\}$, the additional non-deterministic guess increases the complexity to $\pitwo$. 
The most dramatic increase of complexity (namely from \conp to \pithree) happens for \PWEqu where the additional guess and \conp-check introduce orthogonal new sources of complexity.
Of course, also for \PWPrec with $\preceq$\,$\in$\,$\{\leq, \subseteq\}$, the complexity rises to \pithree.

In Theorem \ref{thm:ARrest} below, we again state that the above-sketched membership results are actually tight.

\begin{theorem}
\label{thm:ARrest}
The following statements are true for combined complexity:
\begin{itemize}
    \item \LEquEx and \LCardEx are \sigmatwo-c for $\mathcal{L}\in \{\wdpt, \wdptProj\}$.
    \item \bgpProjEqu and \bgpProjCard are \pitwo-c.
    \item \wdptProjPrec is \pithree-c for ${\preceq}{}\in{} \{=,\leq,\subseteq\}$.
\end{itemize}
\end{theorem}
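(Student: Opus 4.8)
\emph{Proof plan.} Membership for all three items was already argued in the paragraphs preceding the theorem, so the plan is to supply matching hardness reductions in the style of the proof of Theorem~\ref{thm:ARbgp}. The unifying idea is to \emph{compose} the ``for all repairs'' (resp.\ ``there exists a repair'') quantifier of the semantics with a known hardness of \emph{query evaluation} for the chosen query language, thereby climbing one level of the polynomial hierarchy. In all three cases I reuse the fixed shapes-graph gadget from the $\np$-hardness proof of Theorem~\ref{thm:ARbgp} (with $H=\emptyset$): for each propositional variable of the outermost quantifier block, $G$ contains two literal nodes, put into both classes $T$ and $F$ and joined by $\mathit{dual}$-edges, and a non-recursive constraint $\mathsf{val}\leftrightarrow\forall\mathit{next}^{*}.(\texttti{e}\lor\mathsf{declit})\land\exists\mathit{next}^{*}.\texttti{e}$ with target $\mathsf{val}(\texttti{s})$, propagated along a $\mathit{next}$-chain running through all nodes of $G$ from $\texttti{s}$ to $\texttti{e}$, forces each literal node into exactly one of $T,F$ (consistently with $\mathit{dual}$). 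Since all other atoms of $G$ --- in particular the ``query gadget'' atoms introduced below --- are reachable from $\texttti{s}$ along $\mathit{next}$, any deletion touching them breaks validation of the target; hence repairs are \emph{exactly} the truth assignments of the outer block. Moreover each literal node forces precisely one deletion and there are no additions, so all repairs have the same size $2n$ and are therefore simultaneously $\subseteq$- and $\leq$-minimal. This lets each reduction below work uniformly for all $\preceq\in\{=,\leq,\subseteq\}$, and lets \wdpt-hardness carry over to \wdptProj.

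For the \sigmatwo-hardness of \wdptEquEx I would reduce from $\exists X\,\forall Y\,\psi$ (with $\lnot\psi$ in 3-CNF): the outer $\exists X$ is carried by brave semantics (the chosen repair encodes the $X$-assignment), and the inner $\forall Y$ by the \conp-hardness of \wdpt evaluation (cf.~\cite{DBLP:journals/tods/PerezAG09}). I take the \wdpt testing ``$\forall Y\colon\psi(\bar a,Y)$'' for a parameter assignment $\bar a$, but rather than hard-wiring $\bar a$ I let the query \emph{read the $X$-literals off} $G_R$ through $T(\cdot)$-atoms; as the gadget atoms are protected, $\mu\in\lsem Q\rsem_{G_R}$ iff $\forall Y\colon\psi(X_R,Y)$, so $\mu$ is a brave answer iff $\exists X\,\forall Y\,\psi$. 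By the balancing and language-inclusion remarks this also covers \LCardEx and the two \wdptProj cases. For the \pitwo-hardness of \bgpProjEqu I reduce from $\forall X\,\exists Y\,\psi$ (with $\psi$ in 3-CNF), using AR semantics for the outer $\forall X$ and the $\np$-hardness of \bgpProj evaluation --- the standard conjunctive-query encoding of 3-satisfiability --- for the inner $\exists Y$, again supplying the $X$-assignment via $G_R$-lookups; then $\mu$ is an AR answer iff $\forall X\,\exists Y\,\psi$, which also yields \bgpProjCard. Finally, for the \pithree-hardness of \wdptProjEqu (hence of \wdptProjPrec for ${\preceq}\in\{=,\leq,\subseteq\}$, by balancing) I combine both mechanisms: I reduce from $\forall X\,\exists Y\,\forall Z\,\psi$, taking AR semantics for $\forall X$ and, for the inner $\exists Y\,\forall Z$, the query from the $\sigmatwo$-hardness proof of \wdptProj evaluation (cf.~\cite{DBLP:journals/tods/Letelier0PS13}) that tests ``$\exists Y\,\forall Z\colon\psi(\bar a,Y,Z)$'', with $\bar a$ fed in from $G_R$; then $\mu$ is an answer over each $G_R$ iff $\exists Y\,\forall Z\colon\psi(X_R,Y,Z)$, hence an AR answer iff $\forall X\,\exists Y\,\forall Z\,\psi$.

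The step I expect to be the main obstacle is making this parameterization airtight for \wdptProj. The fixed constraints must simultaneously leave the $X$-literal choices free, render \emph{every} atom of the borrowed query gadget (clause and variable nodes, the $\mathit{next}$-chain, $\mathit{dual}$-edges, auxiliary class memberships) self-protecting so that no spurious repair can delete it, and stay non-recursive; and the borrowed $\sigmatwo$-hardness query must be re-engineered so that it remains well-designed once the $X$-literals are looked up in $G_R$ instead of being constants, together with a fresh check that the composed instance evaluates to $\exists Y\,\forall Z\,\psi(X_R,Y,Z)$ over every repaired graph. The remaining ingredients --- correctness of the propositional encodings, the compatibility bookkeeping for OPT, and the easy directions of the equivalences --- are routine and would be deferred to the appendix.
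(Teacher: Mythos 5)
Your plan is sound and follows the same underlying strategy as the paper's actual proof: arrange the shapes graph so that repairs correspond exactly to the assignments of the outermost quantifier block, observe that all repairs then have equal cardinality (so $=$, $\leq$, $\subseteq$ coincide and \wdpt-hardness lifts to \wdptProj), and delegate the inner quantifier block(s) to the known evaluation hardness of the query language (\conp for \wdpt via a $\top \OPT P$ query and the empty mapping, \np for \bgpProj via a Boolean CQ, \sigmatwo for \wdptProj). The one genuine difference is the reference problem: the paper reduces from the $k$-round 3-colorability game of Ajtai, Fagin and Stockmeyer rather than from QBF, and this choice dissolves precisely the obstacle you single out as the main risk. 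In the coloring encoding the data graph consists only of $col$, $neq$, $next$ and two unary classes; the constraints $\mathsf{leaf} \leftrightarrow L \land {=_1}col$, $\mathsf{inner} \leftrightarrow I \land {=_3}col$ and $\mathsf{valC} \leftrightarrow {=_2}neq$, propagated by the usual $\forall next^{*}$ trick, make every non-choice atom self-protecting with one fixed non-recursive shapes graph, and the ``parameterized'' query is just the natural coloring query $\bigwedge_i col(\texttti{v}_i,x_i) \land \bigwedge_{(\texttti{v}_i,\texttti{v}_j)} neq(x_i,x_j)$ --- read off the repair through the surviving $col$-atoms exactly as you propose with $T(\cdot)$-atoms. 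For the \pithree case this query is split by a single top-level OPT into the degree-$\le 2$ part and the remainder, so well-designedness is immediate and no re-engineering of the Letelier et al.\ gadget is needed; the same three objects $(G,\mathcal{C},\mathcal{T})$ serve all items of the theorem with only the query changing. Your QBF route would also go through, but at the cost of re-verifying, for each level, that the propositional gadget is protected and that the parameterized \wdptProj query stays well-designed --- work the coloring encoding gets for free.
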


\begin{proof}[Proof sketch]
    For the hardness proofs, we use a \sigmak-complete $k$-round coloring game played on a graph $G$ \cite{DBLP:journals/jcss/AjtaiFS00}.
    In the game, Player 1 starts Round 1 by coloring the degree 1 vertices.
    Then Player 2 proceeds to color the degree 2 vertices.
    Then Player 1 again colors the degree 3 vertices, and so on \dots.
    In the last Round $k$, the Turn Player colors all remaining vertices and wins if the final coloring is a valid 3-coloring.
    The question is whether Player 1 has a winning strategy. 

We show the reduction for \wdptCardEx and \wdptEquEx from $2$-\textsc{Rounds-3-Colorability}. Let $G_{col}$ be an arbitrary instance of the problem with $n$ nodes. 
We construct a data graph $G$ whose nodes are the vertices of $G_{col}$ and 5 extra nodes $\texttti{r}, \texttti{g}, \texttti{b}, \texttti{s}, \texttti{e}$. For every $\texttti{v} \in V(G_{col})$, we add $col(\texttti{v},\texttti{r})$, $col(\texttti{v},\texttti{g})$, $col(\texttti{v},\texttti{b})$ to indicate that every vertex can be colored by every color. We add $neq(\texttti{c}, \texttti{c}')$ for $\texttti{c}, \texttti{c}' \in \{\texttti{r}, \texttti{g}, \texttti{b}\}, \texttti{c}\neq \texttti{c}'$ to distinguish colors; $L(\texttti{u})$ to $G$ for every leaf (degree 1 node) $\texttti{u}$ of $G_{col}$; $I(\texttti{v})$ for every non-leaf node $\texttti{v}$ of $G_{col}$. We again encode an arbitrary linear order $\preceq_{next}$ on all the nodes of $G$ that starts in $\texttti{s}$ and ends in $\texttti{e}$.
We construct $(\mathcal{C,T})$, where $\mathcal{C}$ is:
    \begin{align*}
     &   \mathsf{leaf}  \leftrightarrow {} L \land  {=_1} col \quad 
        \mathsf{inner}  \leftrightarrow{}  I \land  {=_3} col \quad \mathsf{valC}  \leftrightarrow {} {=_2} neq\\
   &     \mathsf{valV}  \leftrightarrow {} \forall next^*.(\texttti{s}\lor \texttti{e} \lor \mathsf{leaf} \lor \mathsf{inner})
         \land \exists next^*.\texttti{e}
    \end{align*}
and $\mathcal{T} = \{\mathsf{valV}(\texttti{s}), \mathsf{valC}(\texttti{r})$, $ \mathsf{valC}(\texttti{g})$, $ \mathsf{valC}(\texttti{b})\}$. Moreover, $H = \emptyset$, $\mu = \{\}$, and the query is $    Q = \top  \OPT P$, where $P = \bigwedge_{i = 1}^n col(\texttti{v}_i,x_i) \land \bigwedge_{(\texttti{v}_i,\texttti{v}_j)\in G_{col}} neq(x_i,x_j)$. Then, there exists a $\preceq$-repair $R$ for $(G,\mathcal{C,T},H)$ such that $\mu \in \lsem Q\rsem_{G_R}$ and ${\preceq} \in \{=, \leq\}$ iff there is a coloring of the leaf nodes that cannot be extended to a coloring of the whole $G_{col}$, i.e., Player 1 has a winning strategy.
Intuitively, the constraints with the targets ensure that every repaired graph $G_R$ provides a possible coloring of the leaves (constraint for $\mathsf{leaf}$), but leaves the three colors for the inner nodes (constraint for $\mathsf{inner}$). 
The query $Q$ together with $\mu$ then asks whether there is no valid coloring of the rest of the nodes. 
 
For AR semantics and \bgpProjs, we can use the query $\pi_\emptyset P$. Then, $\mu$ is an answer to $\pi_\emptyset P$ over every repaired graph $G_R$ iff there exists a coloring of the whole graph given any coloring of the leaves, i.e., Player 1 has no winning strategy. The idea is similar for AR semantics and \wdptProjs.
\end{proof}

\section{IAR Semantics}
\label{sect:iar}

We now turn our attention to CQA under IAR semantics. 
For \bgp, the AR- and IAR-semantics coincide as an atom $\alpha$ appears in every $\preceq$-repaired graph iff it appears in the intersection of all $\preceq$-repaired graphs.
Thus, we start by looking at the case of \bgpProj in data complexity. 
The natural idea seems to consist in modifying the basic guess-and-check algorithm from Section \ref{sect:brave} by guessing the intersection $G_\cap$ of all repaired graphs in step (1) and 
extending step (2) by a check that $G_\cap$ is indeed the desired intersection.
However, this approach introduces an additional source of complexity since we apparently need  \conp-power to check that the atoms in  $G_\cap$ are indeed contained in every repaired graph.

In this section, we show that we can, in fact, do significantly better. 
The key idea is to {\em guess a superset $G'_\cap$} of the intersection $G_\cap$ of all repaired graphs. 
Given graph $G$ and hypotheses $H$, we know that every repair (and, hence, also $G_\cap$) must be a subset of $G \cup H$. 
Now the crux is to guess {\em witnesses} for atoms that are definitely {\em not} in the intersection $G_\cap$.
That is, for each atom $\alpha \in (G \cup H) \setminus G'_\cap$  (there are at most linearly many) guess a repair $R_\alpha$ with $\alpha \not\in G_{R_\alpha}$.
To sum up, for \bgpProjEquIn, we identify no-instances as follows:
(1) guess a subset $G'_\cap \subseteq (G \cup H)$ together with repaired graphs $G_{R_\alpha}$ (and their supported models $I_\alpha$) for each $\alpha \in (G\cup H)\setminus G'_\cap$ and 
(2) check that $\mu \not \in \lsem Q \rsem_{G'_\cap}$ holds; moreover, for every $\alpha$, check that $\alpha\not\in G_{R_\alpha}$ (and that $I_\alpha$ is indeed a supported model for repair $R_\alpha$).
By the monotonicity of \bgpProjs, it does not matter if $G'_\cap$ is a strict superset of $G_\cap$. 
This algorithm, therefore, establishes the \conp-membership of \bgpProjEquIn.

For $\subseteq$-repairs, the check in step (2) has to be extended by checking that all of the guessed repairs $R_\alpha$ are  $\subseteq$-minimal. 
This only requires \conp-checks and we remain in $\pitwo$ for \bgpProjSubsIn for data and combined complexity.

Let us now consider the case of $\leq$-repairs. 
Recall from Section \ref{sect:brave} that, with \thetatwo-power, we can compute the minimal cardinality $k$ of any repair. 
We can then, again with \thetatwo-power, compute the exact cardinality $K$ of the intersection $G_\cap$ of all $\leq$-repaired graphs. 
This can be achieved by logarithmically many oracle calls of the form: ``is the size of the intersection of all $\leq$-repaired graphs less than $c$?'' (or, equivalently, are there at least $|G \cup H|$ - $c$ atoms in $G \cup H$ not contained in some $\leq$-repaired graph). 
We can then modify the guess in step (1) to guessing $G'_\cap \subseteq (G \cup H)$ with $|G'_\cap| = K$. 
This means that we get $G_\cap = G'_\cap$. 
Hence, our guess-and-check algorithm now also works for \wdpts since we no longer rely on monotonicity of the query language. 
Moreover, for data complexity, projection does no harm and we get $\thetatwo$-membership of \LCardIn for all query languages considered here. 
For the combined complexity of \PBCardIn or \wdptCardIn, we also end up in $\thetatwo$, since checking $\mu \in \lsem Q \rsem_{G'_\cap}$ just requires yet another oracle call.
By analogous considerations, we establish \thetatwo-membership also for \PBEquIn.

Again, we can show that all membership results sket\-ched here are indeed tight: 

\begin{theorem}
\label{thm:easyIAR}
The following statements are true for data complexity:
\begin{itemize}
    \item \LEquIn is \conp-c for $\mathcal{L}\in \{\bgp, $ $\bgpProj\}$.
    \item \LCardIn is \thetatwo-c for $\mathcal{L}\in \{\bgp, $ $\bgpProj, $ $\wdpt, $ $\wdptProj\}$.
    \item \LSubsIn is \pitwo-c for $\mathcal{L}\in \{\bgp, \bgpProj\}$.
\end{itemize}
The following statements are true for combined complexity:
\begin{itemize}
    \item \bgpEquIn is \conp-c.
    \item \bgpProjEquIn is \thetatwo-c.
    \item \LCardIn is \thetatwo-c for $\mathcal{L}\in \{\bgp, \bgpProj, \wdpt\}$.
    \item \LSubsIn is \pitwo-c for $\mathcal{L}\in \{\bgp, \bgpProj\}$.
\end{itemize}
\end{theorem}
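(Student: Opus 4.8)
All the membership claims are exactly the guess‑and‑check procedures sketched right before the theorem: for the \conp‑items one guesses a superset $G'_\cap\supseteq G_\cap$ together with, for every $\alpha\in(G\cup H)\setminus G'_\cap$, a repaired graph $G_{R_\alpha}$ and a supported model for $R_\alpha$ witnessing $\alpha\notin G_{R_\alpha}$, and exploits the monotonicity of \bgpProjs; for the \thetatwo‑items one first pins down $K=|G_\cap|$ with logarithmically many \np‑oracle calls and then, in one further \np‑call, reconstructs $G_\cap$ exactly (as the unique guessed subset of size $K$ admitting such witnesses) while evaluating the query; for the \pitwo‑items one additionally checks with \conp‑power that each $R_\alpha$ is $\subseteq$‑minimal. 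Hence only hardness is left to argue.

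\textbf{Hardness inherited from the \bgp case.} For \bgps the AR‑ and IAR‑semantics coincide, since an atom lies in every $\preceq$‑repaired graph iff it lies in their intersection. Therefore \bgpEquIn, \bgpCardIn and \bgpSubsIn inherit, in both data and combined complexity, the hardness already proven for \bgpEqu, \bgpCard and \bgpSubs in Theorem~\ref{thm:ARbgp}; this settles every item of the present theorem mentioning \bgp. As hardness propagates from more special to more general query languages and \bgp is a special case of \bgpProj, \wdpt, \wdptProj, this also yields \conp‑hardness of \bgpProjEquIn (data), \thetatwo‑hardness of \LCardIn for $\mathcal{L}\in\{\bgpProj,\wdpt,\wdptProj\}$ (data) and $\mathcal{L}\in\{\bgpProj,\wdpt\}$ (combined), and \pitwo‑hardness of \bgpProjSubsIn (data and combined). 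The only statement not obtained this way is the \thetatwo‑hardness of \bgpProjEquIn in combined complexity, which is where the real work lies.

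\textbf{The new case.} For \thetatwo‑hardness of \bgpProjEquIn in combined complexity I would reduce from \textsc{CardMinSAT}~\cite{DBLP:journals/lmcs/CreignouPW18}: given a satisfiable $3$‑CNF $\phi(x_1,\dots,x_n)$, is $x_1$ true in some model of $\phi$ with a minimum number of true variables? Reuse the gadget of Theorem~\ref{thm:ARbgp} (the $\preceq_{next}$‑chain with the $\mathsf{lit},\mathsf{cl},\mathsf{phi}$ constraints, plus an extra constraint and target pinning node $\texttti{s}$ into class $T$) so that, under the $=$‑preference, the repaired graphs are exactly the encodings of the models of $\phi$. On top of it, add a \emph{weight gadget}: fresh nodes $\texttti{p}_1,\dots,\texttti{p}_n\in H$ and $\geq_j$/$\leq_j$ counting constraints over the variable‑nodes that force, in the repaired graph of a model $\alpha$, the node $\texttti{p}_j$ (and a companion atom $\mathit{slot}(\texttti{T},\texttti{p}_j)$) to be present iff at least $j$ variables are true, i.e.\ iff $j\le|\alpha|$. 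Consequently $G_\cap$ retains precisely $\texttti{p}_1,\dots,\texttti{p}_{w^\ast}$, with $w^\ast=\min\{|\alpha|:\alpha\models\phi\}$, together with a fixed ``library'' of atoms ($\mathit{val}(\texttti{T}),\mathit{val}(\texttti{F})$, $\mathit{slot}(\texttti{F},\texttti{e})$, per‑clause satisfaction tuples $\mathit{clOK}(\texttti{c}_C,\cdot,\cdot,\cdot)$, and distinctness tuples $\mathit{diff}(\cdot,\cdot)$ with $\mathit{diff}(\texttti{e},\texttti{e})$ but not $\mathit{diff}(\texttti{p}_j,\texttti{p}_j)$), all kept in every repair via the target in the usual fashion. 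The query is a \bgpProj $\pi_\emptyset Q$ whose existential variables $u_1,\dots,u_n$ guess a truth assignment (through $\mathit{val}(u_i)$); $Q$ checks with the $\mathit{clOK}$‑atoms that this assignment is a model of $\phi$, hardwires $u_1=\texttti{T}$, and uses auxiliary variables $z_1,\dots,z_n$ to inject the true variables into the surviving $\texttti{p}_j$'s (via $\mathit{slot}(u_i,z_i)$ and $\mathit{diff}(z_i,z_{i'})$, letting the false variables collapse to the dummy node $\texttti{e}$). Thus the empty mapping is an answer of $\pi_\emptyset Q$ over $G_\cap$ iff $\phi$ has a model with $x_1$ true and at most $w^\ast$ true variables; by minimality of $w^\ast$ this means exactly $w^\ast$, i.e.\ the \textsc{CardMinSAT} instance is positive. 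Since $\phi$, the constraints and the query grow polynomially, this is a legitimate combined‑complexity reduction.

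\textbf{Expected main obstacle.} The delicate point is the design and correctness of this last reduction, not the other (routine) cases. Because \bgpProjs are monotone and the IAR answer is a monotone function of $G_\cap$, any encoding that merely makes the \emph{length} of the $\texttti{p}$‑chain in $G_\cap$ visible to the query would yield only \conp‑hardness; the construction must couple that length to the query's own weight bound so that the projected query is forced to recompute a minimum model — mirroring precisely the ``pin down $|G_\cap|$ by binary search, then one \np‑query'' structure of the membership proof. Getting the weight gadget to pin this down exactly, and checking that no unintended repair (tampering with the library atoms, with the $\preceq_{next}$‑chain, or with the class‑$T$ membership of $\texttti{s}$) can enlarge $G_\cap$, is the part that requires care; the remaining \np‑, \conp‑ and \pitwo‑hardness items are the expected adaptations of the reductions behind Theorem~\ref{thm:ARbgp}.
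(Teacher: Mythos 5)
Your decomposition matches the paper's exactly: the membership arguments are the guess-a-superset-of-$G_\cap$-with-witnesses procedures sketched before the theorem, all hardness for the \bgp-based items is inherited from the AR case because AR and IAR coincide for \bgps (an atom is in every $\preceq$-repaired graph iff it is in their intersection), and you correctly isolate the \thetatwo-hardness of \bgpProjEquIn in combined complexity as the single case needing a new reduction. For that case your construction shares the paper's core idea — make $G_\cap$ contain exactly a prefix $\texttti{p}_1,\dots,\texttti{p}_{w^\ast}$ of counter atoms whose length equals the minimum weight, and have the projected \bgp inject its true variables injectively into that prefix via inequality atoms — but realizes it differently. You reduce directly from \textsc{CardMinSAT} and let the constraints perform the counting with explicit $\geq_j$ number restrictions over the variable nodes, so your shapes graph grows with the instance; the paper instead first reduces to \textsc{CardMin-Precoloring} on bounded-degree graphs and replaces numeric counting by an order-based gadget (repairs guess a linear order $nextVtx$ on the vertices via a matching to the counters, and $allowed(\texttti{i}_i,\texttti{g})$ records whether some green vertex occurs at position $\geq i$, so that the intersection over all orders yields the exact minimum). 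Your version is simpler and perfectly adequate for the theorem as stated, since in combined complexity the constraints are part of the input; what it does not deliver is the paper's additional claim that all combined-complexity lower bounds hold even for \emph{fixed} constraints and hypotheses, which is precisely what the detour through bounded-degree precoloring and order-based counting buys. The correctness concerns you flag (protecting the library atoms, coupling the chain length to the query's own cardinality bound rather than merely exposing it) are exactly the delicate points the paper's proof also has to handle.
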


\begin{proof}[Proof sketch]

    The only remaining case is the \thetatwo-hardness of \bgpProjEquIn.
    To that end, we reduce from the problem \textsc{CardMin-Precoloring} that asks whether for a graph $G'$ and 
    {\em pre-coloring} (i.e., an assignment of admissible colors for each vertex) 
    $c:V(G') \rightarrow 2^{\{\texttti{r}, \texttti{g}, \texttti{b}\}}$, a given vertex $\texttti{v}_1\in V(G')=\{\texttti{v}_1, \dots, \texttti{v}_n\}$ is colored $\texttti{g}$ in some 3-coloring of $G'$ that minimizes the use of the color $\texttti{g}$.

    The bulk of the reduction, i.e., the definition of the data graph and the shapes graph is very technical and 
    therefore omitted here.
    But the intuition behind it is quite simple.
    The idea is to split the ``computation'' into two steps.
    The first step is to compute the minimum number of vertices that need to be colored $\texttti{g}$ and this is handled by the (intersection of the) repaired graphs.
    The second step is to try to compute such a minimal coloring while assigning color $\texttti{g}$ to vertex $\texttti{v}_1$.
    This is handled by the (boolean) query $\pi_\emptyset Q$
    where $Q$ is
      {\small
    \begin{align*}
      & \bigwedge_{(\texttti{v}_i,\texttti{v}_j)\in G'} neqCol(y_i,y_j)\land  Pre_{\{\texttti{g}\}}(y_1) \land \bigwedge_{i} Pre_{c(\texttti{v}_i)}(y_i) \\
       &   \land \bigwedge_{i} allowed(x_i,y_i) \land  \bigwedge_{i\neq j} neqCnt (x_i,x_j)  .
    \end{align*} 
    }
    In this query, the variables $y_i$ represent colors of the vertices of the graph $G'$ (concretely, $y_i$ corresponds to $\texttti{v}_i$) with the classes $Pre_C$ ensuring that the coloring adheres to the given pre-coloring and, additionally, $\texttti{v}_1$ is colored $\texttti{g}$.
    The property $neqCol$ in the query then ensures that adjacent vertices are assigned different colors.
    Thus, in total, the first line of the query ensures that we are dealing with a valid 3-coloring of $G'$ and $\texttti{v}_1$ is colored $\texttti{g}$.
    This part is more or less unimpacted by repairs.
    However, the repairs impact the $allowed$ property in an important way, thus restricting the number of vertices that are ``allowed'' to be colored $\texttti{g}$ (the colors $\texttti{r}, \texttti{b}$ are always ``allowed'').
    Concretely, there are $n$ possible values for the variables $x_i$ and all have to be different due to $neqCnt$.
    These values are the counters $\texttti{i}_1,\dots,\texttti{i}_n$ and $allowed(\texttti{i}_j, \texttti{g})$ is in the intersection of all repaired graphs iff $\texttti{g}$ has to be used $\geq j$-times.
    Let $k$ be such that at least $k$ vertices have to be colored $\texttti{g}$.
    Then, to satisfy the last line of the query, the vertices colored $\texttti{g}$ by the variables $y_i$ have to be matched to $\texttti{i}_1,\dots,\texttti{i}_k$, ensuring that there are at most $k$ vertices colored $\texttti{g}$.
    Thus, looking at the whole query, a valid instantiation of the $X$ and $Y$ variables constitutes a minimal 3-coloring that assigns color $\texttti{g}$ to vertex $\texttti{v}_1$.
\end{proof}

Next, we take a look at the remaining cases for \wdpts.
Recall from Section~\ref{sect:preliminaries} that we may assume that a \wdpt $Q$ is of the form $(( \dots ((P\OPT P_1)\OPT P_2) \dots)\OPT P_k)$ with $\ivar(P) = \idom(\mu)$ (we are considering \wdpts without projection here) and each of the subqueries $P_i$ contains at least one free variable outside $\ivar(P)$.

Then $\mu$ is an answer to $Q$ if and only if $\mu$ is an answer to $P$ but, for all $i$, $\mu$ \textit{cannot} be extended to an answer of $P_i$.
Checking if $\mu$ is an answer to $P$ corresponds to identifying yes-instances of \bgpEquIn; checking that $\mu$ cannot be extended to an answer of any $P_i$ corresponds to identifying no-instances of \PBEquIn (note that the variables in $\ivar(P_i) \setminus \ivar(P)$ behave like bound variables in this case, since we are not interested in a particular extension of $\mu$ to these variables but 
in {\em any} extension).
The problem \wdptPrecIn can, therefore, be seen as the intersection of \bgpPrecIn and (multiple) \bgpProjPrecIn.
This proves the \DP- and \thetatwo-membership in case of $=$, respectively for data and combined complexity, and the \DPtwo-membership in case of $\subseteq$ for both settings.
We thus get the following complexity classification for \wdpts.

\begin{theorem}
\label{thm:wdptIAR}
The following statements are true for data complexity:
\begin{itemize}
    \item \wdptEquIn is \DP-c.
    \item \wdptSubsIn is \DPtwo-c.
\end{itemize}
The following statements are true for combined complexity:
\begin{itemize}
    \item \wdptEquIn is \thetatwo-c.
    \item \wdptSubsIn is \DPtwo-c.
\end{itemize}
\end{theorem}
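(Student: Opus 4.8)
The plan is to establish matching lower bounds for the four statements in Theorem~\ref{thm:wdptIAR}, since the membership directions were already sketched in the paragraph preceding the theorem (via the decomposition of \wdptPrecIn into the intersection of \bgpPrecIn and several \bgpProjPrecIn checks). So the work is entirely on hardness. For the \DP-hardness of \wdptEquIn in data complexity, the natural target is the canonical \DP-complete problem \textsc{Sat-Unsat}: given a pair $(\phi_1,\phi_2)$ of 3-CNF formulas, is $\phi_1$ satisfiable and $\phi_2$ unsatisfiable? Recall that \DP $= \{L_1 \cap L_2 \mid L_1 \in \np, L_2 \in \conp\}$, and that the observation ``$\mu$ is an answer to $P\OPT P_1$ iff $\mu\in\lsem P\rsem$ and $\mu$ cannot be extended to an answer of $P_1$'' is exactly an intersection of a \bgpEquIn-style condition (which is \np-complete already for brave-like reasoning, but here \conp via AR/IAR — so actually the ``$\mu$ is an answer to $P$'' side gives \conp-hardness, as in Theorem~\ref{thm:easyIAR}) with a \PBEquIn-style condition (\thetatwo in combined, but \conp in data complexity by Theorem~\ref{thm:easyIAR}).

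Hmm — so for data complexity, \bgpEquIn is \conp-complete and \bgpProjEquIn is also \conp-complete, and the intersection of two \conp problems is only \conp, not \DP. So the two halves of the \wdpt answer condition cannot both be plain \bgpProjEquIn instances; one of them must contribute an \np-flavoured constraint. The key realisation is that ``$\mu$ \emph{is} an answer of $P$ under IAR'' is the \conp side, while ``$\mu$ \emph{cannot} be extended to an answer of $P_1$ under IAR'' — i.e. $\mu$ is \emph{not} a no-instance of the \emph{complement} — is the \np side. More concretely: $\mu$ being extendable to an answer of $P_1$ over $G_\cap$ is an \np-style question (guess the extension), so ``$\mu$ not extendable'' is \conp; but we want \DP, so I think the right move is to realise \textsc{Sat-Unsat} by encoding $\phi_2$'s \emph{un}satisfiability into the IAR-answerhood of $P$ (a \conp fact: a target atom lies in all repaired graphs iff $\phi_2$ is unsatisfiable, reusing the \conp-hardness gadget from Theorem~\ref{thm:easyIAR}/Theorem~\ref{thm:ARbgp}) and $\phi_1$'s satisfiability into the \emph{existence} of a repaired graph where $\mu$ extends to an answer of $P_1$ — wait, that gives extendability, which we need to \emph{fail}. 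So instead encode $\overline{\phi_1}$'s unsatisfiability... Let me restructure: I would use two disjoint sub-gadgets on disjoint vocabulary. Gadget~A forces a designated atom $\beta$ into $G_\cap$ iff $\phi_1$ is satisfiable; gadget~B forces a designated atom $\gamma$ into $G_\cap$ iff $\phi_2$ is \emph{un}satisfiable (this uses that IAR/AR membership is \conp and can certify unsatisfiability — the analogue of \LEquIn being \conp-complete). The \wdpt is $Q = P \OPT P_1$ where $P$ tests ``$\gamma \in G_\cap$'' (a \bgp atom query, contributing the \conp side) and $P_1$ tests ``$\beta \notin$ the relevant part'' in a way that $\mu$ extends to $P_1$ iff $\phi_1$ is \emph{un}satisfiable. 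Then $\mu$ is an IAR-answer to $Q$ iff ($\phi_2$ unsat) and not($\phi_1$ unsat) $=$ ($\phi_2$ unsat) and ($\phi_1$ sat), which is exactly \textsc{Sat-Unsat}. The disjointness of the two gadgets' vocabularies ensures the two repair-problems are independent, so the intersection is genuinely \DP and not collapsing.

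**For \DPtwo-hardness of \wdptSubsIn** (both data and combined complexity), the plan is entirely analogous but one level up: reduce from a \DPtwo-complete problem, e.g. the conjunction of a $\sigmatwo$-complete instance and a $\pitwo$-complete instance — two independent instances of $\exists\forall$-QBF, asking whether the first is true and the second is false. Here each ``layer'' is built from the $\subseteq$-minimality gadgetry already developed for Theorem~\ref{thm:ARbgp} (the $\exists X\forall Y\,\lnot\phi$ encoding via nested-subset repairs) and for \bgpProjSubsIn in Theorem~\ref{thm:easyIAR}. Concretely: \bgpSubsIn is \pitwo-complete and \bgpProjSubsIn is \pitwo-complete in data complexity; the \wdpt answer condition ``$\mu\in P$ but $\mu$ not extendable to $P_1$'' then intersects a \pitwo condition (IAR-answerhood of the \bgp $P$ under $\subseteq$-repairs — the ``$\forall$ repairs ... $\exists\forall$-QBF'' side) with the complement of a $\sigmatwo$ extendability condition. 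Using two gadgets on disjoint vocabulary as before, one forcing an atom into $G_\cap^{\subseteq}$ iff the $\pitwo$-instance is true, the other governing extendability of $P_1$ iff the $\sigmatwo$-instance is true (so non-extendability iff it is false), the combined query is an IAR-answer iff (\pitwo-instance true) $\wedge$ (\sigmatwo-instance false), which is \DPtwo-hard.

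**The combined-complexity \thetatwo-hardness of \wdptEquIn** needs a separate lower bound, since data-complexity \DP-hardness does not give it (and \DP $\subsetneq \thetatwo$). The target is a \thetatwo-complete problem such as \textsc{CardMinSAT} or, better, an ``odd-number-of-satisfiable-instances'' / parity-of-a-cardinality problem. I would lift the \thetatwo-hardness technique from \bgpProjEquIn in Theorem~\ref{thm:easyIAR}: there the \thetatwo power came from the intersection $G_\cap$ over $\leq$-minimal repairs encoding a cardinality-minimum (the minimum number of $\texttti{g}$-colored vertices) and a projected \bgp reading it off. But wait — \wdptEquIn in combined complexity is about $=$ (no minimality), so the \thetatwo cannot come from cardinality-minimal repairs. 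It must instead come from the combined-complexity blow-up of \bgpProjEquIn, which is \thetatwo-complete in combined complexity even under $=$. So I would reuse the \textsc{CardMin-Precoloring}-style reduction (which already yields \thetatwo-hardness of \bgpProjEquIn in combined complexity) and wrap the projected \bgp inside an OPT: set $Q = \top \OPT P$ with $P$ the projected-\bgp-body from that reduction; then $\mu=\{\}$ is an IAR-answer to $Q$ iff $\mu$ is \emph{not} extendable to an answer of $P$ over $G_\cap$, which flips a \thetatwo condition to its complement, still \thetatwo (closed under complement). Matching the exact phrasing of the source \thetatwo problem may require one extra independent \conp-gadget to turn the complement back into the original problem, but that is routine.

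**The main obstacle** I anticipate is the bookkeeping of \emph{independence of gadgets}: to get \DP (rather than plain \conp) and \DPtwo (rather than plain \pitwo), the two sub-gadgets encoding the two halves of \textsc{Sat-Unsat} must operate on fully disjoint sets of property/class/shape names and disjoint node sets, and — crucially — one must verify that a repair of the whole graph decomposes as an independent pair of repairs of the two halves, so that the intersection $G_\cap$ of all repaired graphs is the union of the two independent intersections. This is where the ``unrestricted negation in SHACL rule bodies'' could cause unwanted cross-talk, and the ``linear order via $next^*$'' trick used to enforce constraints on all nodes without naming them must be applied carefully per-gadget (perhaps with two disjoint $next$-chains, or one chain whose constraints branch on which gadget a node belongs to). Getting this modularity lemma right — essentially a ``repairs of a disjoint union are disjoint unions of repairs'' statement, which holds because the two constraint sets share no vocabulary and no targets — is the technical heart; once it is in place, each half is just an instance of a hardness construction already established in Theorems~\ref{thm:ARbgp}, \ref{thm:ARrest}, and \ref{thm:easyIAR}.
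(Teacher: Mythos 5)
Your proposal follows essentially the same route as the paper: membership via the decomposition of \wdpt-answerhood into a \bgpPrecIn yes-check on the OPT-free part and co-\bgpProjPrecIn no-checks on the OPT parts, and hardness by merging two vocabulary-disjoint instances (so that the intersection of repaired graphs of the merged instance is the union of the two separate intersections — exactly the modularity lemma you identify) and combining the two queries as $P \OPT P_1$; the paper phrases the lower bounds as one generic reduction from pairs $(\mathcal{I},\mathcal{J})$ of a \bgpPrecIn-instance and a \bgpProjPrecIn-instance rather than instantiating \textsc{Sat-Unsat} and QBF-pair gadgets directly, but this is the same argument. One local fix: since \bgps are monotone, $P_1$ cannot test ``$\beta\notin G_\cap$'' as written; the gadget must place the witness atom in $G_\cap$ exactly when $\phi_1$ is \emph{un}satisfiable (e.g.\ the $F(\texttti{s})$ atom of the \conp-hardness construction), so that extendability of $P_1$ encodes unsatisfiability and its failure encodes satisfiability — which is the equivalence you in fact state at the end.
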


\begin{proof}[Proof sketch]    
    The hardness part follows a similar idea as the membership: We reduce from instance pairs $(\mathcal{I},\mathcal{J})$, where $\mathcal{I}$ comes from \bgpPrecIn and $\mathcal{J}$ comes from \bgpProjPrecIn.
    For the reduction, we simply merge (take the union of) the data graphs, the hypotheses, the constraints, and the targets.
    Then, the intersection $G_\cap$ of all $\preceq$-repaired graphs (of the new instance) is exactly the union of the intersections $G_{\mathcal{I},\cap}$ and $G_{\mathcal{J},\cap}$ (of the old instances $\mathcal{I}$ and $\mathcal{J}$).
    Furthermore, the two mappings $\mu_\mathcal{I}, \mu_\mathcal{J}$ are merged to 
    $\mu$, while the queries $Q_\mathcal{I} = P_1$ and $Q_\mathcal{J} = \pi_Y  P_2$ are combined into $Q=(P_1 \land \top(Y)) \OPT P_2$. 
    Here, 
    $\top(Y)$ is a short-hand
    for a conjunction that allows instantiating the variables in $Y$ to any node of $\mathcal{J}$.
    Then, for $\mu$ to be an answer to $Q$ over $G_\cap$, the part $\mu_\mathcal{I}$ has to be an answer of $P_1$ over the part $G_{\mathcal{I},\cap}$ while $\mu_\mathcal{J}$ cannot be extended to an answer of $P_2$ over the part $G_{\mathcal{J},\cap}$.
\end{proof}

We now consider the last remaining cases for \wdptProj.
As we have already discussed above, \thetatwo-power suffices to compute the size $K$ of the intersection of all ($\leq$-)repaired graphs.
For $\subseteq$-minimality, \thetathree-power is needed to compute $K$ by analogous arguments and the fact that checking $\subseteq$-minimality only requires access to an \np-oracle.
For data complexity, given $K$, we can then simply ``guess'' the intersection $G_\cap$ and answer the queries in polynomial time. 
Hence, we end up in \thetatwo for \wdptProjEquIn and in \thetathree for \wdptProjSubsIn. 
For combined complexity, we have to take the \sigmatwo-completeness of evaluating \wdptProjs into account. 
This fits into the \thetathree-bound in the case of \wdptProjSubsIn but leads to \sigmatwo-completeness of \wdptProjEquIn.

\begin{theorem}
\label{prop:wdptProj}
The following statements are true for data complexity:
\begin{itemize}
    \item \wdptProjEquIn is \thetatwo-c.
    \item \wdptProjSubsIn is \thetathree-c.
\end{itemize}
The following statements are true for combined complexity:
\begin{itemize}
    \item \wdptProjEquIn, \wdptProjCardIn are \sigmatwo-c.
    \item \wdptProjSubsIn is \thetathree-c.
\end{itemize}
\end{theorem}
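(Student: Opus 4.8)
The membership bounds are precisely those sketched in the paragraph preceding the theorem --- materialise the intersection $G_\cap$ of all ($\subseteq$-)repaired graphs with \thetatwo- (resp.\ \thetathree-) power and then evaluate the \wdptProj over $G_\cap$, which costs polynomial time for data complexity and one further \sigmatwo-oracle call for combined complexity --- so the plan is to concentrate on the hardness side, which splits into three groups. For the \sigmatwo-hardness of \wdptProjEquIn and \wdptProjCardIn in combined complexity (they already contain \thetatwo, inherited from \wdptEquIn and \bgpProjEquIn), I would simply embed the evaluation problem of \wdptProjs, which is \sigmatwo-complete~\cite{DBLP:journals/tods/Letelier0PS13}. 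Given an instance $(G,\pi_X Q,\mu)$ of that problem, I build a CQA instance with data graph $G$, query $\pi_X Q$, mapping $\mu$, empty hypotheses, and a non-recursive shapes graph that \emph{pins $G$ down}: for every atom $\alpha\in G$ add a fresh shape name $\mathsf{s}_\alpha$ with a constraint that is satisfiable only when $\alpha$ is present (e.g.\ $\mathsf{s}_\alpha\leftrightarrow \texttti{c}\land\exists p.\texttti{d}$ for $\alpha=p(\texttti{c},\texttti{d})$ and $\mathsf{s}_\alpha\leftrightarrow \texttti{c}\land B$ for $\alpha=B(\texttti{c})$) together with a target $\mathsf{s}_\alpha(\texttti{c})$. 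Then nothing can be added or removed, so $(\emptyset,\emptyset)$ is the only (hence also $\leq$-minimal) repair, $G_\cap=G$, and $\mu$ is an IAR-answer for $=$ and for $\leq$ iff $\mu\in\lsem\pi_X Q\rsem_G$.

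For the \thetatwo-hardness of \wdptProjEquIn in \emph{data} complexity the query must be fixed, so I would reduce from \textsc{CardMinSAT}, fusing two constructions already present in the paper. As in the \bgpEquEx reduction of Theorem~\ref{thm:ARbgp}, the input formula $\phi$ (its literals and clauses) is stored in $G$, an arbitrary linear order $\preceq_{next}$ on $V(G)$ is encoded via a property $next$, and the SHACL constraints use $next^*$ to impose the truth-propagation conditions on all literals, on all clauses and on $\phi$ itself without having to name them; thus the repairs are exactly the satisfying assignments of $\phi$. On top of this, as in the \textsc{CardMin-Precoloring} reduction of Theorem~\ref{thm:easyIAR}, a counting gadget over fresh nodes $\texttti{i}_1,\dots,\texttti{i}_m$ ensures that a designated atom attached to $\texttti{i}_j$ survives in \emph{every} repaired graph iff every model of $\phi$ sets at least $j$ variables to true; hence $G_\cap$ exposes the minimum number $k$ of true variables over all models. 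A fixed \wdptProj then checks over $G_\cap$ whether \emph{some} model of $\phi$ with at most $k$ true variables --- matched against $\texttti{i}_1,\dots,\texttti{i}_k$ --- makes $x_1$ true. The OPT operator is essential here: it carries out the part of this check that is \emph{not monotone} in $G_\cap$ (``the candidate assignment falsifies no clause''), which is exactly why this case rises from the \conp of \bgpProjEquIn to \thetatwo.

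For the \thetathree-hardness of \wdptProjSubsIn (data and combined complexity) I would reduce from a \thetathree-complete cardinality-minimisation problem with one quantifier alternation --- e.g.\ the \sigmatwo-analogue of \textsc{CardMinSAT} asking, for $\exists Y\,\psi(X,Y)$, whether $x_1$ is true in a cardinality-minimum $X$-assignment $\sigma$ with $\exists Y\,\psi(\sigma,Y)$ (equivalently, a cardinality-minimised $2$-round colouring game). The construction layers three ingredients: the $X$-versus-$(X\cup Y)$ encoding of Theorem~\ref{thm:ARbgp}, so that a $\subseteq$-minimal repair commits to an $X$-assignment together with a $Y$-witness for $\psi$; the counting gadget from the previous paragraph, now running over the $\subseteq$-repairs, so that $G_\cap$ exposes the minimum cardinality of a ``good'' $X$-assignment; and the OPT gadget of Theorem~\ref{thm:ARrest} (equivalently the decomposition of a \wdptProj into a \bgp-query plus several projected \bgp-queries used in Theorems~\ref{thm:wdptIAR} and~\ref{prop:wdptProj}) for the residual ``$\exists\nu$ such that $\forall i$ no extension'' test. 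In combined complexity the query may depend on the input; in data complexity it must again be rendered fixed through the $next^*$ mechanism.

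I expect the hard part to be precisely this last, fixed-query \thetathree reduction: it must simultaneously (a) force $\subseteq$-minimal repairs to encode a cardinality-minimal partial structure, (b) push the relevant counter information into $G_\cap$, and (c) leave a constant-size \wdptProj enough room to perform the remaining \sigmatwo-style check over $G_\cap$ --- all without the $next$-order, the counting gadget and the $X/Y$ split interfering with one another. By comparison, the \sigmatwo-hardness embedding of the first group is routine, and the \thetatwo data-complexity reduction is mostly a careful merge of two existing constructions.
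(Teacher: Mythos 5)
Your membership arguments and your \sigmatwo-hardness embedding for \wdptProjEquIn and \wdptProjCardIn in combined complexity (pinning $G$ down with one target per atom so that $(\emptyset,\emptyset)$ is the only repair) coincide with what the paper does. The gap is in your two \emph{data-complexity} hardness reductions. In data complexity the query is fixed, so it cannot contain one variable per propositional variable or per vertex; but your plan for the \thetatwo case needs the query to ``check whether some model of $\phi$ with at most $k$ true variables makes $x_1$ true'', i.e.\ to existentially guess a candidate assignment and verify it against the clause structure of $\phi$. That verification is exactly the step that makes the \textsc{CardMin-Precoloring} reduction of Theorem~\ref{thm:easyIAR} a \emph{combined}-complexity reduction (its query has $2n$ variables), and it cannot be ``rendered fixed through the $next^*$ mechanism'': that mechanism lives in the SHACL constraints, not in the query, and $G_\cap$ is a single polynomial-size graph that cannot enumerate the exponentially many candidate assignments for a constant-size query to range over. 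The same objection applies to your \thetathree construction for \wdptProjSubsIn in data complexity.

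The paper avoids this by changing the source problem rather than the gadgetry: it reduces from \emph{list-of-pairs} problems (\textsc{List-Pair-SAT} for \thetatwo, \textsc{List-Pair-}$2$\textsc{-QBF} for \thetathree, shown complete via \textsc{LogLexMax-}$k$\textsc{-QBF}). There, all the formula-specific work is done by the repairs and the intersection: $F(\phi_i)\in G_\cap$ iff $\phi_i$ is unsatisfiable (resp.\ $Ext(\phi_i)\in G_\cap$ iff every $X$-assignment extends to a model), so the fixed query $\pi_z\big(is_{\phi}(x,y)\land F(y)\ \OPT\ is_{\psi}(x,z)\land F(z)\big)$ only has to search the list for an index with $F(\phi_i)\in G_\cap$ and $F(\psi_i)\notin G_\cap$, the OPT--projection combination handling the negative half. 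If you want to keep a \textsc{CardMinSAT}-style source, you would have to show how a constant-size \wdptProj performs the model check; as written, that step fails.
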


\begin{proof}[Proof sketch]
    The cases \wdptProjEquIn and \wdptProjCardIn immediately follow from the hardness of answering \wdptProj in combined complexity.

    For \wdptProjEquIn, as in the proof of Theorem \ref{thm:ARbgp}, we encode the semantics of propositional logic into the constraints $\mathcal{C}$ and the concrete propositional formulas $\phi$  into the data graph $G$.
    However, this time, we do not simply encode a single propositional formula into our instance but a list $L = (\phi_1, \psi_1),\dots, (\phi_n, \psi_n)$ of pairs of propositional formulas.
    $L$ is a yes-instance if there is a pair $(\phi_i,\psi_i)$ such that $\phi_i$ is unsatisfiable while $\psi_i$ is satisfiable.
    This constitutes a natural \thetatwo-complete problem.

   Intuitively, the constructed instance is such that the repairs have the (independent) choice of what truth values to assign to each variable appearing in any formula.
    The truth values of the formulas themselves are then functionally determined by this choice.
    Thus, $F(\phi_i)$ (resp. $F(\psi_i)$) is in the repaired graph if and only if $\phi_i$ (resp. $\psi_i$) evaluates to false under the given choice of the truth values for the variables in $\phi_i$ (resp. $\psi_i$).
    Since each truth value assignment constitutes a repair, $F(\phi_i)$ (resp. $F(\psi_i)$) appears in the intersection of all repaired graphs if and only if $\phi_i$ (resp. $\psi_i$) is unsatisfiable.
    Furthermore, formulas $\phi_i$ (resp. $\psi_i$) are connected to their indices via $is_\phi(\texttti{i}_i, \phi_i)$ (resp. $is_\psi(\texttti{i}_i, \psi_i)$).
    Thus, the empty mapping $\mu=\{ \, \}$ is an answer to the query $\pi_zQ$ with
    {\small
    \begin{align*}
        Q =  is_{\phi}(x,y)\land  F(y) \OPT is_{\psi}(x,z)\land  F(z)
    \end{align*}
    }
    if and only if there is an $\texttti{i}_i$ (variable $x$) such that $\phi_i$ (variable $y$) is unsatisfiable but $\psi_i$ (variable $z$) is not unsatisfiable, i.e., $\psi_i$ is satisfiable.

    For \wdptProjSubsIn, we have to combine the ideas of the previous case \wdptProjEquIn with the ideas of the case \bgpSubsEx.
    That is, we split the variables of each formula $\phi$ into $X$ variables and $Y$ variables and then interpret the list as consisting of QBFs $\forall X\exists Y \phi$.
    Then, similar to \bgpSubsEx, $\subseteq$-repairs ``know'' whether a concrete truth assignment of $X$ can be extended to a model of $\phi$ and, the intersection ``knows'' whether this is the case for all truth assignments of $X$.
    Thus, a similar query as before is enough to identify if there is a $(\phi_i,\psi_i)$ such that $\forall X_i\exists Y_i \phi_i$ and $\lnot \forall X'_i\exists Y'_i \psi_i$ hold.
\end{proof}

\section{Max-Variants of CQA}
\label{sect:max}

We now consider CQA  for settings where the existence of a repair is not guaranteed.
Following~\cite{DBLP:conf/semweb/AhmetajDPS22},
we relax the notion of repairs and aim at satisfying the maximum number of targets. 
Formally, given a graph $G$, a shapes graph $(\mathcal{C},\mathcal{T})$ and a set of hypotheses $H$, we define {\em max-repairs} for $(G,\mathcal{C},\mathcal{T}, H)$ to be pairs $(A,D)$ satisfying (1) $A \subseteq H$,  $D \subseteq G$, (2) there exists $\mathcal{T}' \subseteq \mathcal{T}$ such that $(A,D)$ is a repair 
for $(G, \mathcal{C},\mathcal{T}',H)$, and (3) for every $\mathcal{T}'' \subseteq \mathcal{T}$   with $|\mathcal{T}''| > |\mathcal{T}'|$, there exists no repair 
for $(G, \mathcal{C},\mathcal{T}'', H)$.
We thus consider the max-variant \mLPrecQ of the CQA
problem, where max-repairs play the role of repairs for given query language $\mathcal{L}$, pre-order 
$\preceq$ on the repairs, and semantics $\mathcal{S}\in \{\exists, \forall, \cap\}$.

Note that, in our hardness proofs for the \cqa problem, 
we can always choose the target problem instance consisting of a graph $G$, shapes graph $(\mathcal{C},\mathcal{T})$, 
and set of hypotheses $H$ in such a way that a repair exists. 
In this case, the \cqa and \mcqa problems coincide. 
Hence, the hardness results of each variant of the \cqa problem carry over to the  \mcqa problem. 
Therefore, whenever we manage to show that the membership result for a variant of \cqa also holds for the corresponding variant of \mcqa, we may immediately conclude the completeness for this variant of \mcqa.

The algorithms sketched in Section~\ref{sect:brave} and~\ref{sect:iar} to illustrate our membership results 
for all variants of the \cqa problem now have to be extended by ensuring maximum cardinality of the targets covered by the repaired graphs that are guessed in step (1) of our algorithms.
In case of data complexity, we thus have to combine constantly many $\np$-problems deciding the question if there exists a repair that covers a subset $\mathcal{T}' \subseteq \mathcal{T}$ of cardinality $c$. 
This requires the power of some level in the Boolean Hierarchy \bh. 
In case of combined complexity, the maximally attainable cardinality of subsets $\mathcal{T}' \subseteq \mathcal{T}$ allowing for a repair has to be determined by binary search with logarithmically many 
$\np$-oracle calls of the form ``does there exist $\mathcal{T}' \subseteq \mathcal{T}$ with $|\mathcal{T}'| \geq c$
such that $\mathcal{T}'$ allows for a repair?'', which requires $\Theta_2$-power. 

Hence, for all variants of $\cqa$ whose data complexity is at least hard for every level of $\bh$ or
whose combined complexity is at least \thetatwo-hard, respectively, 
the membership and, therefore, also the completeness carries over to $\mcqa$.
Moreover, the data (resp.\ combined) complexity of the remaining cases is never lifted beyond $\bh$
(resp.\ \thetatwo).

In Theorem \ref{thm:maxData} below, 
we state that the \bh-membership results 
for the data complexity 
are actually tight by showing \bhk-hardness for every level $k>0$ of \bh.

\begin{theorem}
\label{thm:maxData}
\mbgpEquQ is \bhk-hard data complexity for every $k>0$ and $\mathcal{S}\in \{\exists, \forall, \cap\}$.
Furthermore, \mwdptProjEquEx, \mwdptProjEqu, and \mwdptEquIn are in \bh data complexity.
\end{theorem}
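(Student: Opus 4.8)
The claim has two parts: a \bhk-hardness result for \mbgpEquQ under data complexity (for every $k>0$ and every semantics $\mathcal S$), and a \bh-membership result for the three "max-variant" problems \mwdptProjEquEx, \mwdptProjEqu, \mwdptEquIn in data complexity. I will treat the two parts separately, since they use quite different techniques. For the membership part, the strategy is to push the generic reduction idea sketched in Section~\ref{sect:max} as far down as possible: recall that under fixed constraints, query, and targets, deciding the existence of a repair covering a \emph{specified} subset $\mathcal T'\subseteq\mathcal T$ of a \emph{fixed} size is an \np-question, and the CQA check $\mu\in\sem Q_{G_R}$ (or its negation) is in \ptime for \wdptProj. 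Since $\mathcal T$ is fixed, there are only constantly many candidate subsets and constantly many achievable cardinalities; the answer to the \mcqa instance can then be written as a fixed Boolean combination of \np-predicates of the form ``there exists a repaired graph covering exactly the targets in $\mathcal T'$ and satisfying the query condition at $\mu$'' — which is, by definition, a \bh-predicate. I would make this precise by writing the acceptance condition as an explicit nested disjunction over the maximal attainable cardinality $c^\ast$ and then over which $\mathcal T'$ of size $c^\ast$ is realized, conjoined with the per-repair query check; for the $\exists$ and $\forall$ semantics this is straightforward, and for the $\cap$ case (\mwdptEquIn) one additionally uses the "guess a superset of the intersection" trick from Section~\ref{sect:iar}, which also only contributes \np-power and finitely many coordinates since $G\cup H$ has linear size and the query/targets are fixed. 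The main technical care here is bookkeeping: one must verify that the whole expression is a \emph{fixed-length} Boolean combination (independent of the input $G,H,\mu$), which is exactly where fixedness of $\mathcal C,\mathcal T,Q$ is used.

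For the hardness part, the canonical complete problem for \bhk\ is a chain of $k$ SAT/UNSAT instances: deciding, given propositional formulas $\phi_1,\dots,\phi_k$ (w.l.o.g.\ in 3-CNF) with $\phi_1 \Rightarrow \phi_2 \Rightarrow \cdots$ (or, more conveniently, the "alternating" version), whether the largest index $i$ with $\phi_i$ satisfiable is odd (this is essentially the standard \bhk-complete problem, see e.g.\ Creignou–Pichler–Woltran style characterizations). My plan is to reduce from this to \mbgpEquQ in data complexity, i.e.\ with \emph{fixed} constraints $\mathcal C$ and \emph{fixed} targets $\mathcal T$, encoding all of $\phi_1,\dots,\phi_k$ into the single data graph $G$. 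The key design idea is to reuse the propositional-logic gadget from the proof sketch of Theorem~\ref{thm:ARbgp} — the $\mathsf{lit}/\mathsf{cl}/\mathsf{phi}/\mathsf{val}$ constraints that force a repaired graph to encode a truth assignment and propagate truth values through clauses — but now instantiate it $k$ times in parallel (over disjoint nodes) within one graph, with one target $\mathsf{val}_j(\texttti{s}_j)$ per block $j$ (since $k$ is a constant, we may use $k$ distinct shape names, or a single parametrized $\mathsf{val}$ reached via a $next^*$-style linear order per block). Crucially I will arrange the hypotheses/constraints so that block $j$ can be ``turned on'' (its target validated) iff $\phi_j$ is satisfiable, and so that the targets are \emph{nested}: validating the target of block $j$ is only possible in conjunction with validating those of blocks $1,\dots,j-1$. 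Then a max-repair validates exactly the targets of blocks $1,\dots,i^\ast$ where $i^\ast$ is the largest prefix of simultaneously-satisfiable formulas, and the query $Q$ (a fixed \bgp selecting, via a class name, whether the last activated block has "odd" parity, marked in $G$) distinguishes odd from even $i^\ast$. The nesting is what lets a single \mcqa answer capture the \bhk\ alternation.

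The \textbf{main obstacle} I anticipate is precisely engineering this \emph{nesting of targets} under the constraint that $\mathcal C$ and $\mathcal T$ are fixed (data complexity) and that SHACL targets are plain shape atoms with no "priority" mechanism: by definition max-repairs maximize the \emph{cardinality} $|\mathcal T'|$ of satisfied targets, not any prefix order, so I cannot directly say "satisfy a prefix". I would resolve this by making the targets genuinely nested at the \emph{satisfiability} level — e.g.\ block $j$'s sub-graph shares the "activation" node chain with blocks $<j$, so that there is \emph{no} repair satisfying target $j$ without also satisfying targets $1,\dots,j-1$ (this can be enforced by the existential-restriction / $next^*$-reachability gadgets, which must be traversed from $\texttti{s}_1$ through $\texttti{s}_j$). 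Then the set of satisfiable target-subsets is exactly $\{\,\{1\},\{1,2\},\dots,\{1,\dots,i^\ast\}\,\}$ together with $\emptyset$, so the unique maximum-cardinality choice is the prefix $\{1,\dots,i^\ast\}$ — and max-repairs behave as desired. A secondary subtlety is making the encoding work uniformly for all three semantics $\mathcal S\in\{\exists,\forall,\cap\}$: since $Q$ is a \bgp and, in the constructed instance, the relevant ground atom either lies in \emph{every} max-repaired graph or in \emph{none} (it is functionally determined by $i^\ast$, which is the same across all max-repairs), brave, AR and IAR all coincide on $\mu$, so one reduction serves all three — I would state and use exactly this "determinacy" observation, mirroring the one used in the proof sketches of Theorems~\ref{thm:easyIAR} and~\ref{prop:wdptProj}.
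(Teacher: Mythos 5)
Your membership argument is essentially the paper's: with $\mathcal{C},\mathcal{T},Q$ fixed there are only constantly many candidate subsets $\mathcal{T}'\subseteq\mathcal{T}$, so the whole decision reduces to a fixed Boolean combination of \np-predicates (plus the ``superset of the intersection'' trick for the $\cap$ case), which is exactly how the paper places these problems in \bh. That half is fine.

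The hardness half, however, has a genuine gap. You choose a different \bhk-complete reference problem (``is the largest index of a satisfiable $\phi_i$ odd'') and propose one \emph{nested} target per block, so that max-repairs validate precisely the prefix $\{1,\dots,i^\ast\}$. The problem is what happens to blocks $j>i^\ast$: their targets are dropped by every max-repair, and under the supported-model semantics a dropped target imposes \emph{no} requirement on the corresponding part of the graph --- the repair may freely add/delete the $T$/$F$ class atoms of block $j$ and still admit a supported model. Consequently (a) the ground atom that your fixed \bgp (after applying $\mu$) is supposed to test --- some class atom encoding ``the last activated block is odd'' --- cannot be functionally determined by $i^\ast$, since any constraint computing it must aggregate over block states that are garbage for $j>i^\ast$; and (b) your ``determinacy'' claim, which you need for the three semantics $\exists,\forall,\cap$ to coincide, fails for the same reason: different max-repairs can disagree on these uncontrolled atoms. (There is also the unhandled degenerate case $i^\ast=0$, where \emph{every} pair $(A,D)$ with a supported model is a max-repair.) The paper avoids all of this with a two-tier target set: a block $\mathcal{T}_1$ of ``hard'' targets that enforce the propositional-logic gadget on \emph{all} blocks unconditionally, made to dominate by taking $|\mathcal{T}_1|$ strictly larger than the number of remaining ``soft'' per-formula targets $\mathcal{T}_2$, so that every max-repair validates all of $\mathcal{T}_1$ and then simply maximizes the number of formulas set to true. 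Once you add such an always-enforced tier, every block is pinned down in every max-repair, the answer atom becomes functionally determined, and in fact the nesting of targets becomes unnecessary --- at which point your construction essentially collapses into the paper's reduction from the ``$\exists i$ with $\phi_i$ unsatisfiable and $\psi_i$ satisfiable'' formulation of \bhtwok.
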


\begin{proof}[Proof sketch]
    We proceed similarly to the case \wdptProjEquIn.
    That is, we encode the semantics of a list of propositional formulas $(\phi_1,\psi_1),\dots,(\phi_k,\psi_k)$ into the constraints $\mathcal{C}$ and the formulas themselves into the data graph $G$.
    However, this time, the length of the list is fixed to $k$ and thus, it becomes \bhtwok-complete to decide if there is a pair $(\phi_i,\psi_i)$ such that $\phi_i$ is unsatisfiable while $\psi_i$ is satisfiable.

    We construct an instance of \mcqa with targets $\mathcal{T}_1$ that intuitively require repaired graphs to simulate the semantics of propositional formulas (similar to the proof of Theorem~\ref{thm:ARbgp}), and targets $\mathcal{T}_2$ that require the formulas to be satisfied.
    The targets $\mathcal{T}_1$ are chosen in such a way that (1) either all are satisfied or none
    and (2) there always exist repairs that validate $\mathcal{T}_1$.
    Thus, by choosing $|\mathcal{T}_1|>|\mathcal{T}_2|$, staying true to the semantics of propositional formulas is a ``hard'' constraints.
    On the other hand, $\mathcal{T}_2$ is such that, for each of the formulas $\phi_i$ and $\psi_i$, $\mathcal{T}_2$ 
    contains a target that is validated when the corresponding formula evaluates to true.
    These constitute ``soft'' constraints, and max-repairs try to satisfy as many formulas as possible.
    Therefore, a max-repair can ``decide'' whether there exists a pair $(\phi_i,\psi_i)$ as desired. In particular,
    a formula $\phi_i$ is definitely unsatisfiable, if the corresponding target in  $\mathcal{T}_2$ is not validated by a max-repair.
\end{proof}

In Theorem \ref{thm:maxCombined} below, we state that, for combined complexity, the \thetatwo-membership results are actually tight.

\begin{theorem}
\label{thm:maxCombined}
\mbgpEquQ is \thetatwo-hard combined complexity for $\mathcal{S}\in \{\exists, \forall, \cap\}$.
Furthermore, \mbgpProjEquEx, \mwdptEqu, and \mbgpEquIn are in \thetatwo combined complexity.
\end{theorem}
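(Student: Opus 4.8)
The plan is to separate the routine membership part from the new hardness part. For \emph{membership}, I would follow the general argument sketched just before the theorem. For each of \mbgpProjEquEx, \mwdptEqu, and \mbgpEquIn, first run a binary search using $O(\log n)$ adaptive $\np$-oracle calls of the form ``is there a repair whose repaired graph validates at least $c$ of the targets in $\mathcal{T}$?'' to obtain the maximal cardinality $m$ of a validatable target subset; a max-repair is then exactly a repair $R=(A,D)$ whose repaired graph $G_R$ validates some $\mathcal{T}'\subseteq\mathcal{T}$ with $|\mathcal{T}'|=m$. One further $\np$-call then settles the query: for \mbgpProjEquEx, ``is there a max-repair $R$ and an extension $\nu$ of $\mu$ with $\nu(Q)\subseteq G_R$?''; for \mwdptEqu, ``is there a max-repair $R$ together with a witness that $\mu\notin\lsem Q\rsem_{G_R}$ (either $\mu$ already fails the mandatory part of $Q$, or it extends to an answer of an OPT-part)?'', returning the negation; and for \mbgpEquIn, ``is there a max-repair $R$ and an atom $\alpha\in\mu(Q)$ with $\alpha\notin G_R$?'', again negated (using, as in Section~\ref{sect:iar}, that for \bgp an atom lies in $G_\cap$ iff it lies in every max-repaired graph). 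In each case $O(\log n)$ $\np$-calls suffice, placing all three problems in \thetatwo.

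For \emph{hardness} of \mbgpEquQ I would reduce, when $\mathcal{S}=\exists$, from the \thetatwo-complete problem \textsc{CardMinSAT}~\cite{DBLP:journals/lmcs/CreignouPW18} (given a CNF $\phi$ over $v_1,\dots,v_n$, is $v_1$ true in some model of $\phi$ with a minimum number of true variables?), and, when $\mathcal{S}\in\{\forall,\cap\}$, from its complement (again \thetatwo-complete, since \thetatwo is closed under complement). This mirrors the \textsc{CardMinSAT}-based reduction in the proof of Theorem~\ref{thm:ARbgp}, with the $\leq$-preference on repairs used there now replaced by the cardinality maximisation inherent to the max-repair semantics. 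Concretely I would reuse the construction for the case \bgpEquEx in the proof of Theorem~\ref{thm:ARbgp}: the data graph $G$ encodes the literals and clauses of $\phi$ (each literal put in both classes $T$ and $F$), linked by properties $dual,or,and$ and a linear order realised by a property $next$, and the fixed constraints $\mathcal{C}$ use $next^*$ to force, in any repaired graph, a consistent truth assignment and to functionally determine the truth value of $\phi$ at a node $\texttti{s}$. The only substantial change is in the targets: besides the ``hard'' target $\mathsf{val}(\texttti{s})$ enforcing this propagation, which is always satisfiable, I add (i) a ``soft'' target $\mathsf{false}(\texttti{u}_i)$ on the positive-literal node $\texttti{u}_i$ of each variable $v_i$, validated iff $v_i$ is set to false, and (ii) $n+1$ fresh ``copies'' $\mathsf{model}(\texttti{q}_1),\dots,\mathsf{model}(\texttti{q}_{n+1})$ of a target validated iff $\phi$ evaluates to true under the chosen assignment. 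The hypotheses $H$ need only supply the handful of atoms that repairs must add and can be taken fixed, so this is a proper combined-complexity lower bound.

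Since the $n+1$ model-copies outnumber the $n$ soft targets, a max-repair makes $\phi$ true whenever it can: if $\phi$ is satisfiable, the max-repairs are precisely the repaired graphs encoding the cardinality-minimal models of $\phi$, and if $\phi$ is unsatisfiable, the unique max-repair is the all-false assignment. To finish, for brave semantics take $Q=T(z)$ and $\mu=\{z\mapsto\texttti{u}_1\}$: $\mu$ is an answer iff some max-repair contains $T(\texttti{u}_1)$, i.e.\ iff $\phi$ is satisfiable and $v_1$ is true in some minimal model --- exactly \textsc{CardMinSAT}; for AR and IAR semantics (which coincide on \bgp) take instead $Q=F(z)$ with the same $\mu$, so that $\mu$ is an answer iff $F(\texttti{u}_1)$ lies in every max-repaired graph, i.e.\ iff $\phi$ is unsatisfiable or $v_1$ is false in every minimal model --- the complement of \textsc{CardMinSAT}. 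This proves \thetatwo-hardness of \mbgpEquQ for all three semantics; since \bgp is subsumed by both \bgpProj and \wdpt, hardness propagates to \mbgpProjEquEx (via $\mathcal{S}=\exists$), \mwdptEqu (via $\mathcal{S}=\forall$) and \mbgpEquIn (via $\mathcal{S}=\cap$), so together with the membership bounds above all three are \thetatwo-complete. I expect the main obstacle to be verifying that the max-repairs coincide \emph{exactly} with the cardinality-minimal models --- getting the target multiplicities right and handling the unsatisfiable corner case --- while the inherited $next^*$-machinery keeps $\mathcal{C}$ and $H$ fixed, as the lower bound requires.
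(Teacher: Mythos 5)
Your membership argument matches the paper's. Your hardness reduction, however, has a genuine gap: under max-repair semantics no target is intrinsically ``hard'' --- hardness must be engineered by multiplicity, and a \emph{single} copy of $\mathsf{val}(\texttti{s})$ is not enough. Count the targets in your construction: $1$ copy of $\mathsf{val}(\texttti{s})$ versus $n + (n+1) = 2n+1$ soft targets. Now consider the empty repair $(\emptyset,\emptyset)$: since every literal node and the $\phi$-node start out in \emph{both} $T$ and $F$, it validates all $n$ targets $\mathsf{false}(\texttti{u}_i)$ and all $n+1$ targets $\mathsf{model}(\texttti{q}_j)$ (these shapes are defined locally, so they do not require the propagation enforced by $\mathsf{val}$), for a total of $2n+1$. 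A ``legal'' repair encoding a minimal model with $k\geq 1$ true variables validates at most $1 + (n-k) + (n+1) \leq 2n+1$. So the degenerate repair ties (or wins), becomes a max-repair, and contains both $T(\texttti{u}_1)$ and $T(\phi)$, which destroys the correspondence with minimal models (e.g.\ under brave semantics $\mu$ is then always an answer). This is precisely why the paper's construction introduces $4n+3$ extra start nodes $\texttti{s}_1,\dots,\texttti{s}_{4n+3}$ and puts $4n+4$ copies of the $\mathsf{val}$-target into $\mathcal{T}_1$, strictly outnumbering all $4n+3$ soft targets, so that every max-repair is forced to validate the propagation constraints. Your construction can be fixed the same way (here, at least $2n+2$ copies of the hard target), but as written the reduction fails.

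Two further remarks. First, your handling of AR/IAR by reducing from the \emph{complement} of \textsc{CardMinSAT} is a legitimate alternative to the paper's route: the paper instead adds one tie-breaking target $\mathsf{litT}(\texttti{x}_1)$ (and doubles the ``false'' targets via variable copies so the tie-breaker cannot perturb minimality), which makes a \emph{single} instance work simultaneously for $\exists$, $\forall$, and $\cap$; your version needs two separate reductions but avoids the tie-breaking bookkeeping, and both are sound since \thetatwo is closed under complement. Second, your unsatisfiable corner case (``the unique max-repair is the all-false assignment'') is correct only once the hard-target multiplicity is fixed; with a single $\mathsf{val}(\texttti{s})$ it fails for the same reason as above.
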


\begin{proof}[Proof sketch]
    We again reduce from the \thetatwo-complete problem \textsc{CardMinSat}.
    That is, given a propositional formula $\phi$ in 3-CNF, the question is whether the variable $\texttti{x}_1$ is true in some model of $\phi$ that minimizes the number of variables set to true.
        
    We construct an instance of \mcqa with multiple types of targets $\mathcal{T}_1, \mathcal{T}_2, \mathcal{T}_3, \mathcal{T}_4$.
    The targets $\mathcal{T}_1$ are essentially the same as before, ensuring that max-repaired graphs simulate the semantics of propositional logic.
    $\mathcal{T}_2$ asks for $\phi$ to be satisfied.
    $\mathcal{T}_3$ is such that, for every variable $\texttti{x}_i$, $\mathcal{T}_3$ contains two targets $t(\texttti{x}_i), t'(\texttti{x}_i)$ which are validated when $\texttti{x}_i$ is set to false.
    Lastly, $ \mathcal{T}_4$ consists of a single target that asks for $ \texttti{x}_1$ to be set to true.
    By setting $|\mathcal{T}_2| > |\mathcal{T}_3| + |\mathcal{T}_4|$ we ensure that, if $\phi$ is satisfiable, then all max-repairs will represent models of $\phi$.   
    Thus, max-repairs set both $\phi$ and $\texttti{x}_1$ to true if and only if there is a minimal model of $\phi$ in which $\texttti{x}_1$ is true.
\end{proof}

As we have already discussed at the beginning of this section, the complexity of all other cases remains the same.

\begin{theorem}
For all cases \mLPrecQ not mentioned in Theorem \ref{thm:maxData} or \ref{thm:maxCombined}, \mLPrecQ is $\mathsf{C}$-c if and only if \LPrecQ is $\mathsf{C}$-c.
This holds for data and combined complexity.
\end{theorem}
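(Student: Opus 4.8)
The plan is to obtain the statement as a uniform bookkeeping corollary of two facts already indicated in the discussion preceding Theorems~\ref{thm:maxData} and~\ref{thm:maxCombined}. Fix $(\mathcal{L},\preceq,\mathcal{S})$ and let $\mathsf{C}$ be the class for which \LPrecQ is complete according to Table~\ref{tab:results}; since a decision problem can be complete for at most one of the classes occurring there, it suffices to show that \mLPrecQ is $\mathsf{C}$-complete as well, and the claimed biconditional follows at once.

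First I would show that hardness always carries over, regardless of the case. Every hardness reduction for \LPrecQ in Sections~\ref{sect:brave} and~\ref{sect:iar} produces a target instance $\Psi=(G,\mathcal{C,T},H)$ that already admits a full repair, i.e.\ the largest subset of $\mathcal{T}$ for which a repair exists is $\mathcal{T}$ itself. On such instances the set of max-repairs coincides with the set of $\preceq$-repairs, so a mapping $\mu$ is an $\mathcal{S}$-answer over the max-repairs precisely when it is an $\mathcal{S}$-answer over the $\preceq$-repairs; hence the very same reduction witnesses $\mathsf{C}$-hardness of \mLPrecQ.

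Next I would show that membership carries over for every case whose complexity classification is not already settled by Theorems~\ref{thm:maxData} and~\ref{thm:maxCombined} (also via the carry-over of hardness from more special to more general query languages and of membership in the opposite direction); by Table~\ref{tab:results} these are exactly the cases whose data complexity contains \bh and whose combined complexity contains \thetatwo, i.e.\ whose data- and combined-complexity classes both lie among \thetatwo, \sigmatwo, \pitwo, \DPtwo, \thetathree, \sigmathree, \pithree. I would reuse the guess-and-check procedures for \LPrecQ from Sections~\ref{sect:brave} and~\ref{sect:iar}: each of them guesses one or more repaired graphs together with supported models witnessing validation, and from such a model the set $\mathcal{T}_R\subseteq\mathcal{T}$ of targets validated by $G_R$ can be read off in \ptime. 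To turn such a procedure into one for \mLPrecQ, I would precompute $m^\ast=\max\{\,|\mathcal{T}'| \mid \mathcal{T}'\subseteq\mathcal{T}\text{ and }(G,\mathcal{C},\mathcal{T}',H)\text{ has a repair}\,\}$ and then run the unchanged \LPrecQ procedure subject to the extra \ptime check that $|\mathcal{T}_R|=m^\ast$ for every guessed repaired graph $G_R$; by the definition of max-repairs this decides \mLPrecQ. In data complexity $\mathcal{T}$ is part of the fixed input, so $m^\ast$ costs only a constant number of \np-queries, hence lies within some level of \bh; in combined complexity $m^\ast$ is obtained by binary search over $\{0,\dots,|\mathcal{T}|\}$ with logarithmically many \np-oracle calls of the form ``is there $\mathcal{T}'\subseteq\mathcal{T}$ with $|\mathcal{T}'|\ge c$ admitting a repair?'', hence lies within \thetatwo. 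Since $\mathsf{C}$ contains \bh (data) resp.\ \thetatwo (combined) and is, by standard arguments, closed under prepending such a computation -- for \DPtwo one uses that it contains \thetatwo and is closed under intersection with \ptime-decidable sets, for the other classes one uses closure under prepending a polynomial-time computation with an \np-oracle -- the whole procedure stays within $\mathsf{C}$. Hence \mLPrecQ lies in $\mathsf{C}$, and with the hardness step, \mLPrecQ is $\mathsf{C}$-complete.

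I expect the main obstacle to be not any of these individually routine steps but the bookkeeping itself: one must go through Table~\ref{tab:results}, together with the carry-over relations between query languages, and confirm that the cases left open by Theorems~\ref{thm:maxData} and~\ref{thm:maxCombined} are precisely those whose data complexity contains \bh \emph{and} whose combined complexity contains \thetatwo, so that the \bh-overhead resp.\ \thetatwo-overhead of determining $m^\ast$ never changes the classification.
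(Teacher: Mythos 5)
Your proposal is correct and follows essentially the same route as the paper: hardness carries over because every hardness instance for \LPrecQ admits a repair validating all of $\mathcal{T}$ (so max-repairs and $\preceq$-repairs coincide there), and membership carries over by first computing the maximal number of validatable targets with constantly many \np-oracle calls (data complexity, within \bh) or logarithmically many (combined complexity, within \thetatwo) and absorbing this overhead into any class at or above \bh resp.\ \thetatwo. Your explicit treatment of the closure properties and of the bookkeeping over Table~\ref{tab:results} is, if anything, slightly more detailed than the paper's own discussion.
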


\section{Conclusion and Future Work}
\label{sect:conclusion}

In this work, we have carried out a thorough complexity analysis of the CQA problem for data graphs with SHACL constraints and we have pinpointed the complexity of this problem in a multitude of settings -- considering various query languages, inconsistency-tolerant semantics of CQA, and preference relations on repairs. 
Several new proof techniques had to be developed to obtain these results. For instance, this has allowed us -- in contrast to \cite{shqi-etal-2021kr} -- to prove all our hardness results without making use of recursion in shapes constraints. 
Hence, by carrying over our techniques to the problems studied in \cite{shqi-etal-2021kr},
we could extend the hardness to the non-recursive cases left open there.  

The targets we considered here are shape atoms of the form $\mathsf{s}(\texttti{c})$. The SHACL standard also allows for richer targets over class and property names. Specifically, it allows to state that a data graph must validate a shape name at each node of a certain class name, or domain (or range) of a property name. All the membership results in this paper can be immediately updated to support these richer targets. 
Some features of SHACL (such as disjointness and closed constraints) are not considered here,
but we strongly believe they do not change the complexity results.  

We considered the supported model semantics. Other validation semantics, which are inspired from the stable model semantics~\cite{DBLP:conf/www/AndreselCORSS20} or the well-founded semantics~\cite{DBLP:conf/datalog/ChmurovicS22} for logic programs with negation are left for future work. Again, we do not expect changes 
of the complexity: all our hardness proofs should hold since the three semantics coincide for non-recursive SHACL, and the membership  results can be immediately carried over  since they rely on the problem of validation, which is not harder than for the supported model semantics.

An immediate direction for future work is to investigate the notion of optimal repairs of prioritized data graphs over SHACL constraints and specifically, the notions of global, Pareto and completion optimality of repairs~\cite{DBLP:journals/amai/StaworkoCM12}. In particular, it would be of interest to study the computational properties of the main reasoning tasks such as repair checking and repair existence, and to analyze CQA for each of the three notions of optimal repairs and the various settings studied in this paper. Devising practical algorithms for CQA over SHACL constraints and in particular, identifying meaningful and relevant fragments of SHACL that admit better complexity results is also an important next step. 

\section*{Acknowledgements}

This work was supported by 
the Vienna Science and Technology Fund (WWTF) [10.47379/ICT2201,10.47379/VRG18013]. In addition, Ahmetaj was supported by the FWF and netidee SCIENCE project T1349-N.

\bibliographystyle{kr}
\bibliography{main}

\clearpage\clearpage
\appendix
\section*{Appendix}

\section{More on Well-Designed SPARQL Queries}

We first want to revisit well-designed SPARQL and the OPT-normal form introduced in \cite{DBLP:journals/tods/PerezAG09} and recalled in Section~2. 
Recall that a SPARQL query $Q$ is \emph{well-designed}, if there is no subquery $Q' = (P_1 \,\text{OPT}\ P_2)$ of $Q$ and a variable $x$, such that $x$ occurs in $P_2$, outside of $Q'$, but not in $P_1$. 
For instance, the query 
$$(\text{Prof}(x) \,\text{ OPT}\ \text{knows}(x,y)) \,\text{OPT}\ \text{email}(y,z)$$
is not well-designed, because $y$ occurs in $\text{knows}(x,y)$ (right operand of OPT) and in $\text{email}(y,z)$ (``outside'') but not in $\text{Prof}(x)$ (left operand of OPT). 
In contrast, the query 
$$(\text{Prof}(x) \, \text{OPT}\ \text{teaches}(x,y)) \,\text{OPT}\ \text{email}(x,z)$$ 
is well-designed.

The essence of the OPT normal form is that the OPT operator must not occur in the scope of the $\wedge$-operator. 
Suppose that this condition is violated, e.g., an expression of the form $P \wedge (Q \OPT R)$. 
We only consider well-designed SPARQL here. 
That is, every variable occurring in the right operand of the OPT and outside the OPT-expression must also occur in the left operand of the OPT. 
This means that every variable occurring in both $P$ and $R$ must also occur in $Q$. 
It is easy to verify that then the following equivalence holds: 
$P \wedge (Q \OPT R) \equiv
(P \wedge Q) \OPT R$
By exhaustively applying this equivalence to transform an expression of the form shown on the left-hand side into the expression on the right-hand side, one can ultimately get an equivalent \wdpt in OPT-normal form in polynomial time~\cite{DBLP:journals/tods/PerezAG09}. 
In other words, when imposing the well-designedness restriction, we may, w.l.o.g., further restrict ourselves to OPT-normal form. 

In \cite{DBLP:journals/tods/Letelier0PS13}, 
it was shown that checking if a mapping $\mu$ is an answer to a \wdptProj $\pi_X Q$ over a graph $G$ is $\sigmatwo$-complete (combined complexity). 
To see the  $\sigmatwo$-membership, we briefly sketch a guess-and-check algorithm for this task, where the ``guessed'' object is polynomially bounded w.r.t.\ the size of the input and the check is feasible in \conp. 
It is convenient to consider the tree representation $T(Q)$ of the \wdpt $Q$ (referred to as ``well-designed pattern tree'', \textsc{wdPT} for short) introduced in \cite{DBLP:journals/tods/Letelier0PS13}, namely: 
a rooted, unordered tree, where the nodes are labeled by the \bgps and each parent-child relationship corresponds to an OPT. 

One may then proceed as follows: (1) first determine the minimal subtree $T'$ of $T(Q)$ whose set of free variables (i.e., the variables of $X$ occurring in the \bgps at the nodes of $T'$) is exactly $\idom(\mu)$. 
If no such $T'$ exists or if $T'$ does not include the root of $T(Q)$, then return ``no''.  
Otherwise, (2) let $Q'$ denote the conjunction of all \bgps labeling the nodes in $T'$ and let $Y$ denote the remaining variables in $Q'$; then check if $\mu$ can be extended to $\nu$ with $\idom(\nu) = \idom(\mu) \cup Y$, such that $\nu \in \semG{Q'}$.
If not, then return ``no''. 
Otherwise, (3) let $N$ denote the set of nodes $n$ in $T$, such that $n$ is a descendant of some leaf node $m_n$ in $\mathcal{T'}$ whose labeling \bgp contains a variable in $X \setminus var(Q')$ and there is no node between $n$ and $m_n$ with this property. 
Then, for every $n \in N$ with ancestor $m_n$ in $\mathcal{T'}$, let $Q_n$ denote the conjunction of $Q'$ and all \bgps on the path from $m_n$ to $n$ (in particular, including $n$). 
If there exists a $\nu \in \semG{Q'}$ which, for all $n \in N$, can {\em not} be extended to an answer of $\pi_X Q_n$, then return ``yes''. 
Otherwise return ``no''. 

In other words, for checking if some $\mu$ is an answer to a \wdptProj $Q$ over a graph $G$, one has to check if \emph{there exists} a $\nu$ such that \emph{there does not exist} an extension of $\nu$ to an answer of one of the queries $Q_n$. 
Or, phrased as a guess-and-check algorithm that proves the $\sigmatwo$-membership \cite{DBLP:journals/tods/Letelier0PS13}: guess $\nu$ and check that there does not exist such an extension of $\nu$.
It is now also clear, why we were allowed to assume a specific form of the \wdptProj $\pi_X Q$ when checking if some mapping $\mu$ is an answer to $Q$ over a graph $G$. 
More precisely, throughout this work, we are assuming that $Q$ is of the form $(( \dots ((P\OPT R_1)\OPT R_2) \dots)\OPT R_k) $, such that $\ivar(P) \cap X = \idom(\mu)$ and each of the subqueries $R_i$ contains at least one variable from $X \setminus \ivar(P)$. 
The corresponding \textsc{wdPT} $T$ has depth 1, the root is labeled with \bgp $P$, and the $k$ child nodes are labeled with the \bgps $R_1, \dots R_k$.
In the above sketched algorithm, $P$ corresponds to $Q'$ and the conjunction of $P$ with each of the \bgps $R_i$ corresponds to one of the queries $Q_n$.

\section{More on SHACL}

The complexity of validation for SHACL with recursive constraints under the supported model semantics is \np-complete in both data and combined complexity, and for the case without recursion is \ptime-complete in combined complexity and $\mathsf{NLogSpace}$-complete in data complexity~\cite{DBLP:conf/semweb/CormanRS18,shqi-etal-2021kr}.

We note that, as observed in~\cite{shqi-etal-2021kr,DBLP:conf/wollic/Ortiz23}, the supported models semantics for SHACL constraints is closely connected with the semantics of Description Logics terminologies extended with closed predicates~\cite{DBLP:journals/entcs/FranconiIS11,DBLP:conf/ijcai/LutzSW13}. Specifically, SHACL constraints of the form $\mathsf{s} \leftrightarrow \varphi$ can be viewed as \emph{concept definitions} of the form $\mathsf{s} \equiv \varphi$, where $s$ is viewed as a class name and $\varphi$ is a concept expression in the DL $\mathcal{ALCOIQ}$ extended with regular role expressions and equalities. Note that both class and shape names are viewed as class names.  Then, the set of constraints $\mathcal{C}$ can be naturally viewed as a TBox $\mathcal{C}_T$ with only concept definitions, i.e., a terminology. The data graph $G$ and the target set $\mathcal{T}$ can be viewed as an ABox. Moreover, we assume that all shape names are open predicates and (the rest of the) class and property names are closed predicates, in the sense that their interpretation in models of the TBox is exactly as given in the data (ABox). Validation of a data graph $G$ against a shapes graph $(\mathcal{C,T})$ can then be naturally viewed as checking satisfiability of the corresponding DL knowledge base $(\mathcal{C}_T, G \cup \mathcal{T})$.

The correspondence with DLs with closed predicates can be immediately observed when defining the semantics of SHACL in terms of interpretations~\cite{DBLP:conf/wollic/Ortiz23,DBLP:conf/lpnmr/BogaertsJB22}. More precisely, an interpretation consists of an non-empty domain $\Delta = N_N$ and an interpretation function $\cdot^I$ that maps each shape name or class name $W \in N_S \cup N_C$ to a set $W^I \subseteq \Delta$ and each property name $p \in N_P$ to a set of pairs $p^I \subseteq \Delta \times \Delta$. The evaluation of complex shape expressions w.r.t.\,an interpretation $I$ is given in terms of a function $\cdot^I$ that maps a shape expression $\phi$ to a set of nodes, and a path expression $E$ to a set of pairs of nodes as in Table~\ref{tab:evaluation2}.

\begin{table}
  \renewcommand{\arraystretch}{1.5}
  \begin{tabular}{l}
    $\top^{I}=  \Delta$ \hspace{.3cm} $\texttti{c}^{I} =  \{\texttti{c}\}$  \hspace{.3cm} ${B}^{I} =  \{\texttti{c} \mid \texttti{c} \in B^I\}$ \\%
      ${ \mathsf{s} }^{I} =  \{\texttti{c}\mid \texttti{c}\in \mathsf{s}^I\}$ \quad  ${ p  }^{I}=   \{ (\texttti{a},\texttti{b})\mid (\texttti{a},\texttti{b})\in p^I \}$ \\
     $({ p^-  })^{I}=  \{ (\texttti{a},\texttti{b})\mid (\texttti{b},\texttti{a})\in p^I \}$   \\
    $({ E\cup E'  })^{I}=   { E  }^{I} \cup  {  E'  }^{I}$ \\
    $({ E\cdot E'  })^{I}=  \{(\texttti{c},\texttti{d}) \in \Delta \times \Delta \mid \exists \texttti{d'} \text{ with} $ \\ 
    \hspace{2cm}$(\texttti{c},\texttti{d'}) \in E^I \text{ and } (\texttti{d'},\texttti{d}) \in E'^I\}$ \\
    $({E^{*}  })^{I}=  \{(\texttti{a},\texttti{a}) \mid \texttti{a} \in \Delta\} \cup { E}^{I} \cup ({E\cdot E})^{I} \cup \cdots$ \\
    $ ({ \neg \phi})^{I}=  \Delta\setminus {  \phi}^{I}$   \quad \quad  $({ \phi_1\land \phi_2 })^{I} =  { \phi_1 }^{I}\cap { \phi_2 }^{I}$ \\
    $(\atleast{n}{E}{\phi})^{I}=  \{ \texttti{c} \in \Delta \mid \exists \texttti{d}_1, \ldots, \texttti{d}_n \text{ with } \} $ \\ 
    \hspace{2cm} $(\texttti{c}, \texttti{d}_i)  \in E^I, \texttti{d}_i \in \phi^{I} \text{ for each } i \in [1,n] \}$  \\
    $({ E = E' })^{I}=  \{ \texttti{c} \in \Delta \mid \forall \texttti{d} \in \Delta:(\texttti{c},\texttti{d}) \in {E}^{I} \mbox{ iff } (\texttti{c}, \texttti{d}) \in {E'}^{I} \}$   
  \end{tabular}
  \caption{Evaluation of complex shapes}
  \label{tab:evaluation2}
\end{table}

Then, an interpretation $I$ satisfies a constraint $\mathsf{s} \leftrightarrow \varphi$ if $\mathsf{s}^I = \varphi^I$, and $I$ satisfies a shapes graph $(\mathcal{C,T})$, if $I$ satisfies all constraints in $\mathcal{C}$ and $a \in \mathsf{s}^I$ for each $\mathsf{s}(a) \in \mathcal{T}$. 

Now, an interpretation $I$ is a shape assignment for a data graph $G$, if $B^I = \{c \mid B(c) \in G\}$ for each class name $B$, and $p^I = \{(c,d) \mid p(c,d) \in G\}$ for each property name $p$. Intuitively, $I$  is a shape assignment for $G$ if class and property names are interpreted as specified by $G$. Finally, a data graph $G$ validates a shapes graph $(\mathcal{C,T})$ if there exists a shape assignment $I$ for $G$ that satisfies $(\mathcal{C,T})$. For more details on SHACL and the relation with DLs, we refer to~\cite{DBLP:conf/wollic/Ortiz23,DBLP:conf/lpnmr/BogaertsJB22}.

\section{Full Proofs for Brave and AR Semantics}

\begin{theoremAppendix}
\label{thm:app:first}
The following statements are true for data complexity:
\begin{itemize}
    \item \LEquEx is \np-c for $\mathcal{L}\in \{\bgp$, $\bgpProj$,  $\wdpt$, $\wdptProj\}$.
    \item \LCardEx and \LCard are \thetatwo-c for $\mathcal{L}\in \{\bgp$, $\bgpProj$,  $\wdpt, \wdptProj\}$.
    \item \LSubsEx is \sigmatwo-c for $\mathcal{L}\in \{\bgp$, $\bgpProj$,  $\wdpt$, $\wdptProj\}$.
    \item \LEqu is \conp-c for $\mathcal{L}\in \{\bgp$, $\bgpProj$,  $\wdpt$, $\wdptProj\}$.
    \item \LSubs is \pitwo-c for $\mathcal{L}\in \{\bgp$, $\bgpProj$,  $\wdpt$, $\wdptProj\}$.
\end{itemize}
The following statements are true for combined complexity:
\begin{itemize}
    \item \LEquEx is \np-c for $\mathcal{L}\in \{\bgp$, $\bgpProj\}$.
    \item \LCardEx is \thetatwo-c for $\mathcal{L}\in \{\bgp$, $\bgpProj\}$.
    \item \LSubsEx is \sigmatwo-c for $\mathcal{L}\in \{\bgp$, $\bgpProj$,  $\wdpt$, $\wdptProj\}$.
    \item \LEqu is \conp-c for $\mathcal{L}\in \{\bgp$, $\wdpt\}$.
    \item \LCard is \thetatwo-c for $\mathcal{L}\in \{\bgp$, $\wdpt\}$.
    \item \LSubs is \pitwo-c for $\mathcal{L}\in \{\bgp$, $\bgpProj$,  $\wdpt\}$.
\end{itemize}
\end{theoremAppendix}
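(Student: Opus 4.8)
The plan is to prove every claimed classification by a matching upper and lower bound, exploiting two transfer principles: membership bounds carry from more expressive query languages to less expressive ones, and hardness bounds carry in the opposite direction. So in each block it suffices to prove membership for the richest applicable language among $\{\bgp,\bgpProj,\wdpt,\wdptProj\}$ and hardness for $\bgp$; moreover a data‑complexity lower bound (with $\mathcal{C},\mathcal{T},H,Q$ fixed) is a fortiori a combined‑complexity lower bound, and since no extra cost arises for $\bgps$ in combined complexity, the $\bgp$ combined‑complexity results follow from the data‑complexity ones.

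For the upper bounds I would make precise the guess‑and‑check schemes sketched in Section~\ref{sect:brave}. The base routine nondeterministically guesses a candidate repaired graph $G_R=(G\setminus D)\cup A$ with $D\subseteq G$, $A\subseteq H$, together with a candidate supported model $I$ for $G_R$; verifying that $I$ is a supported model is polynomial, and checking $\mu\in\lsem Q\rsem_{G_R}$ (resp.\ $\mu\notin\lsem Q\rsem_{G_R}$) is polynomial in data complexity for \emph{every} SPARQL query, which gives the \np‑membership of \LEquEx and, by complementation, the \conp‑membership of \LEqu in data complexity for all four languages. For $\le$‑repairs I would first pin down the minimal repair size $k$ by binary search using $O(\log|G\cup H|)$ calls to an \np‑oracle for ``is there a repair of size $\le c$?'', then make one further oracle call enforcing in addition $|A|+|D|=k$; this lands in \thetatwo. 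For $\subseteq$‑repairs I would extend step~(2) by a \conp‑subroutine verifying that no strictly smaller repair exists, raising brave to \sigmatwo and AR to \pitwo. For combined complexity I would track the two extra sources of cost: a projection adds one nondeterministic guess for the binding of the bound variables (still absorbed by \np, \thetatwo, \sigmatwo, and by \pitwo for \PBSubset), while a \wdpt query adds a \conp‑check that $\mu$ extends to none of the OPT‑branches (absorbed only where a \conp‑check is already present in data complexity); I would then check entry by entry that exactly the listed combined‑complexity cases stay put, the others being postponed to Theorem~\ref{thm:ARrest}.

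For the lower bounds I would run one master reduction encoding the semantics of propositional logic into a \emph{fixed, non‑recursive} constraint set $\mathcal{C}$ and the concrete $3$‑CNF instance $\phi$ into the data graph $G$: the nodes of $G$ are the literals, the clauses, a node $\texttti{s}$ standing for $\phi$ itself, and a sink $\texttti{e}$; every such node is put into both classes $T$ and $F$, and the constraints for $\mathsf{lit},\mathsf{cl},\mathsf{phi}$ force each node into exactly one of $T,F$ in any repaired graph while propagating truth along the $dual$, $or$, $and$ edges, so that repairs correspond precisely to truth assignments under which $\phi$ is evaluated; a property $next$ encodes a linear order through all nodes so that the single target $\mathsf{val}(\texttti{s})$ imposes those three constraints on every node without naming them, and simultaneously pins down the scaffolding predicates $Lit,Cl,Phi,dual,or,and,next$ so no repair can tamper with them. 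With $Q=T(x)$ and $\mu=\{x\mapsto\texttti{s}\}$, brave‑answerhood is exactly satisfiability of $\phi$, giving \np‑hardness of \bgpEquEx and, via an analogous dual gadget, \conp‑hardness of \bgpEqu. Penalizing $A$ and $D$ for setting variables true turns $\le$‑repairs into minimal models, yielding a reduction from \textsc{CardMinSat} and hence \thetatwo‑hardness of \bgpCardEx and \bgpCard. To climb one further level, split the variables of $\phi$ into an $X$‑block and a $Y$‑block and arrange the instance so that a repair may instantiate either only $X$ or both $X$ and $Y$, with the key property that, on a fixed $X$‑part, a ``both''‑repair is $\subseteq$‑contained in the corresponding ``$X$‑only''‑repair and must in addition encode a model of $\phi$; then existence of a $\subseteq$‑repair instantiating only $X$ is equivalent to $\exists X\,\forall Y\,\neg\phi$, proving \sigmatwo‑hardness of \bgpSubsEx and dually \pitwo‑hardness of \bgpSubs.

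The main obstacle I expect is the correctness of the $\subseteq$‑minimal reductions: the gadgets must be designed so that the two families of repairs are genuinely $\subseteq$‑comparable exactly when they agree on the $X$‑part (and incomparable otherwise), and so that no spurious $\subseteq$‑minimal repair can ``cheat'' by instantiating $X$ only partially, by leaving some node in both $T$ and $F$, or by deleting scaffolding atoms; the constraints together with the target have to lock all of this down, and verifying this closure argument is the delicate part. A secondary nuisance is the combined‑complexity bookkeeping: one must check case by case that the projection guess and the OPT \conp‑check do not compound into a higher level in the cases claimed here, and do genuinely compound in the cases deferred to Theorem~\ref{thm:ARrest}.
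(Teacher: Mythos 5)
Your proposal is correct and follows essentially the same route as the paper: the identical guess-and-check membership arguments (with binary search for $\leq$-repairs, an extra \conp-check for $\subseteq$-minimality, and case-by-case absorption of the projection guess and the OPT \conp-check in combined complexity), and the same hardness reductions encoding propositional logic into fixed constraints with repairs as truth assignments, \textsc{CardMinSat} for \thetatwo, and the $X$/$Y$-split with $\subseteq$-comparable repairs for \sigmatwo{} and \pitwo. The one delicate point you flag --- locking down the scaffolding predicates and the comparability of the two repair families --- is handled in the paper exactly as you anticipate, via the $next$-order target and the $NoExt$ hypothesis atom.
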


\begin{proof}
    {\small$\blacktriangleright$} (Hardness for \bgpEquEx data complexity) 
    We reduce from 3-CNF \textsc{SAT}.
    I.e., instances are propositional formula $\phi(\texttti{X}) = \bigwedge_{j} (\texttti{l}_{1j} \lor \texttti{l}_{2j} \lor \texttti{l}_{3j})$, where $\texttti{l}_{ij}$ are literals.
    For the encoding, the nodes of the data graph $G$ consist of:
    \begin{itemize}
        \item All literals $\texttti{x}, \lnot \texttti{x}$ for variables $\texttti{x}\in\texttti{X}$.
        \item Clauses $\texttti{C}_{j}= (\texttti{l}_{1j} \lor \texttti{l}_{2j} \lor \texttti{l}_{3j})$.
        \item Two fresh auxiliary nodes $\texttti{s}, \texttti{e}$. 
    \end{itemize}
    Literals $\texttti{l}_{ij}$ are added to the class $Lit$, clauses $\texttti{C}_{j}$ to $Cl$, and $\texttti{s}$ to $Phi$.
    Thus, intuitively, $\texttti{s}$ represents $\phi$.
    
    We now connect the nodes as they are connected logically:
    \begin{itemize}
        \item Literals $\texttti{l}$ are connected to their duals $\texttti{l}_d$ via $dual(\texttti{l}, \texttti{l}_d)$.
        \item Literals $\texttti{l}$ are connected to the clauses $\texttti{C}$ they appear in via $or(\texttti{l}, \texttti{C})$.
        \item Each clause $\texttti{C}_{j}$ is connected to $\texttti{s}$ via $and(\texttti{C}_{j}, \texttti{s})$. 
    \end{itemize}
    
    Then, literals $\texttti{l}$, clauses $\texttti{C}$ and $\texttti{s}$ are all added to both the class $T$ and $F$, indicating that they can be either \textit{true} or \textit{false}.
    Furthermore, an arbitrary linear order $\preceq_{next}$ starting in $\texttti{s}$ and ending in $\texttti{e}$ is introduced on all the nodes of $G$, where $next$ indicates immediate predecessors and successors ($\preceq_{next}$ is the transitive closure of $next$ ). 
    
    Consider the following constraints $\mathcal{C}\colon$
    \begin{align}
    \begin{split}
        \mathsf{lit} \leftrightarrow {} & Lit \land ((T \land \lnot F \land \exists dual.F) \\
         & \quad {} \lor (F  \land \lnot T \land \exists dual.T)) 
    \end{split}\label{cnst:lit}\\
    \begin{split}
        \mathsf{cl} \leftrightarrow {} & Cl \land {=_3}or^- \land {=_1}and \land (\\
        & (F \land \lnot T \land \forall or^-.F) \\
        & \lor (T \land \lnot F \land \exists or^-.T) ) 
    \end{split}\label{cnst:cl}\\
    \begin{split}
        \mathsf{phi} \leftrightarrow {} & Phi \land ((\forall and^-.T \land T \land \lnot F) \\
        & \quad {}\lor (\exists and^-.F \land F \land \lnot T))
    \end{split}\label{cnst:phi}\\
    \begin{split}
        \mathsf{val} \leftrightarrow {} & \forall next^*.(\texttti{e} \lor \mathsf{phi} \lor \mathsf{lit} \lor \mathsf{cl}) \\
        & \land \exists next^*.\texttti{e}
    \end{split}\label{cnst:val}
    \end{align}
    The target of the shapes graph is $\mathcal{T}=\{\mathsf{val}(\texttti{s})\}$,
    $H=\emptyset$, $\mu = \{x\mapsto \texttti{s}\}$, and the query is $Q=T(x)$.

    First consider the Constraint \ref{cnst:val} of the form $\mathsf{val} \leftrightarrow \phi$.
    For $G_R$ to be a repaired graph, there has to be a supported model $I \supseteq G_R$ such that $\texttti{s}\in \llbracket \phi \rrbracket^I$, i.e., $\texttti{s}$ has to ``satisfy'' Constraint \ref{cnst:val}.
    Thus, $\texttti{s}$ has to be connected to $\texttti{e}$ via multiple jumps along $next$.
    But to reach $\texttti{e}$ from $\texttti{s}$ through $next$, all nodes of $\texttti{n}\in V(G_R)$ have to be traversed and, thus, $(\texttti{s},\texttti{n})\in \llbracket next^*\rrbracket^I$.
    Consequently, each node except $\texttti{e}$ has to satisfy one of the first 3 constraints.
    As $Lit$ can only contain literals, $Cl$ can only contain clauses, and $Phi$ can only contain $\texttti{s}$ ($H$ is empty), literals have to satisfy the Constraint~\ref{cnst:lit}, clauses Constraint~\ref{cnst:cl}, and $\texttti{s}$ Constraint~\ref{cnst:phi}.

    Intuitively, each repair has the (independent) choice of what truth values to assign to each variable $\texttti{x}$ by retaining either $T(\texttti{x})$ and not $F(\texttti{x})$ or the other way around.
    The truth values of the dual literals, the clauses, and the formula itself (represented by $\texttti{s}$) are then all functionally determined by this choice due to the constraints.
    Notice that only the classes $T$ and $F$ can differ between a repaired graph $G_R$ and $G$.
    Consequently, $T(\texttti{s})$ is in the repaired graph if and only if $\phi$ evaluates to true under the given choice of the truth values for the variables.
    Since each truth value assignment constitutes a repair, $\mu$ is an answer to $Q$ over some repaired graph if and only if $\phi$ is satisfiable.
    Thus, the reduction is correct. 

    {\small$\blacktriangleright$} (Hardness for \bgpEqu data complexity)
    Let us reconsider the previous reduction but with the query $Q=F(x)$.
    $\mu$ is an answer to $Q$ over all repaired graphs if and only if $\phi$ is unsatisfiable.
    We conclude, \bgpEqu is \conp-hard.
    
    {\small$\blacktriangleright$} (Hardness for \bgpCardEx data complexity) 
    We reduce from the \thetatwo-complete problem \textsc{CardMinSat} \cite{DBLP:journals/lmcs/CreignouPW18}.
    An instance is a propositional formula $\phi$ in 3-CNF including the variable $\texttti{x}_1$.
    $\phi$ is a yes-instance if the variable $\texttti{x}_1$ is true in some model minimizing the number of variables set to true.
    
    For the encoding, we take the graph $G$ as before and augment it.
    To that end, we add the following new nodes:
    \begin{itemize}
        \item For each variable $\texttti{x}\in \texttti{X}$ an additional copy $\texttti{x}_{copy}$ (copying \texttti{x}).
        \item For each variable $\texttti{x}\in \texttti{X}$ an additional copy $\texttti{s}_{\texttti{x}}$ (copying \texttti{s}).
        \item A copy $\texttti{s}_{copy}$ (copying \texttti{s}). 
    \end{itemize}
    
    The copies $\texttti{x}_{copy}$ are added to $F$ and the class $Copy_\texttti{x}$ while the copies $\texttti{s}_{\texttti{x}}, \texttti{s}_{copy}$ are added to $T$ and $Copy_\texttti{s}$.
    The class memberships to $F$ and $T$ are not intended to represent truth values for the copies but to add a ``cost'' to setting a variable to true and to not satisfy the formula.
    Furthermore, we add $\texttti{x}_1$ to $X_1$ and $\texttti{e}$ to $T,F$ and $E$.
    
    As for new properties, we connect copies via $copy_\texttti{x}(\texttti{x}_{copy}, \texttti{x})$, $copy_\texttti{s}(\texttti{s}_{\texttti{x}}, \texttti{s})$, $copy_\texttti{s}(\texttti{s}_{copy}, \texttti{s})$
    Furthermore, another arbitrary linear order $\preceq_{next'}$ starting in $\texttti{s}$ and ending in $\texttti{e}$ is introduced on all copies $\texttti{x}_{copy}, \texttti{s}_{\texttti{x}}, \texttti{s}_{copy}$ and additionally $\texttti{x}_1$ (together with $\texttti{s}, \texttti{e}$), where $next'$ indicates immediate predecessors and successors.
    
    The constraints then consist of the previous ones plus the following:
    \begin{align*}
        \mathsf{copy}_{\texttti{x}} \leftrightarrow {} & Copy_\texttti{x} \land \exists copy_\texttti{x} \land (\\
        &( F \land \exists copy_\texttti{x}.F) \\
        & \lor (\lnot F \land \exists copy_\texttti{x}.T))  \\
        \mathsf{copy}_{\texttti{s}} \leftrightarrow {} & Copy_\texttti{s} \land  \exists copy_\texttti{s} \land ( \\
        & (\lnot T \land \exists copy_\texttti{s}.F) \\
        & \lor ( T \land \exists copy_\texttti{s}.T))  \\
        \mathsf{x}_{1} \leftrightarrow {} & X_1\\
        \mathsf{e} \leftrightarrow {} & E \land  ( \\
        & (T \land {=}_2 next'^{-*}.((Phi \land T) \lor (X_1 \land T))) \\
        & \lor ( F \land \exists next'^{-*}.((Phi \land F) \lor (X_1 \land F)))  \\
        \mathsf{val}' \leftrightarrow {} & \forall next'^*.(\texttti{s} \lor \mathsf{e} \lor \mathsf{x}_1 \lor \mathsf{copy}_{\texttti{x}} \lor \mathsf{copy}_{\texttti{s}}) \\
        & \land \exists next'^*.\texttti{e}
    \end{align*}

    The targets of the shapes graph are $\mathcal{T}=\{\mathsf{val}(\texttti{s})$, $\mathsf{val}'(\texttti{s})\}$,
    $H=\emptyset$, $\mu = \{x\mapsto \texttti{e}\}$, and the query is $Q=T(x)$.
    
    Intuitively, repairs now have to remove copy variables $\texttti{x}_{copy}$ from $F$ if the corresponding variable $\texttti{x}$ is set to true.
    Thus, in term of the cardinality of that repair, it is punished per variable set to true.
    At this point, notice that previously all repairs had the same cardinality, no matter the truth assignment they represent.
    Continuing, if the formula $\phi$ evaluates to false under the variable assignment that the repair represents, it has to remove all copy variables $\texttti{s}_\texttti{x}$ and $\texttti{s}_{copy}$ from $T$.
    Hence, in such a case, it is punished per variable in $\phi$ plus 1.
    Consequently, the $\leq$-repairs contain the minimal models of $\phi$ (and no other models).    
    Moreover, $\mu$ is an answer to $Q$ over some $\leq$-repaired graph $G_R$, i.e., $\texttti{e}$ is set to true, if and only if $\texttti{x}_1$ is true in some minimal model of $\phi$.
    
    {\small$\blacktriangleright$} (Hardness for \bgpCard data complexity)
    Let us reconsider the previous reduction but with the query $Q=\land F(x)$.
    We have to distinguish two cases.
    If $\phi$ is satisfiable, $\mu$ is an answer to $Q$ over all $\leq$-repaired graphs if and only if $\texttti{x}_1$ is set to false in every minimal model of $\phi$.
    If $\phi$ is unsatisfiable, $\mu$ is an answer to $Q$ over all $\leq$-repaired graphs.
    Thus, the reduction is valid for co-\textsc{CardMinSat}.
    We conclude, \bgpCard is \thetatwo-hard.
    
    {\small$\blacktriangleright$} (Hardness for \bgpSubsEx data complexity) 
    To show \sigmatwo-hardness we have to extend the previous reductions, proceeding further up the polynomial hierarchy.
    To that end, we reduce from the $\sigmatwo$-complete problem $2$-\textsc{QBF} on formulas in 3-CNF.
    The instance consist of a propositional formula $\phi(X,Y) = \bigwedge_{j} (\texttti{l}_{1j} \lor \texttti{l}_{2j} \lor \texttti{l}_{3j})$.
    $\phi(X,Y)$ is a yes-instance if there exists a truth assignment for the $X$-variables that is not extendable to a model of $\phi$.
    
    For the encoding, we take the data graph $G$ from before, excluding the class $Lit$, and extend it further:
    \begin{itemize}
        \item For each literal $\texttti{l}$ of a $X$ (resp. $Y$) variable we add $Lit_X(\texttti{l})$ (resp. $Lit_Y(\texttti{l})$) to $G$.
        \item We add $NoExt(\texttti{s})$ to $H$ to indicate when \textit{not} instantiate the $Y$ variables.
    \end{itemize}
    
    Now, consider the following set of constraints $\mathcal{C}$:
    \begin{align*}
        \mathsf{lit}_X \leftrightarrow {} & Lit_X \land ((T \land \lnot F \land \exists dual.F) \\
         & \quad {}\lor (F  \land \lnot T \land \exists dual.T))  \\
         \mathsf{lit}_Y \leftrightarrow {} & Lit_Y \land ( \\
         & (\lnot \exists (next\cup next^-)^*.NoExt \land ( \\
         & \quad(T \land \lnot F \land \exists dual.F) \\
         & \quad {}\lor (F  \land \lnot T \land \exists dual.T))  \\
         & \lor ( \exists  (next\cup next^-)^*.NoExt \land \lnot F  \land \lnot T))  \\
        \mathsf{cl} \leftrightarrow {} & Cl \land {=_3}or^- \land {=_1}and \land (\\
        & (\lnot \exists  (next\cup next^-)^*.NoExt \land ( \\
        & \quad (F \land \lnot T \land \forall or^-.F) \\
        & \quad {} \lor (T \land \lnot F \land \exists or^-.T))  \\
         & \lor ( \exists  (next\cup next^-)^*.NoExt \land \lnot F  \land \lnot T))  \\
        \mathsf{phi} \leftrightarrow {} & Phi \land ((\lnot NoExt\land T \land \lnot F \land \forall and^-.T)  \\
        & \quad {}\lor (NoExt\land \lnot F \land \lnot T))  \\
        \mathsf{val} \leftrightarrow {} & \forall next^*.(\texttti{e} \lor \mathsf{phi} \lor \mathsf{lit}_X \lor \mathsf{lit}_Y \lor \mathsf{cl}) \\
        & \land \exists next^*.\texttti{e}
    \end{align*}
    The targets are again $\mathcal{T} = \{\mathsf{val}(\texttti{s})\}$.
    The query is now $Q = NoExt(x)$ and $\mu = \{x \mapsto \texttti{s}\}$.
    
    The idea is as follows.
    The repairs have the choice of whether to instantiate only the $X$ variables or both the $X$ and the $Y$ variables.
    In the first case, $\texttti{s}$ (representing the formula) has to be added to the class $NoExt$ and the $X$ variables instantiated but, at the same time, all literals of the $Y$ as well as the clauses and the formula itself have to be removed from both $T$ and $F$.
    In the second case, the formula remains outside the class $NoExt$, but both the $X$ and the $Y$ variables have to be instantiated, and the truth values of the dual literals, the clauses, and the formula itself are then all functionally determined by this choice.
    Furthermore, the formula itself has to evaluate to true.
    
    Thus, intuitively, repairs correspond 1-1 to instantiations of only the $X$ variables or a combination of the $X$ and the $Y$ that satisfy the formula.
    When the instantiation is the same on the $X$ variables, a repair of the second kind is a subset of the repair of the first kind.
    Thus, $\mu$ is an answer to $Q$ over some $\subseteq$-repaired graph $G_R$ if and only if the instantiation of the $X$ variables represented by $G_R$ cannot be extended to model of $\phi$. 
    
    {\small$\blacktriangleright$} (Hardness for \bgpSubs data complexity) 
    Let us reconsider the previous reduction but with the query $Q=T(x)$.
    Then, $\mu$ is an answer to $Q$ over all $\subseteq$-repaired graphs if and only if every truth assignment of the $X$ variables can be extended to a model of $\phi$.
    Thus, we can conclude that \bgpSubs is \pitwo-hard.
    
    {\small$\blacktriangleright$} (Membership for \wdptProjPrecEx data complexity)
    To check whether $\mu$ is an answer to $\pi_X Q$ with $ Q =(( \dots ((P\OPT P_1)\OPT P_2) \dots)\OPT P_k) $
    over some $\preceq$-repaired graph, we can do the following:
    \begin{enumerate}
        \item Guess a $\preceq$-repaired graph $G_R$.
        \item Go over all $\mu_P\colon (\ivar(P)\setminus \idom(\mu))\rightarrow V(G)$ and check that $(\mu \cup \mu_P)$ is an answer to $P$ over $G_R$.
        \item If that is the case, go over all $P_i$ and $\mu_{P_i}\colon (\ivar(P_i)\setminus \ivar(P)) \rightarrow V(G)$ and check that $(\mu \cup \mu_P\cup \mu_{P_i})|_{\ivar(P_i)}$ is no answer to $P_i$ over $G_R$.
    \end{enumerate}
    As we are considering data complexity, steps 2 and 3 together only take polynomial time.
    Guessing an arbitrary repaired graph can be done by a NTM as we can guess the repair $R$ together with a supporting model for $G_R$.
    Thus, \wdptProjEquEx is in \np.
    Likewise, guessing an $\subseteq$-repaired graph can be done by a NTM with access to an \np-oracle as we can guess the repair $R$ together with a supporting model for $G_R$ and then check that there is no smaller repair $R'$.
    Thus, \wdptProjSubsEx is in \sigmatwo.
    
    For $\leq$-repairs, we have to argue slightly differently.
    First, we compute the size of a minimal repair with logarithmically many \np-oracle calls using binary search.
    The oracle calls are of the form: ``Does there exist a repair of size $\leq c$''.
    Given the minimal size $k$, we can then ask whether there exists a repair $R$ of size $k$ such that $\mu$ is an answer to $\pi_XQ$ over $G_R$.
    This is then clearly an \np question and thus, \wdptProjCardEx is in \thetatwo.
    
    {\small$\blacktriangleright$} (Membership for \bgpProjEquEx and \bgpProjCardEx combined complexity)
    Here, we have to combine the guess for the repair with the guess for the bound variables.
    Concretely, for a query $\pi_X Q(X,Y)$, we have to guess a ($\leq$-minimal) repaired graph $G_R$ and a $\mu_Y\colon Y\rightarrow V(G)$ such that $\mu\cup \mu_Y$ is an answer to $Q$ over $G_R$.
    For \bgpProjEquEx, we can already conclude \np-membership.
    For $\leq$-repairs, we can again compute the size of a minimal repair with logarithmically many oracle calls beforehand.
    Thus, \bgpProjCardEx is in \thetatwo.
    
    {\small$\blacktriangleright$} (Membership for \wdptProjSubsEx combined complexity)
    As in the previous membership, we have to combine the guess for the bound variables with the guess for the repair.
    However, even more, we have to then check if there is an extension to the variables in the optional part with an oracle call, similar to how we check for $\subseteq$-minimality of the repair, i.e., we can do the following:
    \begin{enumerate}
        \item Guess a repair $R$, together with a $\mu_P \colon (\ivar(P) \setminus \idom(\mu))\rightarrow V(G)$.
        \item Check that $G_R$ is valid and $\mu\cup \mu_P$ is an answer to $P$ over $G_R$.
        \item Check that there is no repair $R'\subsetneq R$.
        \item For each $P_i$, check that there is no $\mu_{P_i} \colon (\ivar(P_i)\setminus \ivar(P)) \rightarrow V(G)$ such that $(\mu\cup\mu_P\cup\mu_{P_i})|_{\ivar(P_i)}$ is an answer to $P_i$ over $G_R$.
    \end{enumerate}
    Note that all checks only require access to an \np-oracle.
    Thus, \wdptProjSubsEx is in \sigmatwo.
    
    {\small$\blacktriangleright$} (Membership for \wdptProjPrec data complexity)
    We show membership for the co-Problem, i.e., whether $\mu\colon X \rightarrow V(G)$ is not an answer to $\pi_X Q$ over some $\preceq$-repaired graph.
    Notice that given a repaired graph $G_R$, in terms of data complexity, it is equally easy to check that $\mu$ is no answer to $Q$ as it is to check that it is an answer to $Q$.
    Thus, the methods developed for \wdptProjPrecEx can be reused.
    Hence, \wdptProjEqu is in \conp \wdptProjCard is in \thetatwo, and \wdptProjSubs is in \pitwo.

    {\small$\blacktriangleright$} (Membership for \wdptEqu and \wdptCard combined complexity)
    Again, we show membership for the co-Problem, i.e., whether $\mu\colon \ivar(P) \rightarrow V(G)$ is not an answer to $Q = (( \dots ((P\OPT P_1)\OPT P_2) \dots)\OPT P_k)$ over some ($\leq$-minimal) repaired graph.
    We can argue similarly to the cases \bgpProjEquEx and \bgpProjCardEx, but this time, we guess a $P_i$ and the values of $\ivar(P_i)$.
    I.e., we can proceed as follows:
    \begin{enumerate}
        \item If $\mu$ is an answer, guess a ($\leq$-)repair $R$.
        \item Guess a $P_i$ and a $\mu_{P_i}\colon \ivar(P_i) \setminus \ivar(P)\setminus \ivar \rightarrow V(G)$.
        \item Is $(\mu\cup \mu_{P_i})|_{\ivar(P_i)}$ an answer to $P_i$ or is $\mu$ \textit{not} an answer to $P$ over $G_R$?
    \end{enumerate}
    For arbitrary repairs, this process is in \np and, thus, \wdptEqu is in \conp.
    As before, for $\leq$-repairs, we can first compute the size of a minimal repair and then follow the process.
    Therefore, \wdptCard is in \thetatwo.
    
    {\small$\blacktriangleright$} (Membership for \wdptSubs and \bgpProjSubs combined complexity)
    One last time, let us consider the co-problem, i.e., whether $\mu$ is not an answer to $Q$ over some $\subseteq$-repaired graph.
    To achieve \sigmatwo membership, we can simply answer the query after guessing the repair with a single oracle call.
    I.e., we can proceed as follows:
    \begin{enumerate}
        \item Guess a repair $R$.
        \item Check that there is no repair $R' \subsetneq R$.
        \item Check that $\mu$ is no answer to $Q$ over $G_R$
    \end{enumerate}
    Clearly, all checks only require access to an \np-oracle.
    Thus, \wdptSubs and \bgpProjSubs are both in \pitwo.
\end{proof}

\begin{theoremAppendix}
The following statements are true for combined complexity:
\begin{itemize}
    \item \LEquEx and \LCardEx are \sigmatwo-c for $\mathcal{L}\in \{\wdpt$, $\wdptProj\}$.
    \item \bgpProjEqu and \bgpProjCard are \pitwo-c.
    \item \wdptProjPrec is \pithree-c for ${\preceq}{}\in{} \{=,\leq,\subseteq\}$.
\end{itemize}
\end{theoremAppendix}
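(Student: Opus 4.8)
\emph{Upper bounds.} These refine the guess-and-check algorithms sketched before Theorem~\ref{thm:ARbgp} and in Section~\ref{sect:brave}. For \LEquEx and \LCardEx with $\mathcal{L}\in\{\wdpt,\wdptProj\}$, given $Q$ in the canonical form $(( \dots ((P\OPT P_1)\OPT P_2) \dots) \OPT P_k)$ of Section~\ref{sect:preliminaries}, I would (in the $\leq$-case) first compute the minimal repair cardinality $k_0$ by $\thetatwo$-style binary search over \np-oracle calls, then guess a repaired graph $G_R$ together with a supported model (of size $k_0$ in the $\leq$-case) and an extension $\nu$ of $\mu$ to $\ivar(P)$, check in \ptime that $\mu\cup\nu\in\sem{P}_{G_R}$, and make one \conp-oracle call to verify that $\mu\cup\nu$ extends to no answer of any $P_i$ over $G_R$; this is an \np-computation with a \conp-oracle, hence \sigmatwo. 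For \bgpProjEqu and \bgpProjCard I would decide the complement: (in the $\leq$-case) first compute $k_0$ by the same binary search, then guess a repaired graph $G_R$ with a supported model (of size $k_0$) and verify by one \conp-oracle call that $\mu$ extends to no answer of $Q$ over $G_R$; the complement is thus in \sigmatwo and the problem in \pitwo. For \wdptProjPrec I would again decide the complement: guess $G_R$ with a supported model --- verifying $\subseteq$-minimality by one \conp-call, $\leq$-minimality via the $\thetatwo$-style pre-processing --- and then decide $\mu\in\sem{Q}_{G_R}$, which for a \wdptProj is a \sigmatwo-question~\cite{DBLP:journals/tods/Letelier0PS13}; the complement is an \np-computation with a \sigmatwo-oracle, hence in \sigmathree, so \wdptProjPrec is in \pithree.

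\emph{Lower bounds via the $k$-round colouring game.} I would reduce from the \sigmak-complete $k$-round colouring game of~\cite{DBLP:journals/jcss/AjtaiFS00}, in which the player on turn in round $i$ colours the degree-$i$ vertices of an input graph $G_{col}$, the last player on turn wins iff the final colouring is a proper $3$-colouring, and one asks whether Player~1 has a winning strategy. For \sigmatwo-hardness of \wdptEquEx and \wdptCardEx I would use the construction of the proof sketch of Theorem~\ref{thm:ARrest} for $2$-\textsc{Rounds-3-Colorability}: a node per vertex of $G_{col}$ plus $\texttti{r},\texttti{g},\texttti{b},\texttti{s},\texttti{e}$; atoms $col(\texttti{v},\texttti{r}),col(\texttti{v},\texttti{g}),col(\texttti{v},\texttti{b})$ for each $\texttti{v}$, $neq$ between distinct colours, markers $L$ on leaves (degree $1$) and $I$ on inner vertices, and a linear order $\preceq_{next}$ from $\texttti{s}$ to $\texttti{e}$; constraints $\mathsf{leaf}\leftrightarrow L\land{=_1}col$, $\mathsf{inner}\leftrightarrow I\land{=_3}col$, $\mathsf{valC}\leftrightarrow{=_2}neq$, $\mathsf{valV}\leftrightarrow\forall next^*.(\texttti{s}\lor\texttti{e}\lor\mathsf{leaf}\lor\mathsf{inner})\land\exists next^*.\texttti{e}$; targets $\{\mathsf{valV}(\texttti{s}),\mathsf{valC}(\texttti{r}),\mathsf{valC}(\texttti{g}),\mathsf{valC}(\texttti{b})\}$; and $H=\emptyset$. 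These force every repaired graph to keep the colour gadget and the order untouched while retaining exactly one $col$-edge per leaf and all three per inner vertex, so (a repair always exists, all of equal cardinality) the $\preceq$-repairs for every $\preceq\in\{=,\leq,\subseteq\}$ are in bijection with Player~1's opening moves. With $Q=\top\OPT P$, $P$ a \bgp stating that the partial colouring extends to a proper $3$-colouring of all of $G_{col}$, the empty mapping $\mu=\{\}$ is an answer over $G_R$ iff the leaf colouring recorded in $G_R$ has no such extension, i.e.\ iff Player~1 wins.

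\emph{The \pitwo and \pithree cases.} Keeping the same $(G,\mathcal{C},\mathcal{T},H)$ but using the query $\pi_\emptyset P$, the mapping $\mu=\{\}$ is an answer over \emph{every} $\preceq$-repaired graph iff every leaf colouring extends to a proper $3$-colouring, i.e.\ iff Player~1 has \emph{no} winning strategy; reducing from the \pitwo-complete complement of $2$-\textsc{Rounds-3-Colorability} this gives \pitwo-hardness of \bgpProjEqu and \bgpProjCard. For \wdptProjPrec I would climb one level, reducing from the \pithree-complete complement of $3$-\textsc{Rounds-3-Colorability}, i.e.\ from ``$\forall c_1\,\exists c_2\,\forall c_3:\lnot\mathrm{valid}(c_1,c_2,c_3)$'' expressing that Player~1 has no winning strategy: the degree-$1$ vertices again carry $L$ and ${=_1}col$, so $\preceq$-repairs correspond to Player~1's round-$1$ moves $c_1$ (for $\subseteq$ and $\leq$ one additionally arranges that no repair contains deletable slack, so every such repair is $\subseteq$- and $\leq$-minimal); and the query is a \wdptProj $\pi_X(P\OPT P')$ whose root \bgp $P$ picks Player~2's round-$2$ colouring $c_2$ of the degree-$2$ vertices, and whose OPT-block $P'$ --- which contains the atoms of $P$ (hence ``knows'' $c_2$), reads $c_1$ off $G_R$, and has fresh variables for the remaining colours $c_3$ together with $neq$-atoms over all edges of $G_{col}$ --- has an answer extending $c_2$ iff $(c_1,c_2,c_3)$ is proper for some $c_3$. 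Hence $\mu$ is an answer of $\pi_X(P\OPT P')$ over $G_R$ iff $\exists c_2\,\forall c_3:\lnot\mathrm{valid}(c_1,c_2,c_3)$, and the AR-quantification over the repairs (over $c_1$) makes $\mu$ an answer iff Player~1 has no winning strategy.

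\emph{Main obstacle.} The delicate work is all on the hardness side. In the \pithree-construction one must make the \wdptProj answer-checking semantics --- with the subtree/projection bookkeeping of~\cite{DBLP:journals/tods/Letelier0PS13} --- deliver exactly the intended $\exists c_2\,\forall c_3$ alternation; this needs the shared-variable mechanism between $P$ and $P'$ to be well-designed and to genuinely carry $c_2$ into the proper-colouring test. Underpinning all three reductions is a second, ``scaffolding-is-immobile'' invariant that also has to be checked carefully: that no spurious repairs exist, i.e.\ that the colour nodes $\texttti{r},\texttti{g},\texttti{b}$, the $neq$-edges and the entire order $\preceq_{next}$ cannot be ``repaired away'' under the supported-model semantics (which is what the $\mathsf{valV}$/$\mathsf{valC}$-over-$next^*$ trick together with $H=\emptyset$ is designed to guarantee), that the ${=_1}col$ versus ${=_3}col$ discrepancy really pins leaves to a single colour while leaving inner vertices free, and that in the $\subseteq$- and $\leq$-restricted cases every leaf-colouring repair is minimal, so that restricting to minimal repairs leaves the set of first-round moves unchanged. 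Once these invariants are in place, the claimed matching lower and upper bounds meet and completeness follows.
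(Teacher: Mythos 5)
Your proposal is correct and follows essentially the same route as the paper: the upper bounds are the same guess-and-check refinements (the paper verifies $\leq$-minimality by a single ``no strictly smaller repair'' oracle call rather than your binary-search precomputation, but both stay within the target classes), and the lower bounds use the identical $k$-round colouring-game reductions — the leaf/inner ${=_1}col$ vs.\ ${=_3}col$ gadget with $\top\OPT P$ for \sigmatwo, $\pi_\emptyset P$ for \pitwo, and a \wdptProj whose root carries the round-2 colouring and whose OPT-block carries the round-3 colouring for \pithree. The invariants you flag as the main obstacles (immobility of the colour/order scaffolding, bijection of repairs with round-1 moves, automatic $\subseteq$/$\leq$-minimality) are exactly the ones the paper's proof establishes.
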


\begin{proof}
    For the hardness proofs, we use the $k$-\textsc{Rounds-3-Colorability} problem as our reference problem.
    $k$-\textsc{Rounds-3-Colorability} problem is a game played by two Players on a graph for $k$ rounds.
    Player 1 starts round 1 by coloring the degree 1 vertices.
    Then Player 2 proceeds to color the degree 2 vertices.
    Then Player 1 again colors the degree 3 vertices, and so on \dots.
    In the last round $k$, all vertices are colored by the Turn Player, and this Player wins if the final coloring is a valid 3-coloring.
    A graph $G'$ is a yes-instance if Player 1 has a winning strategy. 
    $k$-\textsc{Round-3-Colorability} is know to be \sigmak-complete \cite{DBLP:journals/jcss/AjtaiFS00}.

    {\small$\blacktriangleright$} (Hardness for \wdptEquEx and \wdptCardEx)
    We reduce from $2$-\textsc{Rounds-3-Colorability}.
    To that end, let $G'$ be an arbitrary graph on the nodes $\texttti{v}_1,\dots,\texttti{v}_n$.
    To encode this problem we first construct the following data graph $G$.
    The nodes of $G$ are
    \begin{itemize}
        \item the vertices of $G'$ and
        \item 5 extra nodes $\texttti{r}, \texttti{b}, \texttti{g}, \texttti{s}, \texttti{e}$.
    \end{itemize}
    Then we add the following properties and classes:
    \begin{itemize}
        \item For every vertex $\texttti{v} \in V(G')$ we add $col(\texttti{v},\texttti{r})$, $col(\texttti{v},\texttti{g})$, $col(\texttti{v},\texttti{b})$ to indicate that every vertex can be colored by every color.
        \item We add $neq(\texttti{c}, \texttti{c}')$ for $\texttti{c}, \texttti{c}' \in \{\texttti{r}, \texttti{g}, \texttti{b}\}, \texttti{c}\neq \texttti{c}'$ to distinguish colors.
        \item We add $L(\texttti{u})$ to $G$ for every leaf (degree 1 node) $\texttti{u}$ of $G'$.
        \item We add $I(\texttti{v})$ for every non-leaf node $\texttti{v}$ of $G'$.
    \end{itemize}    
    Furthermore, we again encode an arbitrary linear order $\preceq_{next}$ on all the nodes of $G$ that starts in $\texttti{s}$ and ends in $\texttti{e}$.
    To that end, $next$ is added to $G'$ which indicates immediate predecessors and successors. 

    Next, we construct a shapes graph $(\mathcal{C},T)$ as follows.
    We add the following constraints to $\mathcal{C}$:
    \begin{align*}
        \mathsf{leaf}  \leftrightarrow {}& L \land  {=_1} col \\
        \mathsf{inner}  \leftrightarrow{} & I \land  {=_3} col\\
        \mathsf{valV}  \leftrightarrow {}& \forall next^*.(\texttti{s}\lor \texttti{e} \lor \mathsf{leaf} \lor \mathsf{inner})\\
        & \land \exists next^*.\texttti{e}\\
        \mathsf{valC}  \leftrightarrow {}& {=_2} neq
    \end{align*}
    The targets are $\{\mathsf{valV}(\texttti{s}), \mathsf{valC}(\texttti{r})$, $ \mathsf{valC}(\texttti{b})$, $ \mathsf{valC}(\texttti{g})\} = \mathcal{T}$.

    Furthermore, $H = \emptyset$, $\mu = \{\}$, and the query is the following:
    \begin{align*}
    Q = \top  \OPT \bigwedge_{i = 1}^n col(\texttti{v}_i,x_i) \land \bigwedge_{(\texttti{v}_i,\texttti{v}_j)\in G'} neq(x_i,x_j).
    \end{align*}
    The idea of the shapes graph is that every repaired graph $G_R$ ``selects'' a color for the leaf nodes but does not touch the inner nodes.
    This is achieved by the fact that $\mathsf{valV}(\texttti{s})$ can only be validated when every vertex is either in $\llbracket\mathsf{leaf}\rrbracket^I$ or in $\llbracket\mathsf{inner}\rrbracket^I$.
    The remaining targets guarantee that the $neq$ property is left untouched.
    The query $Q$ together with $\mu$ then asks whether there is no valid coloring using the colors $\{\texttti{c} \mid col(\texttti{v},\texttti{c})\in G_R\}$ for each $\texttti{v}\in G'$.
    All repairs are $\leq$-repairs and correspond 1-1 to possible colorings of the leaves.
    Hence, $\mu$ is an answer to $Q$ over some repaired graph if and only if there is a coloring of the leaf nodes that cannot be extended to a coloring of the whole graph, i.e., Player 1 has a winning strategy.

    {\small$\blacktriangleright$} (Hardness for \bgpProjEqu and \bgpProjCard)
    We again reduce from co-$2$-\textsc{Round-3-Colorability}, i.e., we ask whether there exists a coloring of the whole graph given any coloring of the leaves.
    For this, given a graph $G'$, we can copy $G,C, T,$ and $\mu$ from above, while only marginally modifying the query to the boolean query $\pi_\emptyset Q$ with
    \begin{align*}
        Q = \bigwedge_{i = 1}^n col(\texttti{v}_i,x_i) \land \bigwedge_{(\texttti{v}_i,\texttti{v}_j)\in G'} neq(x_i,x_j).
    \end{align*}
    $\mu$ is then an answer to $\pi_\emptyset Q$ over every repaired graph $G_R$ if and only if there exists a coloring of the whole graph given any coloring of the leaves, i.e., Player 1 has no winning strategy.

    {\small$\blacktriangleright$} (Hardness for \wdptProjPrec)
    We reduce from co-3-\textsc{Rounds-3-Coloring Extension}, i.e., we ask whether there exists a coloring of the degree 2 nodes for every coloring of the leaves that cannot be extended to all vertices.
    Again, given a graph $G'$, we can copy $G,C, T,$ and $\mu$ from above while only needing to slightly modify the query.
    To that end, let us assume $\texttti{v}_1,\dots,\texttti{v}_{m-1}$ to be the vertices of degree 2 or less of $G'$ and $\texttti{v}_{m},\dots,\texttti{v}_n$ to be the remainder.
    Then, consider the query $\pi_{X^n_m}Q$ with $X^n_m = \{x_m,\dots, x_n\}$ and
    \begin{align*}
    Q = &\bigwedge_{i = 1}^{m-1} c(\texttti{v}_i,x_i) \OPT{} \\
    &\bigwedge_{i = m}^n c(\texttti{v}_i,x_i) \land \bigwedge_{(\texttti{v}_i,\texttti{v}_j)\in G'} neq(x_i,x_j).
    \end{align*}
    $\mu = \{\}$ is then an answer to $Q$ over every repaired graph $G_R$ if and only if, given any coloring of the leaves (i.e., $G_R$), there exists a coloring of the degree 2 vertices (i.e., values for $x_1,\dots,x_{m-1}$) that cannot be extended to the whole graph (i.e., to $X_m^n$).
    Thus, this is exactly the case when Player 1 has no winning strategy.

    {\small$\blacktriangleright$} (Membership for \wdptProjEquEx, \wdptProjCardEx, \bgpProjEqu, and \bgpProjCard)
    We can copy the proof for the \wdptProjSubsEx (combined complexity) case (resp. \bgpProjSubs) almost one to one.
    The only change required is, that for \wdptProjEquEx (resp. \bgpProjEqu), the minimality of $R$ does not need to be checked, and for \wdptProjCardEx (resp. \bgpProjCard), the minimality of $R$ has to be checked against any $R'$ such that $|R'| < R$.
    
    {\small$\blacktriangleright$} (Membership for \wdptProjPrec)
    As always, we consider the co-problem, i.e., whether $\mu$ is no answer of $\pi_X Q$ over some $\preceq$-repaired graph $G_R$.
    Let us consider the following procedure:
    \begin{enumerate}
        \item Guess a repair $R$.
        \item Check that $R$ is $\preceq$-minimal.
        \item Check that $\mu$ is no answer of $Q$ over $G_R$.
    \end{enumerate}
    The checks clearly only require access to a \sigmatwo-oracle (in fact, only the second check), and, thus, \wdptProjPrec is in \pithree for ${\preceq}\in \{=,\leq,\subseteq\}$.
\end{proof}

\section{Full Proofs for IAR Semantics}

\begin{theoremAppendix}
The following statements are true for data complexity:
\begin{itemize}
    \item \LEquIn is \conp-c for $\mathcal{L}\in \{\bgp$, $\bgpProj\}$.
    \item \LCardIn is \thetatwo-c for $\mathcal{L}\in \{\bgp$, $\bgpProj$,  $\wdpt$, $\wdptProj\}$.
    \item \LSubsIn is \pitwo-c for $\mathcal{L}\in \{\bgp$, $\bgpProj\}$.
\end{itemize}
The following statements are true for combined complexity:
\begin{itemize}
    \item \bgpEquIn is \conp-c.
    \item \bgpProjEquIn is \thetatwo-c.
    \item \LCardIn is \thetatwo-c for $\mathcal{L}\in \{\bgp$, $\bgpProj$,  $\wdpt\}$.
    \item \LSubsIn is \pitwo-c for $\mathcal{L}\in \{\bgp$, $\bgpProj\}$.
\end{itemize}
\end{theoremAppendix}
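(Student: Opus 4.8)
The plan is to separate the \bgp cases, which reduce to results already in hand, from the \bgpProj and (for the $\leq$-bounds) \wdpt/\wdptProj cases, which need fresh arguments. For \bgps \emph{without} projection, AR and IAR coincide: an atom lies in every $\preceq$-repaired graph iff it lies in their intersection. Hence \bgpEquIn is \conp-c, \bgpCardIn is \thetatwo-c, and \bgpSubsIn is \pitwo-c, in both data and combined complexity, by literally re-using the AR reductions and membership algorithms of Theorem~\ref{thm:ARbgp} (with the query chosen so that ``$\mu$ is an answer over the intersection'' matches ``$\mu$ is an answer over all repairs''). Likewise, the \LCardIn-hardness for $\mathcal{L}\in\{\wdpt,\wdptProj\}$ follows because every \bgp is a \wdpt and every \bgpProj a \wdptProj, so hardness propagates upward from the \bgp/\bgpProj cases. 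This leaves the genuinely new work: the IAR membership bounds for \bgpProj, \wdpt, \wdptProj, and the \thetatwo-hardness of \bgpProjEquIn.

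For membership I would spell out the ``guess a superset of the intersection'' idea. Every repair, hence $G_\cap$, is contained in $G\cup H$, so to certify that $\mu$ is \emph{not} an IAR-answer of a \bgpProj $Q$ it suffices to guess a set $G'_\cap\subseteq G\cup H$ together with, for each atom $\alpha\in(G\cup H)\setminus G'_\cap$, a repair $R_\alpha$ plus supported model $I_\alpha$ witnessing $\alpha\notin G_{R_\alpha}$; then $G_\cap\subseteq G'_\cap$, and by monotonicity of \bgpProjs, $\mu\notin\sem{Q}_{G'_\cap}$ already gives $\mu\notin\sem{Q}_{G_\cap}$. There are only linearly many atoms $\alpha$, and in data complexity the query check is in \ptime, so this is a \conp-procedure; adding a \conp-check that each $R_\alpha$ is $\subseteq$-minimal yields \pitwo for \bgpProjSubsIn (the witnesses being $\subseteq$-minimal repairs). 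For the $\leq$-cases — and to handle \wdpts, where monotonicity is unavailable, and to reach the \thetatwo-bound for the combined complexity of \bgpProjEquIn — I would first compute, with \thetatwo power, the minimal repair cardinality $k$ and then the exact size $K=|G_\cap|$ by binary search over \np-oracle calls of the form ``do at least $|G\cup H|-c$ atoms of $G\cup H$ fail to appear in some $\preceq$-repaired graph of size $k$?''; a final \np-oracle call then guesses $G'_\cap$ of size exactly $K$ with all excluded atoms witnessed, which forces $G'_\cap=G_\cap$, and evaluates $Q$ on it --- in \ptime for data complexity, or with one further oracle call for \wdpt and \bgpProj in combined complexity. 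Everything stays inside \thetatwo.

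For the hardness side the one remaining obligation is the \thetatwo-hardness of \bgpProjEquIn, for which I would reduce from \textsc{CardMin-Precoloring}: given a graph $G'=\{\texttti{v}_1,\dots,\texttti{v}_n\}$ with pre-colouring $c:V(G')\to 2^{\{\texttti{r},\texttti{g},\texttti{b}\}}$, decide whether $\texttti{v}_1$ receives colour $\texttti{g}$ in some $c$-respecting $3$-colouring that minimises the number of $\texttti{g}$-coloured vertices. The construction separates two subcomputations. A gadget of constraints and hypotheses is designed so that the repairs range over the admissible ``$\texttti{g}$-budgets'', arranged so that for each $j$ the atom $allowed(\texttti{i}_j,\texttti{g})$ survives in $G_\cap$ iff every $c$-respecting $3$-colouring uses $\texttti{g}$ at least $j$ times; thus, if $K$ is the minimum number of $\texttti{g}$-vertices, the counters $\texttti{i}_1,\dots,\texttti{i}_K$ are exactly those with $allowed(\cdot,\texttti{g})$ in $G_\cap$. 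The projected \bgp $\pi_\emptyset Q$ then uses variables $y_i$ to name a $c$-respecting $3$-colouring with $\texttti{v}_1\mapsto\texttti{g}$ (enforced via $neqCol$ on edges and classes $Pre_{c(\texttti{v}_i)}$, $Pre_{\{\texttti{g}\}}(y_1)$) and variables $x_i$, forced pairwise distinct via $neqCnt$, to inject the $\texttti{g}$-coloured vertices into the budget counters through $allowed(x_i,y_i)$; the empty mapping is an IAR-answer iff such an injection exists, i.e.\ iff $\texttti{v}_1$ can be coloured $\texttti{g}$ within the minimum $\texttti{g}$-budget, which is exactly the \textsc{CardMin-Precoloring} question.

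The step I expect to be the main obstacle is precisely this last reduction: engineering $\mathcal{C}$, $H$ and $\mathcal{T}$ --- without recursion and with only the data graph varying --- so that simultaneously (i) a repair always exists and the repairs correspond faithfully to admissible ``$\texttti{g}$-budget certificates'', (ii) the intersection $G_\cap$ reveals the threshold $K$ exactly through the $allowed(\texttti{i}_j,\texttti{g})$ atoms and nothing more, and (iii) the colours $\texttti{r},\texttti{b}$ stay always ``allowed'' so that this budget machinery does not leak into the part of the query that genuinely enforces a valid $3$-colouring. By comparison, the membership arguments and the transfer of the \bgp lower bounds from Theorem~\ref{thm:ARbgp} are routine.
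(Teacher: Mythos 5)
Your proposal follows essentially the same route as the paper: for \bgp you transfer both bounds from the AR case (since AR and IAR coincide for projection-free \bgps), the membership arguments use exactly the paper's ``guess a superset $G'_\cap\subseteq G\cup H$ of the intersection with a witnessing repair per excluded atom'' idea (refined to guessing $G'_\cap$ of the precomputed exact size $K$ for the $\leq$- and \wdpt-cases), and the one new lower bound, the \thetatwo-hardness of \bgpProjEquIn in combined complexity, is obtained by the same reduction from \textsc{CardMin-Precoloring} with counters, $allowed$, $neqCnt$ and $Pre_C$ playing the same roles. The only part you leave unconstructed is the explicit shapes-graph gadget for that reduction, but the properties you demand of it are precisely those the paper's construction delivers, so the plan is sound.
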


\begin{proof}
    {\small$\blacktriangleright$} (Hardness for \bgpPrecIn data complexity)
    Notice that an atom $\mu(B(X))$ appears in every $\preceq$-repaired graph if and only if it appears in the intersection of all $\preceq$-repaired graphs.
    Thus, the hardness immediately carries over from the case for $\forall$-semantics.

    {\small$\blacktriangleright$} (Hardness for \bgpProjEquIn combined complexity)
    To show \thetatwo-hardness, we reduce from the problem \textsc{CardMin-Precoloring} on graphs with degree bounded by 5 (for hardness of the reference problem see Lemma \ref{lem:minPre} in Section \ref{sec:reference}) that asks whether for a graph $G'$ and pre-coloring $c:V(G') \rightarrow 2^{\{\texttti{r}, \texttti{g}, \texttti{b}\}}$, a given vertex $\texttti{v}_1$ is colored $\texttti{g}$ in some 3-coloring of $G'$ that minimizes the use of the color $\texttti{g}$.
    W.l.o.g., we assume $(G',c)$ to be 3-colorable.

    To encode this problem, we construct the following data graph $G$.
    The nodes of $G$ consist of
    \begin{itemize}
        \item the vertices $V(G') = \{\texttti{v}_1,\dots,\texttti{v}_n\}$ added to the class $Vtx$,
        \item the colors $\texttti{r}, \texttti{b}, \texttti{g}$, 
        \item and elements $I=\{\texttti{i}_1,\dots, \texttti{i}_n\}$ to count added to the class $Cnt$.
        \item two auxiliary fresh nodes $\texttti{s},\texttti{e}$ also added to the class $Cnt$, 
    \end{itemize}
    Furthermore, we add the following atoms: 
    \begin{itemize}
        \item For vertices $\texttti{v}\in V(G')$ we $col(\texttti{v}, \texttti{c})$ for $\texttti{c}\in c(\texttti{v})$ as possible color.
        \item We connect $\texttti{v}\in V(G')$ to its $i\leq 5$ neighbors by adding $e(\texttti{u}, \texttti{v})$ for $(\texttti{u}, \texttti{v})\in G'$ and $D_i(\texttti{v})$.
        \item For each subset $C\subseteq \{\texttti{r}, \texttti{g}, \texttti{b}\}, C\neq \emptyset$ we add classes $Pre_C$ with $Pre_C(\texttti{c})$ for every $\texttti{c}\in C$.
        \item We add $neqCol(\texttti{c}, \texttti{c}')$ for $\texttti{c}, \texttti{c}' \in \{\texttti{r}, \texttti{g}, \texttti{b}\}, \texttti{c}\neq \texttti{c}'$ to distinguish colors.
        \item We add $neqCnt(\texttti{i}_i,\texttti{i}_j)$, $neqCnt(\texttti{s},\texttti{i}_i)$, $ neqCnt(\texttti{i}_i,\texttti{e})$ for $i,j = 1,\dots, n$ and $i\neq j$ to distinguish counters.
        \item We connect each counter $\texttti{i}\in I$ to each color, i.e., via $allowed(\texttti{i}, \texttti{c})$ for  $\texttti{c} \in \{\texttti{r}, \texttti{g}, \texttti{b}\}$.
        \item We connect each counter $\texttti{i}\in  I$ to each vertex $\texttti{v}\in V(G')$ via $match(\texttti{i}, \texttti{v})$.
        \item We connect $\texttti{s},\texttti{e}$ to themselves, i.e., we add $match(\texttti{s}, \texttti{s})$ and $match(\texttti{e}, \texttti{e})$.
        \item We add a specific order on the counters by adding $nextCnt(\texttti{i}_i,\texttti{i}_{i+1})$ for $i=1,\dots,n-1$ and $nextCnt(\texttti{s},\texttti{i}_1)$ as well as $nextCnt(\texttti{i}_n,\texttti{e})$.
        \item We fully connect the vertices by adding $nextVtx(\texttti{v}_i,\texttti{v}_j),$ $nextVtx(\texttti{s},\texttti{v}_i),$ and $nextVtx(\texttti{v}_i,\texttti{e})$ for $\texttti{v}_i, \texttti{v}_j \in V(G'), i\neq j$.
        \item Lastly, as always, we again encode an arbitrary linear order $\preceq_{next}$ on all the nodes of $G$ that starts in $\texttti{s}$ and ends in $\texttti{e}$.
    To that end, $next$ is added to $G'$ which indicates immediate predecessors and successors. 
    \end{itemize}

    Next, we construct a shapes graph $(\mathcal{C,T})$.
    $\mathcal{C}$ is as follows:    
    {
    \begin{align}
    \begin{split}
        \mathsf{col} \leftrightarrow {} & {}{=_1} col \land {} \bigg(\bigvee_{i=1}^5 \Big(D_i \land \big(\\
        &(\exists col.\texttti{r} \land {=_i} e.(\lnot \exists col.\texttti{r})) \\
        &{} \lor  {}(\exists col.\texttti{g} \land {=_i} e.(\lnot \exists col.\texttti{g})) \\
        &{} \lor  {}(\exists col.\texttti{b} \land {=_i} e.(\lnot \exists col.\texttti{b}))\big)\Big)\bigg) 
    \end{split}\label{cnst:prec:col} \\
    \begin{split}
        \mathsf{lin} \leftrightarrow {} & {}{=_1} nextVtx \land {}{=_1} nextVtx^-
    \end{split}\label{cnst:prec:lin} \\
    \begin{split}
        \mathsf{vtx} \leftrightarrow {} & Vtx \land \mathsf{col} \land \mathsf{lin} 
    \end{split}\label{cnst:prec:vtx} \\
    \begin{split}
        \mathsf{nxt} \leftrightarrow {} &  {=_1} nextCnt \land {}{=_1} nextCnt^- \\
        & \land \big((nextCnt^*\cdot nextCnt) \\
        & \quad \cup (nextCnt^{-*} \cdot nextCnt^-)\big) = neqCnt
    \end{split}\label{cnst:prec:nxt} \\
    \begin{split}
        \mathsf{mat} \leftrightarrow {} & {} {=_1} match \land {}{=_1} (match\cdot match^-)
    \end{split}\label{cnst:prec:mat} \\
    \begin{split}
        \mathsf{con} \leftrightarrow {} & match\cdot nextVtx \cdot match^- = nextCnt
    \end{split}\label{cnst:prec:con} \\
    \begin{split}
        \mathsf{all} \leftrightarrow {} & \big(({=_3} allowed.(\texttti{r} \lor \texttti{g} \lor \texttti{b})\\
        &\quad{} \land \exists(match \cdot nxtVtx^*\cdot col).\texttti{g}) \\
        & {} \lor ({=_2} allowed.(\texttti{r} \lor  \texttti{b}) \\
        &\quad {}\land \lnot \exists(match \cdot nextVtx^*\cdot col).\texttti{g})\big)
    \end{split}\label{cnst:prec:all} \\
    \begin{split}
        \mathsf{cnt} \leftrightarrow {} & Cnt \land \mathsf{nxt} \land\mathsf{mat} \land \mathsf{con}
    \end{split}\label{cnst:prec:cnt} \\
    \begin{split}
        \mathsf{s} \leftrightarrow {} & \texttti{s} \land \mathsf{mat} \land \mathsf{con}
    \end{split}\label{cnst:prec:s} \\
    \begin{split}
        \mathsf{e} \leftrightarrow {} & \texttti{e} \land \mathsf{mat}
    \end{split}\label{cnst:prec:e} \\
    \begin{split}
        \mathsf{r} \leftrightarrow {} & \texttti{r} \land Pre_{\{\texttti{r}\}} \land Pre_{\{\texttti{r},\texttti{g}\}} \land Pre_{\{\texttti{r},\texttti{b}\}}  \\
        & \land Pre_{\{\texttti{r},\texttti{g},\texttti{b}\}}\land {=_2} neqCol
    \end{split}\label{cnst:prec:r} \\
    \begin{split}
        \mathsf{g} \leftrightarrow {} &  \texttti{g} \land Pre_{\{\texttti{g}\}} \land Pre_{\{\texttti{r},\texttti{g}\}} \land Pre_{\{\texttti{g},\texttti{b}\}} \\
        & \land Pre_{\{\texttti{r},\texttti{g},\texttti{b}\}}\land {=_2} neqCol   
    \end{split}\label{cnst:prec:g} \\
    \begin{split}
        \mathsf{b} \leftrightarrow {} &  \texttti{b} \land Pre_{\{\texttti{b}\}} \land Pre_{\{\texttti{r},\texttti{b}\}} \land Pre_{\{\texttti{g},\texttti{b}\}} \\
        & \land Pre_{\{\texttti{r},\texttti{g},\texttti{b}\}}\land {=_2} neqCol  
    \end{split}\label{cnst:prec:b} \\
    \begin{split}
        \mathsf{val} \leftrightarrow {} & \forall next^*.(\mathsf{se} \lor \mathsf{vtx} \lor \mathsf{cnt} \lor \mathsf{r} \lor \mathsf{g} \lor \mathsf{b})\\
        & \land \exists next^*.\texttti{e}
    \end{split} \label{cnst:prec:val}
    \end{align}
    }
    The target is $\mathcal{T} = \{\mathsf{val}(\texttti{s})\}$, $\mu = \{\}$, and the query is $\pi_\emptyset Q$ with
    \begin{align*}
        Q = {}& \bigwedge_{(\texttti{v}_i,\texttti{v}_j)\in G'} neqCol(y_i,y_j)\\
        & {}\land Pre_{\{\texttti{g}\}}(y_1) \land \bigwedge_{\texttti{v}_i\in V(G')} Pre_{c(\texttti{v}_i)}(y_i)  \\
        & {}\land \bigwedge_{\texttti{v}_i\in V(G')} allowed(x_i,y_i) \land \bigwedge_{i\neq j} neqCnt (x_i,x_j).
    \end{align*}

    As before, the target $\mathsf{val}(\texttti{s})$ together with the Constraint \ref{cnst:prec:val} ensures that every repaired graph $G_R$ is such that all nodes satisfy specific constraints.
    Concretely, vertices $\texttti{v}\in V(G')$ have to satisfy Constraint \ref{cnst:prec:vtx} and thus also \ref{cnst:prec:col} and \ref{cnst:prec:lin}; counters $\texttti{i}\in I$ have to satisfy Constraint \ref{cnst:prec:cnt} and thus also Constraints \ref{cnst:prec:nxt} through \ref{cnst:prec:all}; the auxiliary nodes $\texttti{s}$ has to satisfy the Constraint \ref{cnst:prec:s} and thus also \ref{cnst:prec:mat} and \ref{cnst:prec:con}; the auxiliary nodes $\texttti{e}$ has to satisfy the Constraint \ref{cnst:prec:e} and thus also \ref{cnst:prec:mat}; and the colors $\texttti{r}, \texttti{g}, \texttti{b}$ respectively satisfy the Constraints $\ref{cnst:prec:r}, \ref{cnst:prec:g}, \ref{cnst:prec:b}$.

    First, let us consider vertices.
    Constraint \ref{cnst:prec:col} ensures that every repair ``selects'' a color for each vertex and that the resulting coloring is a valid 3-coloring adhering to $c$.
    Furthermore, Constraint \ref{cnst:prec:lin} ensures that every vertex appears exactly twice in the property $nextVtx$, once in the first position and once in the second position.

    Next, let us consider counters and the auxiliary nodes $\texttti{s}, \texttti{e}$.
    Firstly, Constraint \ref{cnst:prec:nxt} ensures that the properties $nextCnt$ and $neqCnt$ are left unchanged.
    Then, Constraint \ref{cnst:prec:mat} ensures that counters $\texttti{i}_i\in I$ are matched 1-1 to vertices $\texttti{v}(\texttti{i}_i)\in V(G')$.
    Furthermore, this constraint also has to be satisfied by $\texttti{s}, \texttti{e}$, ensuring that they are match to themselves.
    Due to Constraint \ref{cnst:prec:con}, this matching has to consistent with the property $nextVtx$.
    This means that $nextVtx(\texttti{v}(\texttti{i}_i), \texttti{v}(\texttti{i}_{i+1}))\in G_R$ for $i=1,\dots,n-1$ as well as $nextVtx(\texttti{s}, \texttti{v}(\texttti{i}_{1}))\in G_R$ and $nextVtx(\texttti{v}(\texttti{i}_{n}), \texttti{e})\in G_R$.
    Thus, crucially, the property $nextVtx$ represents the immediate predecessors and successors relation of a linear order $\preceq_{nextVtx}$ on all the vertices $V(G')$ that additionally starts in $\texttti{s}$ and ends in $\texttti{e}$.
    Furthermore, by correctly ``selecting'' $nextVtx$ and $match$, any such linear order can be encoded in $G_R$.

    Now, consider Constraint \ref{cnst:prec:all} which all counters also have to satisfy.
    This constraint fixes the property $allowed$, which, thus, deterministically depends on the linear order $\preceq_{nextVtx}$ represented by the $G_R$.
    Concretely, $allowed(\texttti{i}_i, \texttti{r}), allowed(\texttti{i}_i, \texttti{b})$ are always in $G_R$ but $allowed(\texttti{i}_i, \texttti{g})$ is in $G_R$ if and only if there is a $j\geq i$ such that $\texttti{v}(\texttti{i}_j)$ is colored $\texttti{g}$, i.e., $col(\texttti{v}(\texttti{i}_j), \texttti{g})\in G_R$.

    Lastly, the constraints for the colors simply ensure that the property $neqCol$ and the classes $Pre_C$ are left unchanged.
    
    Thus, summarizing, the idea is the following: each repaired graph $G_R$ represents a valid 3-coloring of the graph adhering to the precoloring together with a specific linear order $\preceq_{nextVtx}$ on the vertices. 
    Then, there is a bijection between counters $\texttti{i}_i$ and vertices $\texttti{v}(\texttti{i}_i)$ with the property: $\texttti{v}(\texttti{i}_i) \preceq_{makeLin} \texttti{v}(\texttti{i}_j)$ if and only if $i\leq j$.
    Furthermore, the repaired graph allows the color $\texttti{g}$ to the counter $\texttti{i}_i$ if and only if there is a vertex $\texttti{v}(\texttti{i}_j)$ colored $\texttti{g}$ that comes after the vertex $\texttti{v}(\texttti{i}_i)$, i.e., $j\geq i$.
    Thus, the $allowed$ relation ``(over-) counts'' the uses of $\texttti{g}$.

    However, if the linear order is such the vertices colored $\texttti{g}$ come first, the repaired graph correctly counts the uses of $\texttti{g}$.
    Furthermore, for any two repaired graphs $G_R$ and $G_{R'}$, we have $\llbracket allowed\rrbracket ^{G_R} \subseteq \llbracket allowed\rrbracket ^{G_{R'}}$ or the other way around.
    Thus, there is a unique minimum $allowed$ relation which also appears in the intersection of all repaired graphs $G_\cap$.
    Concretely, if $c$ is the minimum number of uses of $\texttti{g}$ required, $allowed(\texttti{i}_i,\texttti{g})\in G_\cap$ if and only if $i\leq c$, i.e., $G_\cap$ correctly counts the minimal uses of $\texttti{g}.$
    Furthermore, as no repaired graph alters the classes $Pre_C$ and the properties $neqCol$, $allowed$, and $neqCnt$, they are the same in $G_\cap$ as they were in the construction.

    Now, consider the query.
    The query asks whether there is a 3-coloring of the graph that adheres to the precoloring, colors $\texttti{v}_1$ the color $\texttti{g}$, and only uses $\texttti{g}$ as little as possible.
    This is done by ``guessing'' a counter $\texttti{i}_j$ (variable $(x_i)$) and a color $\texttti{c}$ (variable $(y_i)$) for each vertex $\texttti{v}_i$.
    A vertex can be colored $\texttti{g}$ if and only if its counter $\texttti{i}_j$ is less than or equal to $c$, i.e., $j\leq c$ (facilitated by $allowed(x_i,y_i)$), assuring that $\texttti{g}$ is used at most $c$ times.
    Thus, values for the bound variables $x_i$ and $y_i$ exist exactly when there is a coloring that colors $\texttti{v}_1$ the color $\texttti{g}$ while still minimizing the use of $\texttti{g}$, completing the proof of correctness.
    
    {\small$\blacktriangleright$} (Membership for \bgpProjEquIn data complexity)
    The co-problem is to show that $\mu$ is no answer to $\pi_X Q(X,Y)$ over the intersection $G_\cap$ of all repaired graphs.
    This is the case if for all $\mu_Y \colon Y\rightarrow V(G)$, at least one atom $\alpha$ of $Q$ is not sent to $G_\cap$ by $\mu\cup \mu_Y$.
    Thus, there must also be a repair $R(\mu_Y)$ such that $\alpha$ is not sent to $G_{R(\mu_Y)}$ by $\mu\cup \mu_Y$.
    Hence, $R(\mu_Y)$ can be seen as a witness that $\mu\cup \mu_Y$ is no answer to $Q$ over $G_\cap$.
    Since there are only a polynomial number of $\mu_Y$, a NTM can guess a $R(\mu_Y)$ together with a supporting model for each $\mu_Y$.
    Thus, the co-problem is in \np and \bgpProjEquIn in \conp.
    
    {\small$\blacktriangleright$} (Membership for \bgpEquIn combined complexity)
    Looking at the previous argumentation, notice that the size of $Q$ being fixed was not used, but the fact that the number of $Y$ variables was fixed.
    Of course, the same holds for \bgp\ in combined complexity and thus, also \bgpEquIn is in \conp.

    {\small$\blacktriangleright$} (Membership for \bgpProjEquIn combined complexity)
    \bgpProjEquIn is in \thetatwo as we can compute the size $K$ of the intersection of the repaired graphs with logarithmically many calls to an \np-oracle.
    We can do this by making oracle calls of the form: ``is the size of intersection of all repaired graphs less than $c$?''
    This question can be answered by an NTM following the subsequent procedure:
    \begin{enumerate}
        \item Let $S:=|G|+|H|$.
        \item Guess repairs $R_1,\dots,R_{S-c+1}$.
        \item Check that $|\bigcap_i G_{R_i}| < c$.
    \end{enumerate}
    Intuitively, $S$ is the number of potential atoms in the intersection, and each $R_i$ witnesses that at least 1 potential atom does not appear in the intersection.
    Of course, if the size of the intersection is less than $c$, at least $S-c+1$ such repairs must exist (some may appear duplicately).
    
    Then, given $K$, we can answer the query with a single \np call.
    For this, we proceed as follows (the query is $\pi_X Q(X,Y)$):
    \begin{enumerate}
        \item Let $S:=|G|+|H|$.
        \item Guess repairs $R_1,\dots,R_{S-K+1}$, and a $\mu_Y\colon Y\rightarrow V(G)$.
        \item Check that $|\bigcap_i G_{R_i}| = K$.
        \item Check that $\mu \cup \mu_Y$ is an answer to $Q$ over $\bigcap_i G_{R_i}$.
    \end{enumerate}
    
    {\small$\blacktriangleright$} (Membership for \wdptProjCardIn data complexity)
    We start by showing that we can compute the size of the intersection of the $\leq$-repaired graphs in \thetatwo.
    To do this, we first have to compute the minimal size $k$ of a repair with logarithmically many \np-oracle calls (as we have already done many times).
    Then, using $k$, we make oracle calls of the form: ``is the size of intersection of all $\leq$-repaired graphs less than $c$?''
    Given $k$, this question can be answered by an NTM similar to the previous case but where we additionally have to check that $|R_i| = k$.    
    Consequently, we can again use binary search to compute the exact size $K$ of the intersection with only logarithmically many \np-oracle calls.
    Finally, given $k$ and $K$, the question of whether $\mu$ is an answer to $Q$ becomes an \np-problem, and, thus, only requires a single additional \np-oracle call.
    
    The reason for that is that given $k$ and $K$, a NTM can guess repairs $R_1,\dots,R_{S-K+1}$ and be sure that $G_\cap = \bigcap_i G_{R_i}$ if $|\bigcap_i G_{R_i}| = K$.
    The remainder is then only a matter of going through all possible (polynomially many) extensions of $\mu$ to the variables projected away and, then, variables in the OPT-part.
    This proves the \thetatwo-membership.
    
    {\small$\blacktriangleright$} (Membership for \bgpProjCardIn and \wdptCardIn combined complexity)
    Using the same method as before, we can determine $k$ and $K$.
    Then, given $k$ and $K$, answering a \bgpProj\ only requires \np-power, also in combined complexity.
    To accomplish this, we simply have to guess the values of the variables projected away at the same time we guess the repairs $R_1,\dots,R_{S-K+1}$.
    
    For a \wdpt $Q=(( \dots ((P\OPT P_1)\OPT P_2) \dots) \OPT P_k)$, the argument is slightly different as, given $k,$ $K$, answering the query requires \conp-power in combined complexity.
    I.e., given $k$ and $K$, to identify no-instances we can guess a $P_i$ and a $\mu_{P_i}\colon \ivar(P_i) \setminus \ivar(P) \rightarrow V(G)$ together with $\leq$-repairs $R_1,\dots,R_{S-K+1}$ such that $|\bigcap_iG_{R_i}| = K$.
    Then, if $(\mu\cup \mu_{P_i})_{\ivar(P_i)}$ is an answer to $P_i$ over $\bigcap_iG_{R_i}$ or $\mu$ is no answer to $P$ over $\bigcap_iG_{R_i}$, we can assert that $\mu$ is also no answer to $Q$ over the intersection of all repairs.
        
    {\small$\blacktriangleright$} (Membership for \bgpProjSubsIn combined complexity)
    We consider the co-problem, i.e., whether $\mu$ is no answer of $Q$ over the intersection of all $\subseteq$-repaired graphs.
    For this, we proceed as follows:
    \begin{enumerate}
        \item Guess (a superset of the intersection of all $\subseteq$-repaired graphs) $G_\cap = A_\cap \cup (G\setminus D_\cap)$ together with repairs $(A_a,D_a)$ for each $a\in H\setminus A_\cap$ and $(A_d,D_d)$ for each $d\in D_\cap$.
        \item Check that $\mu$ is no solution of $Q$ over $G_\cap$.
        \item Check that $a\not\in A_a$ and that $(A_a,D_a)$ is a $\subseteq$-repair.
        \item Check that $d\in D_d$ and that $(A_d,D_d)$ is a $\subseteq$-repair.
    \end{enumerate}
    All of these checks only require access to an \np-oracle.
    Furthermore, if all of these checks are positive, $G_\cap$ is indeed a superset of the intersection of all $\subseteq$-repairs, and, since $\pi$-BGP are monotone, $\mu$ is no answer of $Q$ over the intersection.
    Hence, we can correctly assert that we are dealing with a no-instance.
    On the other hand, if for all guesses at least one check is negative, there also has to be a check that is negative when $G_\cap$ is the intersection of all $\subseteq$-repairs and $(A_a, D_a), (A_d, D_d)$ are $\subseteq$-repairs with $a\not\in A_a$ and $d\in D_d$, respectively (The existence of $(A_a, D_a), (A_d, D_d)$ is guaranteed by $G_\cap$ being the intersection).
    Thus, the first check has to be negative.
    We can, therefore, correctly conclude that we are dealing with a yes-instance.
    Thus, in total, the co-problem is in $\sigmatwo$ and \bgpProjSubsIn is in \pitwo.
\end{proof}

\begin{theoremAppendix}
The following statements are true for data complexity:
\begin{itemize}
    \item \wdptEquIn is \DP-c.
    \item \wdptSubsIn is \DPtwo-c.
\end{itemize}
The following statements are true for combined complexity:
\begin{itemize}
    \item \wdptEquIn is \thetatwo-c.
    \item \wdptSubsIn is \DPtwo-c.
\end{itemize}
\end{theoremAppendix}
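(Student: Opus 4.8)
The plan is to classify \wdptEquIn and \wdptSubsIn by reducing, in both directions, to the already settled IAR problems for \bgp and \bgpProj (Theorems~\ref{thm:ARbgp} and~\ref{thm:easyIAR}). The starting point is the normal form of \wdpts: we may assume $Q=(( \dots ((P\OPT P_1)\OPT P_2) \dots)\OPT P_k)$ with $\ivar(P)=\idom(\mu)$ and, after renaming, a fixed intersection $Z_i:=\ivar(P)\cap\ivar(P_i)$ for each $i$. Writing $G_\cap$ for the intersection of all $\preceq$-repaired graphs, $\mu$ is an answer to $Q$ over $G_\cap$ iff (a)~$\mu\in\sem{P}_{G_\cap}$, which is exactly a \emph{yes}-instance of \bgpPrecIn, and (b)~for every $i$, $\mu|_{Z_i}\notin\sem{\pi_{Z_i}P_i}_{G_\cap}$, i.e.\ a \emph{no}-instance of \bgpProjPrecIn. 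So \wdptPrecIn is the conjunction of one \bgpPrecIn question and (in combined complexity, polynomially) many complemented \bgpProjPrecIn questions.

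\emph{Membership.} In data complexity $k$ is constant. Since \bgpEquIn and \bgpProjEquIn lie in \conp (resp.\ \bgpSubsIn and \bgpProjSubsIn in \pitwo), condition~(a) is a \conp (resp.\ \pitwo) predicate and the constant conjunction~(b) is an \np (resp.\ \sigmatwo) predicate, so \wdptEquIn${}\in{}$\conp${}\cap{}$\np${}={}$\DP and \wdptSubsIn${}\in{}$\pitwo${}\cap{}$\sigmatwo${}={}$\DPtwo. For combined complexity and ${\preceq}={\subseteq}$ the same works: (a) is \pitwo, each conjunct of~(b) is \sigmatwo, and the \sigmatwo certificates for the polynomially many conjuncts bundle into a single one, so~(b) is \sigmatwo and \wdptSubsIn${}\in{}$\DPtwo. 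The delicate case is combined complexity for ${\preceq}={=}$: here \bgpProjEquIn is only \thetatwo-complete, and a polynomial conjunction of \thetatwo predicates need not stay in \thetatwo. I would mimic the \bgpProjEquIn membership argument: first compute $K:=|G_\cap|$ with $O(\log n)$ \np-oracle calls, then decide the \emph{complement} of \wdptEquIn with one further \np-call, namely, guess $G'_\cap\subseteq G\cup H$ with $|G'_\cap|=K$ together with, for each $\alpha\in (G\cup H)\setminus G'_\cap$, a repair $R_\alpha$ and its supported model witnessing $\alpha\notin G_{R_\alpha}$ (this forces $G'_\cap=G_\cap$), and additionally guess either nothing or an index $i$ and a mapping $\mu_i$; then verify in polynomial time that the witnesses are valid and that either $\mu\notin\sem{P}_{G'_\cap}$ or $\mu_i\in\sem{P_i}_{G'_\cap}$ is compatible with $\mu$. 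Altogether $O(\log n)$ \np-calls suffice, so \wdptEquIn${}\in{}$\thetatwo.

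\emph{Hardness.} Using the decomposition above, I would realise \wdptPrecIn as the ``merge'' of a \bgpPrecIn instance $\mathcal I=(G_\mathcal I,\mathcal C_\mathcal I,\mathcal T_\mathcal I,H_\mathcal I,P_1,\mu_\mathcal I)$ and a complemented \bgpProjPrecIn instance $\mathcal J=(G_\mathcal J,\mathcal C_\mathcal J,\mathcal T_\mathcal J,H_\mathcal J,\pi_Y P_2,\mu_\mathcal J)$ over disjoint vocabularies: take unions of data graphs, hypotheses, constraints and targets, set $\mu=\mu_\mathcal I\cup\mu_\mathcal J$, and combine the queries into $Q=(P_1\wedge\top(Y))\OPT P_2$, where $\top(Y)$ is a conjunction letting each variable of $Y=\ivar(P_2)\setminus\ivar(P_1)$ range over all nodes of the $\mathcal J$-part. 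Disjointness of vocabularies makes every ($\subseteq$-minimal) repair of the merge a pair of ($\subseteq$-minimal) repairs of $\mathcal I$ and $\mathcal J$, and supported models decompose likewise, so the intersection of all $\preceq$-repaired graphs of the merge equals $G_{\mathcal I,\cap}\cup G_{\mathcal J,\cap}$. A short calculation with the OPT-semantics then gives that $\mu$ answers $Q$ over this graph iff $\mathcal I$ is a yes-instance \emph{and} $\mathcal J$ is a no-instance. Picking $\mathcal I,\mathcal J$ from the BGP hardness gadgets of Theorems~\ref{thm:ARbgp} and~\ref{thm:easyIAR} --- one encoding a \conp (resp.\ \pitwo) condition, one an \np (resp.\ \sigmatwo) condition --- and reducing from the canonical \DP-complete (resp.\ \DPtwo-complete) problem on pairs of (Q)BFs yields \DP-hardness of \wdptEquIn and \DPtwo-hardness of \wdptSubsIn in data complexity, and \DPtwo-hardness of \wdptSubsIn in combined complexity. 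Finally, \thetatwo-hardness of \wdptEquIn in combined complexity is immediate from the \thetatwo-completeness of \bgpProjEquIn: take $Q=\top\OPT P$ and $\mu=\{\,\}$, for which $\mu$ answers $Q$ over $G_\cap$ exactly when $\sem{P}_{G_\cap}=\emptyset$, i.e.\ exactly when the complementary \bgpProjEquIn instance fails.

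The hard part, I expect, will be the combined-complexity \thetatwo membership of \wdptEquIn: the naive ``conjunction of \thetatwo conditions'' only lands in \deltatwo, and getting down to \thetatwo needs the two-phase trick (compute $|G_\cap|$, then resolve the complement with a single extra \np-call). A secondary point requiring care is the merge lemma for hardness --- that $\subseteq$-minimality of repairs decomposes componentwise and that validation under the supported-model semantics splits across the two disjoint vocabularies --- but this should be routine once stated precisely.
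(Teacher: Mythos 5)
Your proposal is correct and matches the paper's proof essentially step for step: the same decomposition of a \wdpt answer into one \bgpPrecIn yes-instance plus several complemented \bgpProjPrecIn instances, the same two-phase oracle argument for the combined-complexity \thetatwo bound (compute $|G_\cap|$ with logarithmically many \np-calls, then resolve the complement with one more), and the same disjoint-vocabulary merge with query $(P_1\wedge\top(Y))\OPT P_2$ for hardness. The only blemish is notational: where you write $\conp\cap\np=\DP$ and $\pitwo\cap\sigmatwo=\DPtwo$ you mean the class conjunctions $\conp\land\np$ and $\pitwo\land\sigmatwo$ (intersections of languages, as the paper defines $\mathsf{C}\land\mathsf{D}$), not intersections of the classes themselves.
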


\begin{proof}
    We show the following for data and combined complexity, and ${\preceq}\in \{=,\subseteq\}$ (it also holds for $\leq$): If \bgpPrecIn is $\mathsf{C}$-complete and \bgpProjPrecIn is $\mathsf{D}$-complete, then \wdptPrecIn is complete for $\mathsf{C} \land \mathsf{coD} = \{C\cap D \mid C\in \mathsf{C} \text{ and } D\in \mathsf{co}\mathsf{D}\}$.
    Note that this implies the statements in the theorem as $\DP = \conp \land \np$, $\DPtwo = \pitwo \land \sigmatwo$, and $\thetatwo = \conp \land \thetatwo$.
    
    {\small$\blacktriangleright$} (Membership)
    we may assume that a \wdpt $Q$ is of the form 
    $(( \dots ((P\OPT P_1)\OPT P_2) \dots)\OPT P_k) $
    with $\ivar(P) = \idom(\mu)$.    
    Then $\mu$ is an answer to $Q$ if and only if $\mu$ is an answer to $P$ 
    but, for all $i$, $\mu$ \textit{cannot} be extended to an answer of $P_i$.
    Checking if $\mu$ is an answer to $P$ corresponds to identifying yes-instances of \bgpEquIn; 
    checking that $\mu$ cannot be extended to an answer of any $P_i$ corresponds to 
    identifying no-instances of \PBEquIn (note that the variables in $\ivar(P_i) \setminus \ivar(P)$
    behave like bound variables in this case, since we are not interested in a particular 
    extension of $\mu$ to these variables but to {\em any} extension).
    The problem \wdptPrecIn can, therefore, be seen as the intersection of \bgpPrecIn and (multiple) \bgpProjPrecIn.
    Notice that identifying multiple no-instances of \PBEquIn does not increase the complexity.

    {\small$\blacktriangleright$} (Hardness)
    For the hardness part, we reduce from the complete problem \big( \bgpPrecIn, co-\bgpProjPrecIn\big), whose instances are pairs $(\mathcal{I},\mathcal{J})$, where $\mathcal{I}$ and $\mathcal{J}$ are instances of \bgpPrecIn and \bgpProjPrecIn, respectively. 
    That is, they have the form 
    $\mathcal{I} = (G_\mathcal{I}, \mathcal{C}_\mathcal{I}, \mathcal{T}_\mathcal{I}, H_\mathcal{I}, 
    Q_\mathcal{I}, \mu_\mathcal{I})$ and 
    $\mathcal{J} = (G_\mathcal{J}, \mathcal{C}_\mathcal{J}, \mathcal{T}_\mathcal{J}, H_\mathcal{J}, 
    Q_\mathcal{J}, \mu_\mathcal{J})$, respectively, with
    $Q_\mathcal{I}(X) = P_1(X)$ and $Q_\mathcal{J}(Y) = \pi_Y P_2(Y,Z)$.
    
    A pair $(\mathcal{I},\mathcal{J})$ is a yes-instance instance if and only if both $\mathcal{I}$ and $\mathcal{J}$ are yes-instances, that is, if $\mu_\mathcal{I}$ is an answer to $Q_\mathcal{I}$ over the intersection of all $\preceq$-repaired graphs of $G_\mathcal{I}$ w.r.t.\ $(( \mathcal{C}_\mathcal{I}, \mathcal{T}_\mathcal{I}), H_\mathcal{I})$ and     $\mu_\mathcal{J}$ is {\em not} an answer to $Q_\mathcal{J}$ over the intersection of all $\preceq$-repaired graphs of $G_\mathcal{J}$ w.r.t.\ $(( \mathcal{C}_\mathcal{J}, \mathcal{T}_\mathcal{J}), H_\mathcal{J})$.

    We may assume that the instances $\mathcal{I}$ and $\mathcal{J}$ share no symbols; in other words, they use distinct variables, constants, classes, properties, \dots. 
    Then, to construct an instance of \wdptPrecIn, we can combine the shapes graphs, the data graphs, the hypotheses, and the variable assignments $\mu = \mu_\mathcal{I} \cup \mu_\mathcal{J}$.
    Additionally, we add a new class $V_\mathcal{J}$ that consists of all nodes $V(G_\mathcal{J})\cup V(H_\mathcal{J})$.
    Furthermore, we add auxiliary nodes $\texttti{s}, \texttti{e}$ and an arbitrary linear order on $V(G_\mathcal{J})\cup V(H_\mathcal{J})$, where $next_{\mathcal{J}}$ indicates immediate predecessors and successors, and which starts in $\texttti{s}$ and ends in $\texttti{e}$.
    Let $G^*$ be these new atoms.
    Then, we also add the constraint 
    \begin{align*}
        \mathsf{val} \leftrightarrow \forall next_\mathcal{J}^*.(\texttti{s} \lor \texttti{e} \lor V_\mathcal{J}) \land \exists next_\mathcal{J}^*.\texttti{e}
    \end{align*}
    together with the target $\mathsf{val}(\texttti{s})$.

    Moreover, we define the query of the combined instance as $Q(X, Y, Z) = P_1(X) \land \bigwedge_{y\in Y} V_{\mathcal{J}}(y) \OPT P_2(Y, Z)$.

    Let $(A_i,D_i)_i, (A_j,D_j)_j$ be the $\preceq$-repairs of $\mathcal{I}$ and $\mathcal{J}$, respectively.
    Then the $\preceq$-repairs of the constructed instance are exactly $(A_i \cup A_j, D_i \cup D_j)_{i,j}$ due to the instances sharing no symbols.
    Furthermore, due to disjointedness, we have the intersection of the $\preceq$-repaired graphs
    \begin{align*}
        G_\cap & = \bigcap_{i,j} (A_i \cup A_j) \cup \bigcap_{i,j}((G_\mathcal{I} \cup  G_\mathcal{J}) \setminus (D_i \cup D_j)) \cup G^*\\
        & = \bigcap_{i} A_i \cup \bigcap_{i}(G_\mathcal{I} \setminus D_i) \cup \bigcap_{j} A_j \cup \bigcap_{j}(G_\mathcal{J} \setminus D_j) \cup G^* \\
        & = G_{\mathcal{I},\cap} \cup G_{\mathcal{J},\cap} \cup G^*,
    \end{align*}
    where $G_{\mathcal{I},\cap}, G_{\mathcal{J},\cap}$ are the intersections of the $\preceq$-repaired graphs of $\mathcal{I}$ and $\mathcal{J}$, respectively.
    Thus, $\mu$ is an answer to $Q$ over the intersection $G_\cap$ if and only if $\mu_\mathcal{I}$ is an answer to $Q_\mathcal{I}$ over $G_{\mathcal{I},\cap}$ and $\mu_\mathcal{J}$ is not an answer to $Q_\mathcal{J}$ over $G_{\mathcal{J},\cap}$.
\end{proof}

\begin{theoremAppendix}
\label{thmApp:wdptProj}
The following statements are true for data complexity:
\begin{itemize}
    \item \wdptProjEquIn is \thetatwo-c.
    \item \wdptProjSubsIn is \thetathree-c.
\end{itemize}
The following statements are true for combined complexity:
\begin{itemize}
    \item \wdptProjEquIn and \wdptProjCardIn are \sigmatwo-c.
    \item \wdptProjSubsIn is \thetathree-c.
\end{itemize}
\end{theoremAppendix}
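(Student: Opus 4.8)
The plan is to obtain the membership bounds by reusing the oracle-counting machinery already deployed for Theorem~\ref{thm:easyIAR}, and the matching lower bounds by lifting the propositional-encoding technique of Theorem~\ref{thm:ARbgp} from a single formula to a \emph{list} of formula pairs.

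\emph{Membership.} For each preference order I would first compute the exact cardinality $K$ of the intersection $G_\cap$ of all $\preceq$-repaired graphs by oracle calls of the form ``are there at least $|G\cup H|-c$ atoms of $G\cup H$ each missing from some $\preceq$-repaired graph?'', and then guess a set $G'_\cap\subseteq G\cup H$ with $|G'_\cap|=K$ together with, for every $\alpha\in(G\cup H)\setminus G'_\cap$, a $\preceq$-repair $R_\alpha$ (plus a supported model) witnessing $\alpha\notin G_{R_\alpha}$; this forces $G'_\cap=G_\cap$, so it only remains to check that $\mu$ is an answer of $Q$ over $G_\cap$. For $\preceq$ in $\{=,\leq\}$ such a witness $R_\alpha$ is verifiable in \np, so $K$ costs logarithmically many \np-oracle calls ($\thetatwo$); for $\preceq$ equal to $\subseteq$, checking $\subseteq$-minimality of $R_\alpha$ needs an extra \conp-check, so the oracle queries become $\sigmatwo$-queries and $K$ costs logarithmically many $\sigmatwo$-oracle calls ($\thetathree$). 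In data complexity the final evaluation of $Q$ over $G_\cap$ is polynomial, giving $\thetatwo$ for \wdptProjEquIn and $\thetathree$ for \wdptProjSubsIn. In combined complexity, evaluating a \wdptProj is $\sigmatwo$-complete~\cite{DBLP:journals/tods/Letelier0PS13}; this is absorbed by the $\thetathree$-bound for \wdptProjSubsIn, but pushes \wdptProjEquIn -- and, after also computing the minimum repair size $k$, \wdptProjCardIn -- up to $\sigmatwo$.

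\emph{Hardness.} The $\sigmatwo$-hardness of \wdptProjEquIn and \wdptProjCardIn in combined complexity is immediate: take the $\sigmatwo$-complete problem of deciding whether a mapping is an answer of a \wdptProj over a graph~\cite{DBLP:journals/tods/Letelier0PS13} and set $\mathcal{C}=\mathcal{T}=H=\emptyset$, so that $(\emptyset,\emptyset)$ is the only repair, $G_\cap=G$, and IAR answering coincides with plain query evaluation. For the $\thetatwo$-hardness of \wdptProjEquIn in data complexity I would reduce from the $\thetatwo$-complete problem of deciding, for a list $(\phi_1,\psi_1),\dots,(\phi_n,\psi_n)$ of propositional formulas over pairwise disjoint variable sets, whether some $i$ has $\phi_i$ unsatisfiable and $\psi_i$ satisfiable. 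Reusing the gadget of Theorem~\ref{thm:ARbgp} -- fixed constraints $\mathcal{C}$ together with one target propagated along an auxiliary $next^*$-order -- I would force every repaired graph to simulate propositional semantics, so that repairs correspond exactly to truth assignments of all variables in the list and the truth value of each $\phi_i,\psi_i$ is functionally determined; consequently $F(\phi_i)$ (resp.\ $F(\psi_i)$) lies in $G_\cap$ iff $\phi_i$ (resp.\ $\psi_i$) is unsatisfiable. Linking indices to formulas by atoms $is_\phi(\texttti{i}_i,\phi_i)$, $is_\psi(\texttti{i}_i,\psi_i)$ (untouched by repairs), the empty mapping is an answer of $\pi_z Q$ with $Q = is_\phi(x,y)\land F(y)\OPT is_\psi(x,z)\land F(z)$ precisely when some $\texttti{i}_i$ has $F(\phi_i)\in G_\cap$ (i.e.\ $\phi_i$ unsatisfiable) while its unique $z$-candidate $\psi_i$ satisfies $F(\psi_i)\notin G_\cap$ (i.e.\ $\psi_i$ satisfiable) -- exactly the desired condition.

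\emph{The $\thetathree$ case and the main obstacle.} For \wdptProjSubsIn (in both data and combined complexity) I would combine the list construction with the $\subseteq$-repair alternation trick of \bgpSubsEx: split each formula's variables into $X$- and $Y$-variables and read the list as pairs of $2$-QBFs $\forall X\exists Y\,\phi_i$ and $\forall X'\exists Y'\,\psi_i$. As in \bgpSubsEx, the gadget should make a $\subseteq$-repair ``record'' whether a given $X$-assignment extends to a model of $\phi_i$, so that $G_\cap$ records whether \emph{every} $X$-assignment does, i.e.\ whether $\forall X\exists Y\,\phi_i$ holds, and likewise for $\psi_i$; essentially the same OPT-query as above then detects an index with $\forall X_i\exists Y_i\,\phi_i$ true and $\forall X'_i\exists Y'_i\,\psi_i$ false, a $\thetathree$-complete list problem, and since validation is only \np-complete the constraints and targets stay fixed so the reduction is a genuine data-complexity reduction. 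I expect this $\thetathree$-hardness to be the main obstacle: one has to splice the propositional-semantics gadget, the $\subseteq$-minimality alternation gadget, and the index-to-formula wiring into a single instance while guaranteeing (a) that the $\subseteq$-repairs are exactly the intended ones (no spurious repairs polluting $G_\cap$ and none of the needed ones blocked), (b) that constraints and targets remain fixed for data complexity, and (c) that the correspondence survives the nested semantics of a \wdptProj answer -- the outer ``$\exists\nu$'', the inner ``$\forall$ over OPT-extensions'', and the QBF alternation buried inside $G_\cap$.
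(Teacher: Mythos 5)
Your overall strategy coincides with the paper's: the membership bounds are obtained exactly as you describe (compute $K=|G_\cap|$ with logarithmically many \np- resp.\ \sigmatwo-oracle calls, then guess $G_\cap$ together with one witnessing repair per missing atom), the \thetatwo-hardness of \wdptProjEquIn in data complexity uses the same list-of-pairs encoding with the same query $is_\phi(x,y)\land F(y)\OPT is_\psi(x,z)\land F(z)$, and the \thetathree-hardness of \wdptProjSubsIn is handled, as you anticipate, by lifting the list construction to pairs of $2$-QBFs via the $\subseteq$-repair alternation gadget (the paper realizes your sketch with an $Ext$ class marking whether the $Y$-variables of a formula are instantiated, so the obstacle you flag is real but surmountable in exactly the way you propose).

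There is, however, one concrete error: for the \sigmatwo-hardness of \wdptProjEquIn in combined complexity you set $\mathcal{C}=\mathcal{T}=H=\emptyset$ and claim that $(\emptyset,\emptyset)$ is then the only repair. Under the ``$=$'' preference order there is no minimality filter, so \emph{every} pair $(\emptyset,D)$ with $D\subseteq G$ validates the empty shapes graph and is a repair; consequently $G_\cap=\bigcap_{D\subseteq G}(G\setminus D)=\emptyset$ and your reduction degenerates (it does happen to work for \wdptProjCardIn, where $(\emptyset,\emptyset)$ is the unique cardinality-minimal repair). The paper avoids this by explicitly naming every atom of $G$ in the shapes graph, so that the constraints force every repaired graph to equal $G$; with that (easy) repair of your reduction the argument goes through for both \wdptProjEquIn and \wdptProjCardIn.
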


\begin{proof}

{\small$\blacktriangleright$} (Hardness for \wdptProjEquIn data complexity)
    We reduce from the \thetatwo-complete problem \textsc{List-Pair-SAT} (for hardness see Lemma \ref{lem:listPair} in Section~\ref{sec:reference}).
    The instances consist of a list $(\phi_1, \psi_1),\dots, (\phi_n, \psi_n)$ of pairs of propositional formula in 3-CNF, i.e., $\phi_i = \bigwedge^{m(\phi_i)}_{j=1} (\texttti{l}_{1j}(\phi_i) \lor \texttti{l}_{2j}(\phi_i) \lor \texttti{l}_{3j}(\phi_i))$ and $\psi_i = \bigwedge^{m(\psi_i)}_{j=1} (\texttti{l}_{1j}(\psi_i) \lor \texttti{l}_{2j}(\psi_i) \lor \texttti{l}_{3j}(\psi_i))$.
    $(\phi_1, \psi_1),\dots, (\phi_n, \psi_n)$ is a yes instance if some pair $(\phi_i, \psi_i)$ is such that $\phi_i$ is unsatisfiable while $\psi_i$ is satisfiable.
    W.l.o.g, we assume the variables of all formulas to be pairwise disjoint.
    
    The encoding is similar to the one used in Theorem \ref{thm:app:first}.
    The nodes of the data graph $G$ consist of:
    \begin{itemize}
        \item All literals $\texttti{l}$.
        \item Clauses $\texttti{C}_{j}(\phi_i)= (\texttti{l}_{1j}(\phi_i) \lor \texttti{l}_{2j}(\phi_i) \lor \texttti{l}_{3j}(\phi_i))$ and $\texttti{C}_{j}(\psi_i)= (\texttti{l}_{1j}(\psi_i) \lor \texttti{l}_{2j}(\psi_i) \lor \texttti{l}_{3j}(\psi_i))$.
        \item Formulas $\phi_i,\psi_i$.
        \item Counters $\texttti{i}_i$ for $i=1,\dots,n$.
        \item Auxiliary elements $\texttti{s}, \texttti{e}$ to $G$.
    \end{itemize}
    
    Literals $\texttti{l}$ are added to the class $Lit$, clauses $\texttti{C}$ to $Cl$, formulas $\phi_i,\psi_i$ to $Phi$, and counters $\texttti{i}$ to $Cnt$.

    We now connect the nodes as they are connected logically.
    \begin{itemize}
        \item Dual literals $\texttti{l}$ and $\texttti{l}_d$ are connected via $dual(\texttti{l}, \texttti{l}_d)$.
        \item Literals $\texttti{l}_{kj}(\phi_i)$ (resp. $\texttti{l}_{kj}(\psi_i)$) are connected to clauses $\texttti{C}_{j}(\phi_i)$ (resp. $\texttti{C}_{j}(\psi_i)$) via $or(\texttti{l}_{kj}(\phi_i), \texttti{C}_{j}(\phi_i))$ (resp. $or(\texttti{l}_{kj}(\phi_i), \texttti{C}_{j}(\psi_i))$).
        \item Each clause $\texttti{C}_{j}(\phi_i)$ (resp. $\texttti{C}_{j}(\psi_i)$) is connected $\phi_{i}$ (resp. $\psi_i$) via $and(\texttti{C}_{j}(\phi_i), \phi_{i})$ (resp. $and(\texttti{C}_{j}(\psi_i), \psi_{i})$).
        \item Counters $\texttti{i}_i$ are connected to the formulas $\phi_{i}$ and $\psi_{i}$ via $is_{\phi}(\texttti{i}_i, \phi_{i})$ and $is_{\psi}(\texttti{i}_i, \psi_{i})$, respectively.
    \end{itemize}

    Each literal $\texttti{l}$, clause $\texttti{C}_{ik}$, and formula $\phi_{i},\psi_{i}$ is added to both the class $T$ and $F$, indicating that they can be either \textit{true} or \textit{false}.
    All nodes of $G$ are put into a single arbitrary linear order $\preceq_{next}$, encoded by the property $next$, indicating immediate predecessors and successors, and which starts in $\texttti{s}$ and ends in $\texttti{e}$.

    We use the following constraints $\mathcal{C}$:
    \begin{align*}
        \mathsf{lit} \leftrightarrow {} & Lit \land ((T \land \lnot F \land \exists dual.F) \\
         & \quad {} \lor (F  \land \lnot T \land \exists dual.T))  \\
        \mathsf{cl} \leftrightarrow {} & Cl \land {=_3}or^- \land {=_1}and \land (\\
        & (F \land \lnot T \land \forall or^-.F) \\
        & \lor (T \land \lnot F \land \exists or^-.T)  \\
        \mathsf{phi} \leftrightarrow {} & Phi \land ((F \land \lnot T \land \exists and^-.F) \\
        & \quad {} \lor (T \land \lnot F \land \forall and^-.T)  \\
        \mathsf{cnt} \leftrightarrow {} & Cnt \land {=_1}is_{\phi} \land {=_1}is_{\psi}\\
        \mathsf{val} \leftrightarrow {} & \forall next^*.(\texttti{s}\lor \texttti{e} \lor \mathsf{lit}\lor \mathsf{cl}\lor \mathsf{phi}\lor \mathsf{cnt}) \\
        & {}\land \exists next^*.\texttti{e}
    \end{align*}
    The target is $\{\mathsf{val}(\texttti{s})\} = \mathcal{T}$.
    $\mu$ is $\{\}$ and the query is  $\pi_z$ with
        \begin{align*}
            Q = is_{\phi}(x,y)\land  F(y) \OPT is_{\psi}(x,z)\land  F(z).
        \end{align*}

    For each formula $\phi_i$ and $\psi_i$, the repairs have the (independent) choice of what truth values to assign to each literal $\texttti{x}$ by retaining either $T(\texttti{x})$ and not $F(\texttti{x})$ or the other way around.
    The truth values of the dual literals, the clauses, and the formula itself are then all functionally determined by this choice.
    Thus, $F(\phi_i)$ (resp. $F(\psi_i)$) is in the repaired graph if and only if $\phi_i$ (resp. $\psi_i$) evaluates to false under the given choice of the truth values for the variables that appear in $\phi_i$ (resp. $\psi_i$).
    Since each truth value assignment constitutes a repair, $F(\phi_i)$ (resp. $F(\psi_i)$) appears in the intersection of all repaired graphs if and only if $\phi_i$ (resp. $\psi_i$) is unsatisfiable.
    Hence, $\mu = \{\}$ is an answer to $Q$ if and only if there is an $\texttti{i}_i$ (variable $x$) such that $\phi_i$ (variable $y$) is unsatisfiable but $\psi_i$ (variable $z$) is not unsatisfiable, i.e., $\psi_i$ is satisfiable.
    This completes the reduction.

    {\small$\blacktriangleright$} (Hardness for \wdptProjSubsIn data complexity)
    To show \thetathree-hardness we have to extend the reduction used in the \wdptProjEquIn case, proceeding further up the polynomial hierarchy.
    We reduce from the following \thetathree-complete problem \textsc{List-Pair-}$2$\textsc{-QBF} (for hardness see Lemma \ref{lem:listPair} in Section~\ref{sec:reference}).
    The instances consist of a list $(\phi_1(X(\phi_1),Y(\phi_1))$, $\psi_1(X(\psi_1),Y(\psi_1)))$, $\dots$, $(\phi_n(X(\phi_n),Y(\phi_n))$, $\psi_n(X(\psi_n),Y(\psi_n)))$ of pairs of propositional formula in 3-CNF, i.e., $\phi_i = \bigwedge^{m(\phi_i)}_{j=1} (\texttti{l}_{1j}(\phi_i) \lor \texttti{l}_{2j}(\phi_i) \lor \texttti{l}_{3j}(\phi_i))$ and $\psi_i = \bigwedge^{m(\psi_i)}_{j=1} (\texttti{l}_{1j}(\psi_i) \lor \texttti{l}_{2j}(\psi_i) \lor \texttti{l}_{3j}(\psi_i))$.
    An instance is a yes instance if some pair $(\phi_i(X(\phi_i),Y(\phi_i)), \psi_i(X(\psi_i),Y(\psi_i)))$ is such that every truth assignment of the $X(\phi_i)$ can be extended to a model of $\phi_i$ while this is not the case for the $X(\psi_i)$ and $\psi_i$.
    W.l.o.g, we assume the variables of all formulas to be pairwise disjoint.
    
    For the encoding, we take the data graph $G$ from before, excluding the class $Lit$, and extend it further:
    \begin{itemize}
        \item for each literals $\texttti{l}$ of a $X$ (resp. $Y$) variable we add $Lit_X(\texttti{l})$ (resp. $Lit_Y(\texttti{l})$).
        \item Each literals $\texttti{l}$ of $Y(\phi_i)$ (resp. $Y(\psi_i)$) and clause $\texttti{C}_j(\phi_i)$ (resp. $\texttti{C}_j(\psi_i)$) is connected to $\phi_i$ (resp. $\psi_i$) via $isIn(\texttti{l}, \phi_i)$ and $isIn(\texttti{C}_j(\phi_i), \phi_i)$ (resp. $isIn(\texttti{l}, \psi_i)$ and $isIn(\texttti{C}_j(\psi_i), \psi_i)$).
        \item For each $\phi_i,\psi_i$ we add $Ext(\phi_i), Ext(\psi_i)$ to $G$ which shall indicate whether the corresponding $Y$ variables are instantiated.
    \end{itemize}

    We use the following constraints $\mathcal{C}$:
    \begin{align*}
        \mathsf{lit}_X \leftrightarrow {} & Lit_X \land ((T \land \lnot F \land \exists dual.F) \\
         & \quad {}\lor (F  \land \lnot T \land \exists dual.T))  \\
         \mathsf{lit}_Y \leftrightarrow {} & Lit_Y \land \exists isIn \land ( \\
         & (\exists isIn.Ext \land ( \\
         & \quad(T \land \lnot F \land \exists dual.F) \\
         & \quad {}\lor (F  \land \lnot T \land \exists dual.T))  \\
         & \lor ( \lnot \exists isIn.Ext \land \lnot F  \land \lnot T))  \\
        \mathsf{cl} \leftrightarrow {} & Cl \land {=_3}or^- \land {=_1}and \land \exists isIn \land (\\
        & (\exists isIn.Ext \land ( \\
        & \quad (F \land \lnot T \land \forall or^-.F) \\
        & \quad {} \lor (T \land \lnot F \land \exists or^-.T))  \\
         & \lor ( \lnot \exists isIn.Ext \land \lnot F  \land \lnot T))  \\
        \mathsf{phi} \leftrightarrow {} & Phi \land ((Ext\land T \land \lnot F \land \forall and^-.T)  \\
        & \quad {} \lor (\lnot Ext\land \lnot F \land \lnot T))  \\
        \mathsf{cnt} \leftrightarrow {} &Cnt \land {=_1}is_{\phi} \land {=_1}is_{\psi}\\
        \mathsf{val} \leftrightarrow {} & \forall next^*.(\texttti{s}\lor \texttti{e} \lor \mathsf{lit}_X \lor \mathsf{lit}_Y\lor \mathsf{cl}\lor \mathsf{phi}\lor \mathsf{cnt}) \\
        & {}\land \exists next^*.\texttti{e}
    \end{align*}
    We leave the target and $\mu$ unchanged, while the query now is $\pi_zQ$ with
    \begin{align*}
        Q(z) = is_\phi(x,y) \land Ext(y) \OPT is_\psi(x,z)\land  Ext(z).
    \end{align*}
    
    The idea is as follows.
    For each formula, the repairs have the choice of whether to instantiate only the $X$ variables or both the $X$ and the $Y$ variables.
    In the first case, the formula has to be removed from the class $Ext$ and the $X$ variables instantiated, but, at the same time, all literals of the $Y$ as well as the clauses and the formula itself have to be removed from both $T$ and $F$.
    In the second case, the formula remains in the class $Ext$, but both the $X$ and the $Y$ variables have to be instantiated, and the truth values of the dual literals, the clauses, and the formula itself are then all functionally determined by this choice.
    Furthermore, the formula itself has to evaluate to true.

    Thus, intuitively, when concentrating on a single formula, repairs correspond 1-1 to instantiations of only the $X$ variables or a combination the $X$ and the $Y$ that satisfy the formula.
    When the instantiation is the same on the $X$ variables, a repair of the second kind is a subset of the repair of the first kind.
    Thus, a formula $\phi_i$ (resp. $\psi_i$) is not in the class $Ext$ in some $\subseteq$-repaired graph $G_R$ if and only if the instantiation of the $X$ variables represented by $G_R$ cannot be extended to the $Y$ variables while satisfying the $\phi_i$ (resp. $\psi_i$). 

    Concluding, $\mu = \{\}$ is an answer to $Q$ if and only if there is an $\texttti{i}_i$ (variable $x$) such that for $\phi_i$ (variable $y$), every instantiation of the variables $X(\phi_i)$ can be extended to a model of $\phi_i$, while this is not true for $\psi_i$ (variable $z$) and the variables $X(\psi_i)$.

    {\small$\blacktriangleright$} (Hardness for \wdptProjEquIn and \wdptProjCardIn combined complexity)
    Recall that answering \wdptProj\ queries over fixed data graphs $G$ is already \sigmatwo-complete combined complexity \cite{DBLP:journals/tods/Letelier0PS13}.
    Thus, we can explicitly name every atom from $G$ in the shapes graph such that the only repaired graph is $G$ itself.

    {\small$\blacktriangleright$} (Membership for \wdptProjEquIn and \wdptProjCardIn combined complexity)
    To check whether $\mu\colon X \rightarrow V(G)$ is an answer to $\pi_X Q$ with $ Q =(( \dots ((P\OPT P_1)\OPT P_2) \dots)\OPT P_k) $, we can do the following:
    \begin{enumerate}
        \item Let $S:=|G|+|H|$.
        \item Compute the size $K$ of the intersection of all repaired graphs.
        \item Guess repairs $R_1,\dots,R_{S-K+1}$ and a $\mu_{P}\colon \ivar(P)\setminus X\rightarrow V(G)$.
        \item Check that $R_i$ is valid (and no smaller repair exists).
        \item Check that $|\bigcap_i G_{R_i}| = K$.
        \item Check that $\mu \cup \mu_P$ is an answer to $Q$ over $\bigcap_i G_{R_i}$.
    \end{enumerate}
    Clearly, all checks only require access to an \np-oracle and we have already seen how to compute $K$ deterministically with access to an \np-oracle.

    {\small$\blacktriangleright$} (Membership for \wdptProjEquIn data complexity)
    We can proceed similarly to the case \wdptProjCardIn, but we skip the computation of $k$.
    That is, we simply determine the size $K$ of the intersection of all repaired graphs.
    Then, we ask the question of whether $\mu$ is an answer to the query $\pi_XQ$.
    Given $K$, this only requires \np-power as a NTM can guess the intersection of all repaired graphs $G_\cap$ (size $K$), and a witnessing repair (of size $k$) together with a model supporting it for each atom not in the intersection.
    Then, the NTW can go through all (polynomially many) instantiations of the bound variables and, if needed, also through all (polynomially many) extensions to the remaining variables in the OPT-part, always checking whether this variables assignment satisfies the corresponding query over $G_\cap$.
    Thus, the NTW can return yes if $\mu$ is an answer.

    {\small$\blacktriangleright$} (Membership for \wdptProjSubsIn combined complexity)
    The argument for \thetathree-membership combined complexity of \wdptProjSubsIn works similar to the case of \wdptProjEquIn data complexity.
    However, we have to check that the repairs are minimal and checking that $\mu$ is, in fact, an answer to the query $\pi_XQ$ requires more power.

    Concretely, we start by determining the size of the intersection of all $\subseteq$-repaired graphs with logarithmically many \sigmatwo-oracle calls.
    The calls are of the form: ``Is size of the intersection of all $\subseteq$-repaired graphs less than $c$?''
    This is a \sigmatwo-question as we can guess the atoms not in the intersection and a repair per atom as a witness.
    With access to an \np-oracle, we then have to check that there are no smaller repairs.

    Given the size $K$ of the intersection, the question of whether $\mu$ is an answer to $\pi_XQ$ over the intersection only requires a single further \sigmatwo-oracle call.
    A NTM with access to an \np-oracle simply has to guess the values $\mu_X$ for the variables $X$, and the intersection $G_\cap$ together with a witnessing repair per atom not in $G_\cap$,
    Then, it can check that each repair is minimal and whether $\mu\cup \mu_X$ is an answer to the query over the intersection.
    For the last check, a single \np-oracle call suffices.
\end{proof}

\section{Full Proofs for CQA for Maximal Repairs}

First, we recall the Boolean hierarchy \bh, which is less common than the other complexity classes encountered so far: on the bottom levels, we have \bhone = \np and  \bhtwo = the class of languages obtained as the intersection of a language in \bhone and a language in \conp. 
In other words, we have $\bhtwo = \np \land \conp =\DP$. 
For the general case, we have $\bhtwok = \bigvee_{i=0}^k \DP $, i.e., the class of languages that are the union of $k$ \DP languages, and $\bhtwokplus = \np \lor \bigvee_{i=0}^k \DP$, i.e., the class of languages that are the union of $k$ \DP languages and an \np language.
In order to establish membership of a problem in \bh, it suffices to show that the problem can be solved in deterministic polynomial time with constantly many calls to an \np-oracle. 
Likewise, to show that a problem is hard for every level of \bh, it suffices to exhibit a reduction from the combinations of $k$ pairs of instances of an \np-hard problem for arbitrary $k > 0$ asking if there is a pair $(\mathcal{I}_i,\mathcal{J}_i)$ such that $\mathcal{I}_i$ is a yes-instance and $\mathcal{J}_i$ is a no-instance.

\begin{theoremAppendix}
\label{thmApp:maxData}
\mbgpEquQ is \bhk-hard data complexity for every $k>0$ and $\mathcal{S}\in \{\exists, \forall, \cap\}$.
Furthermore, \mwdptProjEquEx, \mwdptProjEqu, and \mwdptEquIn are in \bh data complexity.
\end{theoremAppendix}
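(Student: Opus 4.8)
The plan is to handle the two halves separately.

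\emph{Hardness.}
I would fix $k>0$ and reduce from the bounded \textsc{List-Pair-SAT} problem, which is \bhtwok-complete (cf.\ the sketch of Theorem~\ref{thm:maxData}): given $k$ pairs $(\phi_1,\psi_1),\dots,(\phi_k,\psi_k)$ of $3$-CNF formulas over pairwise disjoint variable sets and without tautologies, decide whether some pair satisfies ``$\phi_i$ unsatisfiable and $\psi_i$ satisfiable''.
Since \bhtwok-hardness for all $k$ yields \bhk-hardness for all $k$, this is enough.
For the data graph $G$ and constraints $\mathcal{C}$ I would reuse the propositional encoding from the proofs of Theorem~\ref{thm:ARbgp} and Theorem~\ref{thmApp:wdptProj}: all literals, clauses and formulas of the $2k$ input formulas are nodes, wired through properties $dual,or,and$ so that truth values propagate correctly; every such node lies in both classes $T$ and $F$; a counter node $\texttti{i}_i$ points via $is_\phi,is_\psi$ to $\phi_i,\psi_i$; a linear order $next$ with endpoints $\texttti{s},\texttti{e}$ is encoded; and $\mathcal{C}$ contains shapes $\mathsf{lit},\mathsf{cl},\mathsf{phi}$ forcing correct propagation together with a shape $\mathsf{val}$ (using $\forall next^{*}$) that, once validated at $\texttti{s}$, forces every literal/clause/formula to obey its shape.

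The new ingredient is a carefully sized target set.
I would take $\mathcal{T}_1$ to consist of $N$ syntactic copies of $\mathsf{val}(\texttti{s})$ --- these are ``all or nothing'' and always jointly validatable (e.g.\ by the all-false assignment); $\mathcal{T}_2$ to contain, for each of the $2k$ formulas, a block of $P$ identical targets validated exactly when that formula is labelled $T$ in the repaired graph; plus a fresh node $\texttti{g}$, a fact $B(\texttti{g})\in G$, a shape $\mathsf{goal}$ with $\mathsf{goal}(\texttti{g})$ validated iff $B(\texttti{g})$ is still present \emph{and} some counter reaches, along $is_\phi,is_\psi$, an $F$-labelled $\phi_i$ together with a $T$-labelled $\psi_i$; $M$ copies of $\mathsf{goal}(\texttti{g})$ forming $\mathcal{T}_3$; and $M-1$ copies of a target $\mathsf{nogoal}(\texttti{g})$, validated iff $B(\texttti{g})$ has been deleted, forming $\mathcal{T}_3'$.
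Setting $1<M<P$ and $N>2kP+2M$, $H=\emptyset$, $Q=B(x)$ and $\mu=\{x\mapsto\texttti{g}\}$, a routine counting argument should then show: every max-repair validates all of $\mathcal{T}_1$, hence propagates truth correctly; since each $\mathcal{T}_2$-block ($P$ targets) outweighs $\mathcal{T}_3$ ($M$ targets), no max-repair falsifies a \emph{satisfiable} formula, so the $F$-labelled formulas of a max-repaired graph are exactly the unsatisfiable ones; consequently $\mathsf{goal}(\texttti{g})$ is validatable while keeping $B(\texttti{g})$ iff the list is a yes-instance, in which case keeping $B(\texttti{g})$ (reward $M$) beats deleting it (reward $M-1$), whereas on a no-instance deleting $B(\texttti{g})$ strictly wins.
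Thus $B(\texttti{g})$ lies in \emph{all} max-repaired graphs on yes-instances and in \emph{none} on no-instances, so $\mu$ is an answer of $Q$ under each of brave, AR and IAR semantics iff the list is a yes-instance; moreover a repair always exists, so this reduction also settles the corresponding \cqa case, matching Theorem~\ref{thm:maxData}.

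\emph{Membership.}
Here I would exploit that in \emph{data} complexity $\mathcal{T}$ is fixed, so it has only constantly many subsets.
First compute $m^{*}=\max\{\,|\mathcal{T}'| : \mathcal{T}'\subseteq\mathcal{T}\text{ and }(G,\mathcal{C},\mathcal{T}',H)\text{ has a repair}\,\}$ by one \np-oracle call per $\mathcal{T}'$ (guess a repair plus a supported model), which is constantly many calls.
The max-repaired graphs are then exactly the repaired graphs of the constantly many instances $(G,\mathcal{C},\mathcal{T}',H)$ with $|\mathcal{T}'|=m^{*}$.
Hence, given $m^{*}$: \mwdptProjEquEx becomes a finite disjunction of \np-questions ``is $\mu$ an answer of $Q$ over some repaired graph of $(G,\mathcal{C},\mathcal{T}',H)$?'' (using the \np\ data bound for $\pi$-wdQ brave answering from Theorem~\ref{thm:ARbgp}); \mwdptProjEqu symmetrically becomes the conjunction of the dual \conp-questions; and for \mwdptEquIn one invokes the characterisation from Section~\ref{sect:iar} --- $\mu$ answers $((\dots(P\OPT P_1)\dots)\OPT P_k)$ over an intersection iff $\mu$ answers $P$ over it and is not extendable to any $P_i$ over it --- where, by monotonicity of \bgps, ``$\mu$ answers $P$ over the intersection of the size-$m^{*}$ repaired graphs'' is a finite conjunction of \conp-questions and ``$\mu$ is not extendable to $P_i$'' reduces, via the witness-guessing argument behind the \conp\ data bound for \bgpProjEquIn in Theorem~\ref{thm:easyIAR}, to a single \np-question.
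In all three cases the whole procedure runs in deterministic polynomial time with constantly many \np-oracle calls, hence in \bh; together with the hardness part (which carries over since \bgp is subsumed by each of these query classes) this even gives completeness for every level of \bh.

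\emph{Main obstacle.}
The delicate point is entirely on the hardness side: making one fixed, semantics-oblivious query $Q=B(x)$ work simultaneously for brave, AR and IAR forces $B(\texttti{g})$ to be present in \emph{all} or in \emph{no} max-repaired graph, and pinning down the target multiplicities that achieve this --- above all $P>M$, so that a max-repair is never tempted to falsify a satisfiable formula merely to collect the $\mathcal{T}_3$ reward (which is what makes ``$F$-labelled $=$ unsatisfiable'' hold in every max-repaired graph) --- together with getting the accompanying $\mathsf{val}/\mathsf{goal}/\mathsf{nogoal}$ shapes exactly right, is where the real work lies; the counting argument itself is then routine.
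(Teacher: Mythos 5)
Your overall strategy coincides with the paper's on both halves: the hardness part reduces from the same $k$-pair SAT problem via the standard propositional encoding, using a large block of always-satisfiable targets to make ``respect the semantics of propositional logic'' a hard constraint and per-formula targets to force max-repairs to set exactly the satisfiable formulas to true; the membership part exploits that a fixed $\mathcal{T}$ has only constantly many subsets, determines the maximal number $m^{*}$ of validatable targets with constantly many \np-oracle calls, and then reuses the earlier (co-)\np\ data-complexity arguments relative to size-$m^{*}$ target sets. Where you genuinely diverge is the last step of the hardness gadget. The paper keeps everything inside the propositional circuit: it adds pair nodes $\texttti{p}_i$ (true iff $\phi_i$ is false and $\psi_i$ is true) and an instance node $\mathcal{I}$ (the disjunction of the pairs), so that the truth value of $\mathcal{I}$ is \emph{functionally determined} and identical in every max-repaired graph, and the query is simply $T(x)$ with $x\mapsto\mathcal{I}$; brave, AR and IAR then trivially coincide with no further target bookkeeping. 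You instead externalize this step into a second optimization layer ($M$ copies of $\mathsf{goal}(\texttti{g})$ versus $M-1$ copies of $\mathsf{nogoal}(\texttti{g})$) that forces all max-repairs to agree on whether $B(\texttti{g})$ is kept. Your counting ($1<M<P$, $N>2kP+2M$) does work, but it is strictly more machinery than needed, and it makes the correctness argument depend on a second tier of weight comparisons that the paper's construction avoids entirely.

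One point you must repair: targets form a \emph{set} of shape atoms, so ``$N$ syntactic copies of $\mathsf{val}(\texttti{s})$'', ``$P$ identical targets'' and ``$M$ copies of $\mathsf{goal}(\texttti{g})$'' collapse to single atoms as written. The paper realizes multiplicities by introducing distinct auxiliary nodes ($\texttti{s},\texttti{s}_{\phi_1},\texttti{s}_{\psi_1},\dots$), each carrying its own $\mathsf{val}$-target; equivalently you could use $N$ (resp.\ $P$, $M$) distinct shape names with identical definitions, all targeted at the same node. Since these multiplicities depend only on the fixed level $k$, either fix preserves data complexity, but without one of them the weight inequalities your argument hinges on are vacuous.
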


\begin{proof}
    {\small$\blacktriangleright$} (Hardness of \mbgpEquQ)
    We reduce from the \bhtwok-complete problem $k$-\textsc{Pair-SAT}.
    An instance $\mathcal{I}$ consist of $k$ pairs $\texttti{p}_1,\dots, \texttti{p}_k=(\phi_1, \psi_1),\dots, (\phi_k, \psi_k)$ of propositional formula in 3-CNF, i.e., $\phi_i = \bigwedge^{m(\phi_i)}_{j=1} (\texttti{l}_{1j}(\phi_i) \lor \texttti{l}_{2j}(\phi_i) \lor \texttti{l}_{3j}(\phi_i))$ and $\psi_i = \bigwedge^{m(\psi_i)}_{j=1} (\texttti{l}_{1j}(\psi_i) \lor \texttti{l}_{2j}(\psi_i) \lor \texttti{l}_{3j}(\psi_i))$.
     $\mathcal{I}$ is a yes-instance if some pair $(\phi_i, \psi_i)$ is such that $\phi_i$ is unsatisfiable while $\psi_i$ is satisfiable.
    W.l.o.g, we assume the variables of all formulas to be pairwise disjoint.
    
    We extend our standard encoding for propositional formulas.
    Concretely, the nodes of the data graph $G$ consist of:
    \begin{itemize}
        \item All literals $\texttti{l}$.
        \item Clauses $\texttti{C}_{j}(\phi_i)= (\texttti{l}_{1j}(\phi_i) \lor \texttti{l}_{2j}(\phi_i) \lor \texttti{l}_{3j}(\phi_i))$ and $\texttti{C}_{j}(\psi_i)= (\texttti{l}_{1j}(\psi_i) \lor \texttti{l}_{2j}(\psi_i) \lor \texttti{l}_{3j}(\psi_i))$.
        \item Formulas $\phi_i,\psi_i$.
        \item The pairs $\texttti{p}_1, \dots, \texttti{p}_k$.
        \item The instance $\mathcal{I}$ itself.
        \item Auxiliary elements $\texttti{s}, \texttti{s}_{\phi_1}, \texttti{s}_{\psi_1}, \dots, \texttti{s}_{\phi_k}, \texttti{s}_{\psi_k}, \texttti{e}$.
    \end{itemize}
    Intuitively, we need the additional $\texttti{s}_{\phi_i}, \texttti{s}_{\psi_i}$ to increase the ``weight'' of some targets such that they then always have to be satisfied.    
    Literals $\texttti{l}$ are added to the class $Lit$, clauses $\texttti{C}$ to $Cl$, formulas $\phi_i,\psi_i$ to $Phi$, pairs $\texttti{p}$ to $Pair$, the auxiliary elements $\texttti{s}, \texttti{s}_{\phi_i}, \texttti{s}_{\psi_i}$ to $Start$, and the instance $\mathcal{I}$ to $Inst$.
    
    As many times before, we connect the nodes as they are connected logically:
    \begin{itemize}
        \item Dual literals $\texttti{l}$ and $\texttti{l}_d$ are connected via $dual(\texttti{l}, \texttti{l}_d)$.
        \item Literals $\texttti{l}_{kj}(\phi_i)$ (resp. $\texttti{l}_{kj}(\psi_i)$) are connected to clauses $\texttti{C}_{j}(\phi_i)$ (resp. $\texttti{C}_{j}(\psi_i)$) via $or(\texttti{l}_{kj}(\phi_i), \texttti{C}_{j}(\phi_i))$ (resp. $or(\texttti{l}_{kj}(\phi_i), \texttti{C}_{j}(\psi_i))$).
        \item Each clause $\texttti{C}_{j}(\phi_i)$ (resp. $\texttti{C}_{j}(\psi_i)$) is connected $\phi_{i}$ (resp. $\psi_i$) via $and(\texttti{C}_{j}(\phi_i), \phi_{i})$ (resp. $and(\texttti{C}_{j}(\psi_i), \psi_{i})$).
        \item Each pair $\texttti{p}_i$ is connected to its components $\phi_{i}$ and $\psi_{i}$ via $and_{\phi}(\phi_{i}, \texttti{p}_i)$ and $and_{\psi}(\psi_{i}, \texttti{p}_i)$, respectively.
        \item Furthermore, each pair $\texttti{p}_i$ is connected to $\mathcal{I}$ via $or(\texttti{p}_i, \mathcal{I})$.
    \end{itemize}
    
    Each literal $\texttti{l}$, clause $\texttti{C}$, formula $\phi_{i},\psi_{i}$, pair $\texttti{p}$, and the instance $\mathcal{I}$ is added to both the class $T$ and $F$, indicating that they can be either \textit{true} or \textit{false}.
    All nodes of $G$ except the $2k$ nodes $\texttti{s}_{\phi_i}, \texttti{s}_{\psi_i}$ are put into a single arbitrary linear order, where $next$ indicates immediate predecessors and successors, and which starts in $\texttti{s}$ and ends in $\texttti{e}$.
    The auxiliary nodes $\texttti{s}_{\phi_i}, \texttti{s}_{\psi_i}$ are added as predecessors of $\texttti{s}$, i.e., $next(\texttti{s}_{\phi_i}, \texttti{s}), next(\texttti{s}_{\psi_i}, \texttti{s})$.
    
    The set of constraints $\mathcal{C}$ then consists of:
    \begin{align*}
        \mathsf{lit} \leftrightarrow {} & Lit \land ((T \land \lnot F \land \exists dual.F) \\
         & \quad {} \lor (F  \land \lnot T \land \exists dual.T))  \\
        \mathsf{cl} \leftrightarrow {} & Cl \land {=_3}or^- \land {=_1}and \land (\\
        & (F \land \lnot T \land \forall or^-.F) \\
        & \lor (T \land \lnot F \land \exists or^-.T)  \\
        \mathsf{phi} \leftrightarrow {} & Phi \land ((F \land \lnot T \land \exists and^-.F) \\
        & \quad {} \lor (T \land \lnot F \land \forall and^-.T))  \\
        \mathsf{pair} \leftrightarrow {} & Pair \land {=_1}and^-_{\phi} \land {=_1}and^-_{\psi} \land {=_1}or \land (\\
        & ((F \land \lnot T \land ({=_1}and^-_{\phi}.T \lor {=_1}and^-_{\psi}.F ))\\
        & \lor (T \land \lnot F \land {=_1}and^-_{\phi}.F \land {=_1}and^-_{\psi}.T )) \\
        \mathsf{inst} \leftrightarrow {} & Inst \land ((F \land \lnot T \land \forall or^-.F) \\
            & \quad {} \lor (T \land \lnot F \land \exists or^-.T))  \\
        \mathsf{phiT} \leftrightarrow {} & Phi \land T  \\
        \mathsf{val} \leftrightarrow {} & \forall next^*.(\texttti{e} \lor Start \lor \mathsf{lit}\lor \mathsf{cl}\lor \mathsf{phi}\lor \mathsf{pair} \lor \mathsf{list}) \\
        & {}\land \exists next^*.\texttti{e}
    \end{align*}
    The targets are $\mathcal{T}=\mathcal{T}_1 \cup \mathcal{T}_2$ with 
    \begin{align*}
        \mathcal{T}_1 &=\{\mathsf{val}(\texttti{s})\} \cup \{\mathsf{val}(\texttti{s}_{\phi_i}),  \mathsf{val}(\texttti{s}_{\psi_i}) \mid i=1,\dots,k\},\\
        \mathcal{T}_2 & = \{\mathsf{phiT}(\phi_i), \mathsf{phiT}(\psi_i) \mid i =1,\dots,k\}.
    \end{align*}
    Furthermore, $H=\emptyset, \mu = \{x\rightarrow \mathcal{I}\}$, and the query is $Q(x)=T(x)$.

    Notice that there are $2k+1$ targets in $\mathcal{T}_1$ while there are only $2k$ targets in $\mathcal{T}_2$.
    Intuitively, the first $2k+1$ only asks the repair to remain true to the semantics of propositional logic, and thus, there are repairs that validate those targets.
    Moreover, if one target of $\mathcal{T}_1$ is validated by a repair, it may as well validate all targets of $\mathcal{T}_1$.
    Therefore, max-repairs must validate all of $\mathcal{T}_1$ and remain true to the semantics of proportional logic.
  
    Furthermore, repairs validating the targets $\mathcal{T}_1$, i.e., max-repairs, can choose the truth assignments for every variable freely, but the remaining truth values (of the literals, clauses, $\dots$) deterministically depend on this choice.
    A max-repair then validates $\mathsf{phiT}$ at $\phi_i$ (resp. $\psi_i$) if the choice of the truth assignment leads to the $\phi_i$ (resp. $\psi_i$) being evaluated to true under this assignment, i.e., the corresponding formula is satisfiable.
    Consequently, a repair maximizing the number of targets validated now must not only remain true to the semantics of propositional logic but must pick truth assignments that maximizes the number of formulas that evaluate to true.
    As the variables of all formulas are pairwise disjoint, every maximal repair sets all satisfiable formulas to true and all unsatisfiable formulas to false.
    Thus, in these repairs, $\mathcal{I}$ is set to true iff there exists a pair $(\phi_i, \psi_i)$ such that $\phi_i$ is unsatisfiable and $\psi_i$ is satisfiable. 
    Therefore, the reduction is correct no matter the semantics used for CQA.

    {\small$\blacktriangleright$} (Membership for \mwdptProjEquEx, \mwdptProjEqu)
    For data complexity, the targets $\mathcal{T}$ are considered fixed. 
    Hence, there are only constantly many subsets $\mathcal{T}'$ of $\mathcal{T}$ and with constantly many \np-oracle calls we can determine the maximal cardinality of a $\mathcal{T}'$ such that there is a repair that validates all targets in $\mathcal{T}'$.
    Then, given $\mathcal{K}=|\mathcal{T}'|$, answering the question whether $\mu$ is an answer of $Q$ of a / all repaired graphs validating $|\mathcal{T}'|$ targets only requires a single call to an \np-oracle.
    To see this, we can simply revisit the \np and \conp membership Full Proofs for \wdptProjEquEx and \wdptProjEqu.
    The only change required is that instead of guessing repairs $R$ together with supported models $I$ such that $\mathcal{T}\subseteq I$, we guess $R, I$ such that $\mathcal{T}'\subseteq I$ for some $\mathcal{T}' \subseteq \mathcal{T}, |\mathcal{T}'|=\mathcal{K},$. 
    
    {\small$\blacktriangleright$} (Membership for \mwdptEquIn)
    Let us first consider the easier case \mbgpProjEquIn.
    For this, we can proceed as before and first compute $\mathcal{K}=|\mathcal{T}'|$.
    Then, we can follow the membership proof of \bgpProjEquIn and insist that repairs need only validate $\mathcal{K}$ target.
    Thus, \mbgpProjEquIn is in \bh.
    
    For the case of \mwdptEquIn, recall the membership proof of \wdptEquIn.
    We can copy this proof for the present case, and as \mbgpProjEquIn is in \bh and $\mathsf{BH} = \mathsf{BH} \land \mathsf{co}$-$\mathsf{BH}$, \mwdptEquIn is also in \bh.
\end{proof}

\begin{theoremAppendix}
\label{thmApp:maxCombined}
\mbgpEquQ is \thetatwo-hard combined complexity for $\mathcal{S}\in \{\exists, \forall, \cap\}$.
Furthermore, \mbgpProjEquEx, \mwdptEqu, and \mbgpEquIn are in \thetatwo combined complexity.
\end{theoremAppendix}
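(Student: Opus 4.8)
The plan is to establish the \thetatwo lower bound for \mbgpEquQ by a single reduction from the \thetatwo-complete problem \textsc{CardMinSat} that works uniformly for $\mathcal{S}\in\{\exists,\forall,\cap\}$, and to obtain the three membership results by a ``first compute the target budget, then reuse the earlier guess-and-check algorithms'' argument. For the hardness direction I would start from the propositional-logic gadget already used in the \bgpEquEx and \bgpCardEx cases of Theorem~\ref{thm:app:first}: given a 3-CNF formula $\phi$ on variables $\texttti{x}_1,\dots,\texttti{x}_n$, build a data graph whose nodes are the literals, the clauses and one node for $\phi$, all placed simultaneously in classes $T$ and $F$, wired by properties $dual, or, and$, together with a linear order $\preceq_{next}$ and a shape name $\mathsf{val}$ that — once validated at the start of the order — forces every node on the order to ``behave correctly'', so that repairs correspond exactly to truth assignments whose derived truth values are propagated correctly. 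On top of this I would impose four tiers of targets $\mathcal{T}=\mathcal{T}_1\cup\mathcal{T}_2\cup\mathcal{T}_3\cup\mathcal{T}_4$: $\mathcal{T}_1$ consists of many copies of ``$\mathsf{val}$ at the start node'', each headed by its own fresh start node carrying an isomorphic copy of the $next$-chain and the $\mathsf{val}$-constraint, so the copies are validated all-or-none and a repair validating all of them always exists; $\mathcal{T}_2$ consists of many copies of a target validated exactly when the $\phi$-node lies in $T$; $\mathcal{T}_3$ contains, for each variable $\texttti{x}_i$, two targets validated exactly when $\texttti{x}_i$ is set to false; $\mathcal{T}_4$ is a single target validated exactly when $\texttti{x}_1$ is set to true. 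I would tune the copy counts so that $|\mathcal{T}_1|>|\mathcal{T}_2|+|\mathcal{T}_3|+|\mathcal{T}_4|$ and $|\mathcal{T}_2|>|\mathcal{T}_3|+|\mathcal{T}_4|$, set $H=\emptyset$, $Q=T(x)$ and $\mu=\{x\mapsto\texttti{x}_1\}$.

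The correctness argument to be carried out is: (i) by the first weight inequality every max-repair must validate all of $\mathcal{T}_1$, hence it simulates propositional logic and is otherwise free to pick any truth assignment; (ii) by the second inequality it also validates $\mathcal{T}_2$ whenever $\phi$ is satisfiable, hence encodes a model of $\phi$; (iii) among those it maximises $2\cdot(\text{number of variables set false})$ plus $1$ exactly when $\texttti{x}_1$ is true, hence it encodes a cardinality-minimal model of $\phi$ and, as a tie-breaker, sets $\texttti{x}_1$ true whenever some cardinality-minimal model does. Therefore $\mu$ is an answer of $Q$ over some max-repaired graph, over all max-repaired graphs, or over their intersection — the three conditions coincide here — iff $\texttti{x}_1$ is true in some cardinality-minimal model of $\phi$, which yields \thetatwo-hardness for $\mathcal{S}\in\{\exists,\forall,\cap\}$ at once. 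This is a genuinely new bound: without the max-relaxation, these are the \np/\conp-complete problems \bgpEquEx, \bgpEqu, \bgpEquIn. Since a repair exists in this construction, the lower bound also transfers downward exactly as discussed in Section~\ref{sect:max}.

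For membership I would, in all three cases, first compute with $O(\log|\mathcal{T}|)$ calls to an \np-oracle (binary search) the largest budget $\mathcal{K}$ such that some $\mathcal{T}'\subseteq\mathcal{T}$ of size $\mathcal{K}$ admits a repair; the oracle question ``is there $\mathcal{T}'\subseteq\mathcal{T}$ with $|\mathcal{T}'|\ge c$ and a repair for $(G,\mathcal{C},\mathcal{T}',H)$?'' is in \np (guess $\mathcal{T}'$, a repair $R=(A,D)$, and a supported model $I\supseteq G_R$ with $\mathcal{T}'\subseteq I$). Once $\mathcal{K}$ is fixed, a max-repair is just a repair validating some size-$\mathcal{K}$ subset of $\mathcal{T}$, and the CQA question is answered by one further \np-oracle call that reuses the algorithms of Sections~\ref{sect:brave} and~\ref{sect:iar} with this modified notion of repair: for \mbgpProjEquEx guess a max-repaired graph together with an instantiation of the bound variables (as for \bgpProjEquEx); for \mwdptEqu consider the co-problem and guess a max-repaired graph together with an OPT-part $P_i$ and an extension witnessing that $\mu$ is sent into $P_i$, or a witness that $\mu$ fails the mandatory part (as for \wdptEqu); for \mbgpEquIn consider the co-problem and guess an atom $\mu(\psi_i)$ of the BGP together with a max-repaired graph omitting it (an atom is in the intersection of all max-repaired graphs iff no max-repaired graph omits it). In each case the whole procedure makes $O(\log|\mathcal{T}|)+O(1)$ \np-oracle calls, i.e.\ it lies in \thetatwo.

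I expect the membership part to be essentially routine once $\mathcal{K}$ is available; the main obstacle is engineering the hardness construction — in particular realising the strict weight hierarchy $|\mathcal{T}_1|>|\mathcal{T}_2|>|\mathcal{T}_3|+|\mathcal{T}_4|$ with only polynomially many target shape atoms, while keeping $\mathcal{T}_1$ all-or-none and always jointly achievable, and making ``setting a variable true'' resp.\ ``leaving $\phi$ false'' cost exactly the intended number of validated targets. I would handle this with the copy-node technique of the \bgpCardEx proof in Theorem~\ref{thm:app:first}: each copied target gets its own start node and its own isomorphic copy of the $next$-chain and the $\mathsf{val}$-constraint (so all copies stand or fall together and can always be satisfied jointly), and the ``cost'' of a wrong assignment is encoded by extra copy nodes whose forced deletion from $F$ (resp.\ $T$) makes the count of validated $\mathcal{T}_3$- and $\mathcal{T}_2$-targets track precisely the number of false variables and the satisfaction status of $\phi$.
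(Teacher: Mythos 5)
Your proposal is correct and follows essentially the same route as the paper: the hardness part is the paper's reduction from \textsc{CardMinSat} with the same four-tier target set $\mathcal{T}_1\cup\mathcal{T}_2\cup\mathcal{T}_3\cup\mathcal{T}_4$, the same weight inequalities enforced via duplicated targets, and the same all-or-none/always-achievable property of $\mathcal{T}_1$; the membership part is the paper's argument verbatim (binary search for the maximal target budget $\mathcal{K}$ with logarithmically many \np-oracle calls, then one further call reusing the Section~\ref{sect:brave}/\ref{sect:iar} algorithms restricted to repairs validating $\mathcal{K}$ targets). The only cosmetic difference is that the paper's query is $T(x)\land T(y)$ with $\mu$ mapping to both the $\phi$-node and $\texttti{x}_1$, whereas your single-atom query $T(x)$ with $\mu=\{x\mapsto\texttti{x}_1\}$ also works because, when $\phi$ is unsatisfiable, the $\mathcal{T}_3$/$\mathcal{T}_4$ weighting already forces $\texttti{x}_1$ to false in every max-repair.
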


\begin{proof}
    {\small$\blacktriangleright$} (Hardness of \mbgpEquQ)
    We reduce from the \thetatwo-complete problem \textsc{CardMinSat} \cite{DBLP:journals/lmcs/CreignouPW18}.
    Recall, a propositional formula $\phi(\texttti{x}_1,\dots, \texttti{x}_n)$ in 3-CNF is a yes-instance if the variable $\texttti{x}_1$ is true in some model minimizing the number of variables set to true.
    
    For the encoding, the nodes of the data graph $G$ consist of:
    \begin{itemize}
        \item All literals $\texttti{l}$.
        \item An additional copy $\texttti{x}'_i$ for each variable $\texttti{x}_i$.
        \item The clauses $\texttti{C}$ of $\phi$.
        \item The formula $\phi$ itself.
        \item $2n+1$ copies $\phi_1, \dots, \phi_{2n+1}$ of $\phi$.
        \item Auxiliary nodes $\texttti{s}, \texttti{s}_{1}, \dots \texttti{s}_{4n+3}, \texttti{e}$.
    \end{itemize} 
    Literals $\texttti{l}$ are added to the class $Lit$, clauses $\texttti{C}$ to $Cl$, the formula $\phi$ to $Phi$, copies $\texttti{x}'_i$ of variables to $CVar$, copies $\phi_i$ of $\phi$ to $CPhi$, and the auxiliary elements $\texttti{s}, \texttti{s}_{i}$ to $Start$.

    We connect the nodes as they are connected logically.
    \begin{itemize}
        \item Literals $\texttti{l}$ and there dual $ \texttti{l}^d$ are connected via $dual(\texttti{l}, \texttti{l}^d)$
        \item Literals are connected to clauses $\texttti{C}$ they appear in via $or(\texttti{l}, \texttti{C}$.
        \item Each clause $\texttti{C}$ is connected to $\phi$ via $and(\texttti{C}, \phi$.
        \item Copies $\texttti{x}'_i$ of variables and copies $\phi_i$ of $\phi$ are connected to their mirrors via $copy(\texttti{x}'_i, \texttti{x}_i)$ and $copy(\phi_i, \phi)$.
    \end{itemize}
    Each literal $\texttti{l}$, clause $\texttti{C}$, formula $\phi$, and copies $\texttti{x}'_i, \phi_i$ are added to both the class $T$ and $F$, indicating that they can be either \textit{true} or \textit{false}.
    All nodes of $G$ except the $4n+3$ nodes $\texttti{s}_{i}$ are put into a single arbitrary linear order, where $next$ indicates immediate predecessors and successors, and which starts in $\texttti{s}$ and ends in $\texttti{e}$.
    The auxiliary nodes $\texttti{s}_{i}$ are added as predecessors of $\texttti{s}$, i.e., $next(\texttti{s}_{i}, \texttti{s})$.

        The set of constraints $\mathcal{C}$ then consists of:
    \begin{align*}
        \mathsf{lit} \leftrightarrow {} & Lit \land ((T \land \lnot F \land \exists dual.F) \\
         & \quad {} \lor (F  \land \lnot T \land \exists dual.T))  \\
         \mathsf{cVar} \leftrightarrow {} & CVar \land ((T \land \lnot F \land \exists copy.T) \\
         & \quad {} \lor (F  \land \lnot T \land \exists copy.F))  \\
        \mathsf{cl} \leftrightarrow {} & Cl \land {=_3}or^- \land {=_1}and \land (\\
        & (F \land \lnot T \land \forall or^-.F) \\
        & \lor (T \land \lnot F \land \exists or^-.T) ) \\
        \mathsf{phi} \leftrightarrow {} & Phi \land ((\forall and^-.T \land T \land \lnot F) \\
        & \quad {}\lor (\exists and^-.F \land F \land \lnot T))\\
        \mathsf{cPhi} \leftrightarrow {} & CPhi \land ((T \land \lnot F \land \exists copy.T) \\
         & \quad {} \lor (F  \land \lnot T \land \exists copy.F))  \\
        \mathsf{phiT} \leftrightarrow {} & (Phi \lor CPhi) \land T  \\
        \mathsf{litT} \leftrightarrow {} & (Lit \lor CVar) \land T  \\
        \mathsf{litF} \leftrightarrow {} & (Lit \lor CVar) \land F  \\
        \mathsf{val} \leftrightarrow {} & \forall next^*.(\texttti{e} \lor Start \lor \mathsf{phi}\lor \mathsf{cPhi} \\
        & \quad \quad \quad {}\lor \mathsf{lit}\lor \mathsf{cVar} \lor \mathsf{cl}) \\
        & \land \exists next^*.\texttti{e}
    \end{align*}
    
    The targets are $\mathcal{T}=\mathcal{T}_1 \cup \mathcal{T}_2 \cup \mathcal{T}_3\cup \mathcal{T}_4$ with 
    \begin{align*}
        \mathcal{T}_1 & = \{\mathsf{val}(\texttti{s})\} \cup \{\mathsf{val}(\texttti{s}_{i}) \mid i=1,\dots, 4n+3\}\\
        \mathcal{T}_2 &= \{\mathsf{phiT}(\phi)\} \cup \{\mathsf{phiT}(\phi_i) \mid i=1,\dots,2n+1\} \\
        \mathcal{T}_3 &= \{\mathsf{litF}(\texttti{x}_i), \mathsf{litF}(\texttti{x}'_i) \mid i=1,\dots, n\} \\
        \mathcal{T}_4 &= \{\mathsf{litT}(\texttti{x}_1)\}
    \end{align*}
    Furthermore, $H=\emptyset, \mu = \{x\rightarrow \phi, y\rightarrow \texttti{x}_1\}$, and the query is $Q(x,y)=T(x)\land T(y)$.

    Notice that there are $4n+4$ targets in $\mathcal{T}_1$ and only $4n+3$ other targets.
    Again, these targets ask the repair to remain true to the semantics of propositional logic, and thus, there are repairs that validate those targets.
    Moreover, if one target of $\mathcal{T}_1$ is validated by a repair, it may as well validate all targets of $\mathcal{T}_1$.
    Therefore, max-repairs must validate all of $\mathcal{T}_1$ and remain true to the semantics of proportional logic.
    This means that repairs validating the targets $\mathcal{T}_1$ can choose the truth assignments for every variable freely, but the remaining truth values (of the literals, copies, $\dots$) deterministically depend on this choice.

    Next, let us consider $\mathcal{T}_2$.
    Notice that there are $2n+2$ targets in $\mathcal{T}_3$ and only $2n+1$ other targets (in $\mathcal{T}_3,\mathcal{T}_4$).
    These ask for $\phi$ to evaluate to true.
    Thus, by arguing similar as for the targets $\mathcal{T}_1$, if $\phi$ is satisfiable, all max-repairs must validate $\mathcal{T}_2$ and, consequently, represent models of $\phi$.

    We proceed with $\mathcal{T}_3$.
    Targets $\mathsf{litF}(\texttti{x}_i), \mathsf{litF}(\texttti{x}'_i)$ ask for $\texttti{x}_i$ to be set to false.
    Thus, if $\phi$ is satisfiable, max-repairs for the targets $\mathcal{T}_1 \cup \mathcal{T}_2 \cup \mathcal{T}_3$ represent the minimal models of $\phi$.

    Lastly, consider the remaining target $\mathcal{T}_4 = \{\mathsf{litT}(\texttti{x}_1)\}$.
    This target asks for $\texttti{x}_1$ to be set to true.
    However, this is a single target and thus, models of $\phi$ that set $\texttti{x}_1$ to true are only slightly prioritized over models that set $\texttti{x}_1$ to false. 
    
    The targets for the shape name $\mathsf{phiT}$ are then validated if the choice of the truth assignment leads to the formula $\phi$ being evaluated to true under this assignment, $\mathsf{litF}$ if the corresponding variable is set to false, and $\mathsf{litT}$ is $\texttti{x}_1$ is set to true.
    Thus, in total, if $\phi$ is satisfiable, max-repairs are minimal models of $\phi$, and, if there is a minimal model with $\texttti{x}_1$ being set to true, max-repairs are exactly those minimal models where $\texttti{x}_1$ is set to true.
    Consequently, this proves the correctness of reduction no matter the semantics used for CQA.

    {\small$\blacktriangleright$} (Membership for \mbgpProjEquEx, \mwdptEqu, and  \mbgpEquIn)
    We only need to slightly modify the decision procedure and its complexity analysis from the proof of the previous theorem.
    First, in combined complexity, the set $\mathcal{T}$ is no longer considered as fixed. 
    Hence, we now determine the maximum cardinality $\mathcal{K}$ of subsets $\mathcal{T}'$ of $\mathcal{T}$ such that there exists a repair of $G$ w.r.t., $((\mathcal{C},\mathcal{T}'), H)$.
    That is, by asking logarithmically many \np-questions of the form ``does there exists a subset $\mathcal{T}' \subseteq \mathcal{T}$ with $|\mathcal{T}'| \geq c$ such that there exists a repair of $G$ w.r.t. $((\mathcal{C},\mathcal{T}'), H)$?''
    
    After that, we only need one more oracle call to decide if $\mu$ is an answer to query $Q$ over some / all / the intersection of all repaired graphs of $G$ w.r.t.\ $((\mathcal{C},\mathcal{T}'), H)$, where $\mathcal{T}'\subseteq \mathcal{T}, |\mathcal{T}'| = \mathcal{K}$.
    For \mbgpProjEquEx, a NTW answering that oracle call has to guess such a repaired graph together with an instantiation of the quantified variables for the yes-instances.
    For \mwdptEqu, the NTW has to guess a repaired graph together with an instantiation of the variables in the OPT-part for the no-instances.
    For \mbgpEquIn, the NTW has to simply guess a repaired graph for the no-instances.
\end{proof}

\section{Reference Problems}
\label{sec:reference}

\problemdef{\textsc{CardMin-Precoloring}}
{A graph $G = (V,E)$ with $v_1\in V$, a precoloring $c\colon V\rightarrow 2^{\{\texttti{r},\texttti{g},\texttti{b}\}}$}
{Does there exist a 3-coloring of $G$ adhering to $c$ that colors $v_1$ the color $\texttti{g}$ while at the same time minimizes the use of $\texttti{g}$}
\begin{lemma}
\label{lem:minPre}
    \textsc{CardMin-Precoloring} is \thetatwo-hard for 3-colorable graphs with degree bounded by 5.
\end{lemma}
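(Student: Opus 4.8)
The plan is to reduce from \textsc{CardMinSat} \cite{DBLP:journals/lmcs/CreignouPW18}, which is \thetatwo-complete and asks, for a CNF formula $\phi$ over $x_1,\dots,x_n$, whether $x_1$ is true in some model of $\phi$ of minimum weight (the weight of a model being the number of variables it sets to true). The first step is to make the formula \emph{always satisfiable} without changing the answer: introduce fresh variables $y_1,\dots,y_{n+1}$ and let $\phi^+$ be the conjunction of all clauses $C\vee y_i$ for $C$ a clause of $\phi$ and $i\le n+1$. Setting all $y_i$ true satisfies $\phi^+$; and since a minimum-weight model of $\phi$ (if one exists) has weight $\le n<n+1$, one checks that $x_1$ lies in a minimum-weight model of $\phi^+$ iff $\phi$ is satisfiable and $x_1$ lies in a minimum-weight model of $\phi$. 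So from now on the input formula may be assumed satisfiable -- this is exactly what will guarantee that the graph we build is $3$-colourable, as required by the lemma. (The clauses $C\vee y_i$ have width four, which will be handled directly by the clause gadget; no $3$-CNF re-normalisation is needed, so the weights are not disturbed.)

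Next I would construct the precoloured graph using colours $\{\texttti{r},\texttti{g},\texttti{b}\}$, with $\texttti{g}$ in the role of ``true/expensive'', and \emph{without} any base triangle: all colour roles are imposed directly through the precolouring lists $c$, so there is no high-degree anchor vertex. For each variable $x_i$ there is a value vertex $\texttti{v}_i$ with $c(\texttti{v}_i)=\{\texttti{r},\texttti{g}\}$, read as $\texttti{v}_i\mapsto\texttti{g}$ iff $x_i$ is true, so that a model of weight $w$ forces exactly $w$ value vertices green; $\texttti{v}_1$ will be the distinguished vertex and $\texttti{g}$ the target colour. Literal occurrences are represented by vertices obtained from the appropriate $\texttti{v}_i$ (a copy for a positive occurrence, a colour-complement -- a single edge -- for a negative one). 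To keep the maximum degree at $5$, every vertex that would otherwise have large fan-out (the $\texttti{v}_i$'s, and the literal vertices feeding many clauses) is expanded into a shallow balanced tree of relay vertices $u\!-\!z\!-\!w$ where $z$ has the same two-element list as $u,w$, forcing $w$ to copy $u$; each relay is of degree $2$. Each clause of $\phi^+$ receives a clause gadget that, fed by its literal vertices, is properly colourable over $\{\texttti{r},\texttti{g},\texttti{b}\}$ exactly when the clause is satisfied; it must be designed so that, for satisfied clauses, the number of its internal vertices \emph{forced} to be green is a predictable (ideally constant) function of the satisfying literal pattern, and so that every vertex in it has degree $\le 5$. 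Finally, since every auxiliary vertex of every gadget is colour-functionally determined by the truth assignment $\alpha$, the total number of green vertices in the (essentially unique) colouring extending a model $\alpha$ is a fixed integer function of $\alpha$; I would attach to each variable a bounded number of ``true-copy'' and ``false-copy'' dummy leaves (and, if the clause gadget's green contribution is non-constant, analogous per-clause dummies) so that this number becomes exactly $W+|\{\,i:x_i\text{ true}\,\}|$ for a global constant $W$.

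For correctness: the graph $G$ is $3$-colourable w.r.t.\ $c$ because $\phi^+$ is satisfiable; it has maximum degree $\le 5$ and is constructed in polynomial time. Every proper colouring of $G$ induces a model of $\phi^+$ (a clause gadget is colourable only for a satisfied clause), the induced colouring of all gadgets being determined by $\alpha$, and conversely every model $\alpha$ of $\phi^+$ extends to such a colouring, whose number of green vertices equals $W+\mathrm{weight}(\alpha)$. Hence the colourings of $G$ that use $\texttti{g}$ a minimum number of times correspond exactly to the minimum-weight models of $\phi^+$, and in such a colouring $\texttti{v}_1$ is green iff $x_1$ is true in that model. Therefore $(G,c)$ with distinguished vertex $\texttti{v}_1$ is a yes-instance of \textsc{CardMin-Precoloring} iff $x_1$ is true in some minimum-weight model of $\phi^+$, iff $(\phi,x_1)$ is a yes-instance of \textsc{CardMinSat}.

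The conceptual content is modest -- the $\phi^+$ trick plus ``green counts true variables'' -- and the real work, which I expect to be the main obstacle, is the gadget engineering together with the green-count bookkeeping. Concretely one must (i) route \emph{all} fan-out through relay trees and then verify that no vertex at all (value vertex, literal vertex, relay, or clause-gadget vertex) ever exceeds degree $5$, which pins down the exact shape of the clause gadget and the branching factor of the trees; and (ii) compute, gadget by gadget, exactly how many vertices are \emph{forced} green as a function of the underlying assignment -- including the relays of the $\{\texttti{r},\texttti{g}\}$-trees and the interiors of the clause gadgets -- so that the per-variable (and per-clause) padding really does reduce the global count to $W+\mathrm{weight}(\alpha)$. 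The delicate point in (ii) is the clause gadget: it must simultaneously be always-colourable-iff-satisfied \emph{and} contribute a predictable number of forced greens on satisfied clauses, since the naive ``detector'' vertices either over-constrain the colouring or leave uncontrolled green choices.
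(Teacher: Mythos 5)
Your high-level plan coincides with the paper's: reduce from \textsc{CardMinSat}, preprocess the formula so that it is always satisfiable without changing the answer (your $C\vee y_i$ trick is a fine, even slightly cleaner, alternative to the paper's disjunct $\lnot x_1\land x_2\land\cdots\land x_{n+1}$), encode via 3-colorability with $\texttti{g}$ playing the role of ``true'', restrict variable vertices to $\{\texttti{r},\texttti{g}\}$ by the precoloring, and then normalize the green count and the degree. However, the step you yourself flag as ``the main obstacle'' --- designing clause gadgets whose internal green count is a predictable function of the assignment, and doing the per-gadget green bookkeeping --- is genuinely not done, and your proposed route to it is shaky: in the standard 3-colorability reduction the internal coloring of an OR-gadget is \emph{not} functionally determined by the truth assignment (a satisfied clause gadget admits several proper colorings with different numbers of green vertices), so the premise ``every auxiliary vertex is colour-functionally determined by $\alpha$'' on which your padding scheme rests is false as stated, and you offer no gadget that repairs it.

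The missing idea, which is how the paper closes exactly this gap, is to make the green count of every non-variable vertex \emph{invariant over all colorings} rather than trying to compute it: attach to each non-variable vertex $v$ a triangle $v,v',v''$ with $c(v')=c(v'')=\{\texttti{r},\texttti{g},\texttti{b}\}$. Since a triangle uses all three colors, exactly one of $v,v',v''$ is green in \emph{every} proper coloring, no matter how the gadget internals are resolved; hence the total number of green vertices equals a fixed constant plus the number of variable vertices colored $\texttti{g}$, with no case analysis of clause gadgets needed. The same trick handles the degree bound: a high-degree vertex $v$ is split via two triangles $v,v',v''$ and $v',v'',v'''$ with $c(v''')=c(v)$, which forces $col(v''')=col(v)$, distributes the neighbors between $v$ and $v'''$, and again contributes exactly one green among the three new vertices in every coloring. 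Without this (or an equivalent) invariance argument, your reduction does not establish that minimizing the use of $\texttti{g}$ in the graph corresponds to minimizing the weight of the model, so as written the proof is incomplete at its decisive step.
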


\begin{proof}[Proof sketch]
    We reduce from the \thetatwo-complete problem \textsc{CardMinSat} \cite{DBLP:journals/lmcs/CreignouPW18}.
    An instance is a propositional formula $\phi(x_1,\dots, x_n) = \bigwedge_iC_i = \bigwedge_i (l_{1i} \lor l_{2i} \lor l_{3i})$ in 3-CNF including the variable $x_1$.
    $\phi$ is a yes-instance if $x_1$ is true in some model minimizing the variables set to true.
    Now let us define $\psi = \phi \lor (\lnot x_1 \land x_2 \land \cdots \land x_{n+1})$.
    Then, $\psi$ has a minimal model setting $x_1$ to true if and only if $\phi$ has a minimal model setting $x_1$ to true.
    
    Now, applying the classical reduction from propositional formulas to \textsc{3-Colorability} to the formula $\psi(x_1,\dots,x_{n+1})$, we can get a graph $G$ such that $x_1,\dots,x_{n+1}, \phi\in V(G)$.
    Furthermore, interpreting $\texttti{g}$ as true and $\texttti{r}$ as false, we can ask for a graph $G$ such that a $col\colon \{x_1,\dots,x_{n+1}, \psi\} \rightarrow \{\texttti{r}, \texttti{g}\}$ is extendable to a coloring of the whole graph if and only if $col(\psi) = \psi(col(x_1,\dots,x_{n+1}))$.
    Thus, let us consider the precoloring $c\colon x_i\mapsto\{\texttti{r}, \texttti{g}\}, \phi \mapsto\{\texttti{g}\}$, and $v\mapsto\{\texttti{r}, \texttti{g}, \texttti{b}\}$ for all other $v\in V(G)$.
    Then, each coloring $col$ adhering to $c$ corresponds to exactly one model of $\psi$, and every model of $\psi$ corresponds to at least one coloring adhering to $c$.
    Notice that $(G,c)$ is 3-colorable.
    
    Two steps are left.
    First, we need to correspond minimal models of $\psi$ to colorings of $(G,c)$ minimizing the use of $\texttti{g}$.
    I.e., the colors of all but the variables $x_i$ have to be irrelevant.
    To achieve this, let us introduce a triangle $v,v',v''$ ($v',v''$ are new vertices) for each $v\in V(G),v\neq x_i$ and define $c(v')=c(v'')=\{\texttti{r}, \texttti{g}, \texttti{b}\}$.
    Thus, now, each coloring $col$ adhering to $c$ that minimizes the use of $\texttti{g}$ corresponds to exactly one minimal model of $\phi$ and every minimal model of $\phi$ corresponds to at least one coloring adhering to $c$ that minimizes the use of $\texttti{g}$.
    
    Second, we need to ensure that the degree of every vertex is at most 5.
    To achieve this, iteratively pick a vertex $v\in V(G)$ with neighbors $v_1,\dots,v_k,k\geq 6$.
    Then, introduce two new triangles $v,v',v''$ and $v',v'',v'''$, and connect $v_1,\dots,v_3$ to $v$ and $v_4,\dots,v_k$ to $v'''$.
    Furthermore, define $c(v')=c(v'')=\{\texttti{r}, \texttti{g}, \texttti{b}\}$ and $c(v''')=c(v)$.
    Thus, the relative ``sizes'' of the colorings remain the same as every coloring $col$ has to color exactly one of the new vertices $v',v'',v'''$ the color $\texttti{g}$.
    Furthermore, this does not change the colorability and $col(v)=col(v''')$.
    This completes the sketch of the reduction.
\end{proof}

\problemdef{\textsc{List-Pair-}$k$\textsc{-QBF}}
{A list $(\phi_1(X_1(\phi_1)\cdots X_k(\phi_1))$, $\psi_1(X_1(\psi_1)\cdots X_k(\psi_1)))$, $\dots$, $(\phi_n(X_1(\phi_n)\cdots X_k(\phi_n))$, $\psi_n(X_1(\psi_n)\cdots X_k(\psi_n)))$ of pairs of propositional formulas in 3-CNF}
{Does there exist a $i$ such that $QX_1(\phi_i)\cdots \forall X_{k-1}(\phi_i)\exists X_k(\phi_i) \phi_i$ and $\lnot QX_1(\psi_i)\cdots \forall X_{k-1}(\psi_i)\exists X_k(\psi_i) \psi_i$ are both valid, where $Q=\forall$ if $k$ is even and $Q=\exists$ otherwise}
We refer to \textsc{List-Pair-}$1$\textsc{-QBF} also by \textsc{List-Pair-SAT} (in the proof of Theorem \ref{thmApp:wdptProj} we swap the roles of $\phi$ and $\psi$ for the problem \textsc{List-Pair-SAT} as the reduction is then more natural).
\begin{lemma}
\label{lem:listPair}
    \textsc{List-Pair-}$k$\textsc{-QBF} is \thetak-c.
\end{lemma}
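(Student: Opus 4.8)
The plan is to show both directions of the claimed $\thetak$-completeness: membership via a single parallel round of oracle calls, and hardness via a ``staircase'' reduction from a parity problem. Throughout, write $\mathcal{Q}_k$ for the prescribed quantifier shape $QX_1\cdots\forall X_{k-1}\exists X_k$ with a 3-CNF matrix, and recall that validity of a $\mathcal{Q}_k$-QBF is \sigmak-complete when $k$ is odd and \pik-complete when $k$ is even.

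\emph{Membership.} Given an instance with pairs $(\phi_1,\psi_1),\dots,(\phi_n,\psi_n)$, I would ask $2n$ \emph{non-adaptive} queries to a \sigmak-oracle (for $k$ even, a \pik-oracle): for each $i$ the questions ``is the $\mathcal{Q}_k$-QBF of $\phi_i$ valid?'' and ``is the $\mathcal{Q}_k$-QBF of $\psi_i$ valid?''. Then in polynomial time answer ``yes'' iff for some $i$ the first answer is positive and the second negative. This is a $\mathsf{P}^{\sigmak}_{\|}$ computation, and by the classical equivalence $\mathsf{P}^{\sigmak}_{\|}=\mathsf{P}^{\sigmak[O(\log n)]}=\thetak$ (the bounded-query characterizations underlying results such as those in~\cite{DBLP:journals/lmcs/CreignouPW18}), membership in \thetak follows. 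Nothing here is subtle.

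\emph{Hardness.} For the lower bound I would reduce from the parity problem $\oplus$-$\mathcal{Q}_k$-\textsc{Qbf}: given $\mathcal{Q}_k$-QBFs $\Phi_1,\dots,\Phi_m$, decide whether $|\{i:\Phi_i\text{ valid}\}|$ is odd. This is \thetak-complete (membership as above with a final parity test; hardness is the standard Wagner-style argument that parity over a parallel round of \sigmak-queries is \thetak-hard). The heart of the reduction is to convert the list into a \emph{monotone staircase}: for each $t\in\{1,\dots,m\}$ I would construct, in polynomial time, a $\mathcal{Q}_k$-QBF $G_t$ equivalent to ``at least $t$ of the $\Phi_i$ are valid'', and put $G_{m+1}:=$ a fixed non-valid $\mathcal{Q}_k$-QBF. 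For $k$ odd this statement reads ``$\exists S\subseteq[m]$, $|S|=t$, such that $\Phi_i$ is valid for all $i\in S$'', which is a \sigmak-predicate: \sigmak is closed under polynomially bounded conjunction, and the leading $\exists S$ (plus the cardinality constraint, a polynomial-size CNF) merges into the first existential block of the QBF obtained by pooling disjoint renamed copies of the blocks of the $\Phi_i$. For $k$ even I would instead write ``at least $t$ valid'' as the negation of ``$\exists S\subseteq[m]$, $|S|=m-t+1$, such that $\Phi_i$ is invalid for all $i\in S$'', a \pik-predicate by the dual closure argument. By construction $G_1,\dots,G_{m+1}$ genuinely form a staircase ($G_t$ valid $\Rightarrow G_{t-1}$ valid) and $\max\{t:G_t\text{ valid}\}=|\{i:\Phi_i\text{ valid}\}|$. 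The reduction then outputs the \textsc{List-Pair-}$k$\textsc{-QBF} instance consisting of the pairs $(G_t,G_{t+1})$ for the odd $t\in\{1,\dots,m\}$ (the first component plays the role of the ``valid'' formula, the second of the ``invalid'' one, matching the polarity of the statement up to the harmless swap of $\phi,\psi$ noted after the lemma). Correctness: this is a yes-instance iff there is an odd $t$ with $G_t$ valid and $G_{t+1}$ invalid, which by the staircase property holds iff $\max\{t:G_t\text{ valid}\}$ is odd, i.e.\ iff $|\{i:\Phi_i\text{ valid}\}|$ is odd. All constructions are polynomial time, so \thetak-hardness follows, and with membership this gives \thetak-completeness; the case $k=1$ (\textsc{List-Pair-Sat}) is the same argument specialised to a reduction from $\oplus$-\textsc{Sat}.

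\emph{Main obstacle.} The delicate part is the staircase construction: keeping every intermediate formula within the prescribed shape $\mathcal{Q}_k$ forces the dual treatment of odd and even $k$ (closure of \sigmak under polynomial conjunction versus closure of \pik, obtained by negating a polynomial conjunction of complementary predicates), plus careful bookkeeping when pooling the disjoint variable copies of the $\Phi_i$ into single quantifier blocks and Tseitin-normalising the resulting matrix back to 3-CNF. By contrast, membership and the final correctness argument are routine once the staircase is in place.
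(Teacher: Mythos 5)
Your proof is correct, but it takes a genuinely different route to hardness than the paper. The paper reduces from \textsc{LogLexMax-}$k$\textsc{-QBF} (a $k$-QBF with at most logarithmically many free variables, asking whether the last free variable is true in the lexicographically maximal model): it explicitly enumerates all $2^{m-1}\leq|\Phi|$ prefixes $B$ of the free variables and pairs the two threshold formulas ``some model $\geq(B,1)$'' and ``some model $>(B,1)$'', so the required (valid, invalid) pattern appears exactly at the lex-maximal model. You instead start from the parity problem $\oplus$-$\mathcal{Q}_k$-\textsc{Qbf} and build the monotone staircase $G_t=$ ``at least $t$ of the $\Phi_i$ are valid'', pairing consecutive odd thresholds. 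Both arguments rest on the same technical core --- pooling disjoint quantifier blocks to keep every constructed formula inside the prescribed prenex $\mathcal{Q}_k$ shape, with the dual treatment of odd and even $k$ --- and both are Wagner-style. What differs is the bookkeeping: the paper's reduction needs the log-bounded free-variable version of QBF as its source problem but avoids any cardinality machinery (each pair is a local perturbation of one formula), whereas yours works from the more standard parity problem at the price of encoding the cardinality constraint $|S|=t$ and Tseitin-renormalising the matrix. A further point in your favour is that you spell out \thetak-membership (via $\mathsf{P}^{\sigmak}_{\|}=\thetak$), which the paper's sketch leaves implicit even though the lemma claims completeness. One small caveat: the \thetak-hardness of your source problem $\oplus$-$\mathcal{Q}_k$-\textsc{Qbf} itself needs the level-$k$ generalisation of Wagner's criterion; this is standard, but since you lean on it you should cite it rather than treat it as free.
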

\begin{proof}[Proof sketch]
    We reduce from the canonical \thetak-complete problem \textsc{LogLexMax-}$k$\textsc{-QBF}.
    An instance is a quantified propositional formula $\Phi = \exists X_1\forall X_2\dots QX_k \phi(Y,X_1\cdots X_k)$ with free variables $Y=y_1,\dots,y_{m}$ with $m\leq \log|\Phi|$.
    This is a yes-instance if $y_m$ is true in the lexicographically (the order is given by the bit vectors $(\alpha(y_1), \dots,\alpha(y_m))$) maximal model of $\Phi(Y)$.
    Let us first consider the case where $k$ is odd.
    
    Let us define for each bitvector $B = (b_1,\dots,b_{m-1})$)
    \begin{align*}
    \Phi_{B} =&  \exists Y,X_1 \forall X_2\dots \exists X_k \bigvee_{B'\geq (B, 1)} Y=B \land \phi, \\
    \Psi_{B} =&  \exists Y,X_1 \forall X_2\dots \exists X_k \bigvee_{B' > (B, 1)} Y=B \land\phi. 
    \end{align*}   
    Preserving validity, these can clearly be transformed into formulas $\exists X_1(\phi_B)\cdots \forall X_{k-1}(\phi_B)\exists X_k(\phi_B) \phi_B$ and $\exists X_1(\psi_B)\cdots \forall X_{k-1}(\psi_B)\exists X_k(\psi_B) \psi_B,$ respectively, where $\phi_B$ and $\psi_B$ are in 3-CNF.
    (In fact, $X_i=X_i(\psi_B) = X_i(\phi_B)$ for $i\neq 1,k$.)
    Thus, we reduce $\Phi$ to the list $(\phi_B(X_1(\phi_B)\cdots X_k(\phi_B)), \psi_B(X_1(\psi_B)\cdots X_k(\psi_B)))_B$ of length $2^{m-1}\leq |\Phi|$.
    Note that $y_m$ is true in the lexicographically maximal model of $\Phi$ if and only if there exists a $B$ such that both $\Phi_B$ and $\lnot \Psi_B$ are valid, thus, entailing the correctness of the reduction.

    In the case where $k$ is odd, we proceed as follows.
    For each bitvector $B = (b_1,\dots,b_{m-1})$), we again define
    \begin{align*}
    \Phi_{B} =&  \exists Y,X_1 \forall X_2\dots \forall X_k \bigvee_{B'\geq (B, 1)} Y=B \land \phi, \\
    \Psi_{B} =&  \exists Y,X_1 \forall X_2\dots \forall X_k \bigvee_{B' > (B, 1)} Y=B \land\phi. 
    \end{align*}   
    Preserving validity, these can clearly be transformed into formulas $\lnot \forall X_1(\phi_B)\cdots \forall X_{k-1}(\phi_B)\exists X_k(\phi_B) \phi_B$ and $\lnot \forall X_1(\psi_B)\cdots \forall X_{k-1}(\psi_B)\exists X_k(\psi_B) \psi_B,$ respectively, where $\phi_B$ and $\psi_B$ are in 3-CNF.
    Thus, we reduce $\Phi$ to the list $(\psi_B(X_1(\psi_B)\cdots X_k(\psi_B)), \phi_B(X_1(\phi_B)\cdots X_k(\phi_B)))_B$ of length $2^{m-1}\leq |\Phi|$ (the positions of $\phi_B$ and $\psi_B$ have changed).
    Note that $y_m$ is true in the lexicographically maximal model of $\Phi$ if and only if there exists a $B$ such that both $\lnot \Psi_B$ and $\Phi_B$ are valid, thus, entailing the correctness of the reduction.
\end{proof}

\end{document}